
\typeout{IJCAI--22 Multiple authors example}


\documentclass{article}
\pdfpagewidth=8.5in
\pdfpageheight=11in
\usepackage{ijcai22}

\usepackage{times}

\usepackage{soul}
\usepackage{url}
\usepackage[hidelinks]{hyperref}
\usepackage[utf8]{inputenc}
\usepackage[small]{caption}
\usepackage{graphicx}
\usepackage{amsmath}
\usepackage{booktabs}
\urlstyle{same}
\usepackage{tikz}
\usepackage{multirow}

\usepackage{latexsym}
\usepackage{amsfonts}
\usepackage{amssymb}

\usepackage{amsthm}
\newtheorem{theorem}{Theorem}

\newtheorem{example}[theorem]{Example}
\newtheorem{proposition}[theorem]{Proposition}
\newtheorem{lemma}[theorem]{Lemma}

\usepackage{tikz}
\usetikzlibrary{automata}
\usetikzlibrary{positioning,arrows,fit,calc}
\usetikzlibrary{decorations.pathmorphing}
\usetikzlibrary{shapes}
\usetikzlibrary{backgrounds}
\usepackage{multirow}
\usepackage{booktabs}
\usepackage{xcolor,colortbl}
\usepackage{setspace}
\usetikzlibrary{decorations.pathreplacing}
\usetikzlibrary{matrix}

\marginparwidth=15mm

\usepackage{thmtools,thm-restate}

\usepackage{pifont}

\newcommand{\avec}[1]{\boldsymbol{#1}}

\newcommand{\LTL}{\textsl{LTL}}

\newcommand{\U}{\mathbin{\mathsf{U}}}
\newcommand{\Us}{\U_{s}}

\newcommand{\QBE}{\mathsf{QBE}}
\newcommand{\QBEbp}{\QBE^{\mathsf{b}\text{+}}}
\newcommand{\QBEbm}{\QBE_{\mathsf{b}\!\text{--}}}
\newcommand{\QBEba}{\QBE^{\mathsf{b}\text{+}}_{\mathsf{b}\text{-}}}

\newcommand{\unt}{\mathbin{\nabla}}
\newcommand{\sub}{\mathit{sub}}

\newcommand{\D}{\mathcal{D}}
\newcommand{\nxt}{{\ensuremath\raisebox{0.25ex}{\text{\scriptsize$\bigcirc$}}}}
\newcommand{\Rnext}{\nxt_{\!\scriptscriptstyle F}}

\newcommand{\Xallop}{^{\smash{\Box\raisebox{1pt}{$\scriptscriptstyle\bigcirc$}}}}
\newcommand{\Xbd}{^{\smash{\hspace*{-2pt}\Box\Diamond}}}

\newcommand{\horn}{\textit{horn}}

\newcommand{\op}{{\boldsymbol{o}}}

\newcommand{\Abox}{\mathcal D}

\newcommand{\q}{\avec{q}}
\newcommand{\TO}{\mathcal O}

\newcommand{\sig}{\mathsf{sig}}

\newcommand{\Qp}{\mathcal{Q}_p}
\newcommand{\dep}{\mathit{tdp}}

\newcommand{\tp}{\textit{tp}}

\newcommand{\leg}{\mathfrak L}
\newcommand{\lt}{\mathfrak t}

\usepackage[mathscr]{euscript}
 \let\mathscr\relax
\usepackage[scr]{rsfso}

\tikzset{
  basic box/.style = {
    shape = rectangle,
    align = center,
    draw  = #1,
    rounded corners},
  header node/.style = {
    font          = \strut\Large\ttfamily,
    text depth    = +0pt,
    fill          = white,
    draw},
  header/.style = {%
    inner ysep = +1.5em,
    append after command = {
      \pgfextra{\let\TikZlastnode\tikzlastnode}
      node [header node] (header-\TikZlastnode) at (\TikZlastnode.north) {#1}
    }
  },
  hv/.style = {to path = {-|(\tikztotarget)\tikztonodes}},
  vh/.style = {to path = {|-(\tikztotarget)\tikztonodes}},
  fat blue line/.style = {ultra thick, blue}
}

\newcommand{\slin}[3]
{
{
\scriptsize
\draw[-, very thick] ($({#1}) +(0,0.09)$) -- node[above=0.1] {#2}  node[below=0.1] {#3} ($({#1}) +(0,-0.09)$);
}
}

\newcommand{\NP}{\textsc{NP}}
\newcommand{\coNP}{\textsc{coNP}}

\newcommand{\PTime}{\textsc{P}}
\newcommand{\NExpTime}{\textsc{NExpTime}}
\newcommand{\ExpSpace}{\textsc{ExpSpace}}
\newcommand{\ExpTime}{\textsc{ExpTime}}
\newcommand{\PSpace}{\textsc{PSpace}}

\newcommand{\I}{{\mathcal{I}}}
\newcommand{\J}{{\mathcal{J}}}

\newcommand{\sm}{\lessdot}

\newcommand{\Type}{{\boldsymbol{T}}}

\newcommand{\C}{\mathcal{C}}
\newcommand{\Q}{\mathcal{Q}}
\newcommand{\M}{\boldsymbol{M}}
\newcommand{\B}{\mathsf{b}}

\newcommand{\Jmc}{\mathcal{J}}

\newcommand{\qa}{q_{\textit{acc}}}

\newcommand{\conf}{\mathfrak c}

\pdfinfo{
/TemplateVersion (IJCAI.2022.0)
}

\title{Reverse Engineering of Temporal Queries Mediated by LTL Ontologies}

\author{
Marie Fortin$^1$ \and
Boris Konev$^2$ \and
Vladislav Ryzhikov$^3$\and
Yury Savateev$^4$\and\\
Frank Wolter$^2$\And
Michael Zakharyaschev$^{3}$
\affiliations
$^1$Universit\'e Paris Cit\'e, CNRS, IRIF, France\\
$^2$Department of Computer Science, University of Liverpool, UK\\
$^3$Department of Computer Science and Information Systems, Birkbeck, University of London, UK\\
$^4$School of Electronics and Computer Science, University of Southampton, UK\\
\emails
mfortin@irif.fr, \{boris.konev,wolter\}@liverpool.ac.uk, \{vlad,michael\}@dcs.bbk.ac.uk, y.savateev@soton.ac.uk
}

\begin{document}


\maketitle


\begin{abstract}
In reverse engineering of database queries, we aim to construct a query from a given set of  answers and non-answers; it can then be used to explore the data further or as an explanation of the answers and non-answers. We investigate this query-by-example problem for queries formulated in positive fragments of linear temporal logic \LTL{} over timestamped data, focusing on the design of suitable query languages and the combined and data complexity of deciding whether there exists a query in the given language that separates the given answers from non-answers. We consider both plain \LTL{} queries and those mediated by \LTL-ontologies.
\end{abstract}


\section{Introduction}\label{intro}

Supporting users of databases by constructing a query from examples of answers and non-answers to the query has been a major research area since the 2000s~\cite{martins2019reverse}. In the database community, research has focussed on standard query languages such as SQL, graph query languages, and SPARQL~\cite{DBLP:conf/sigmod/ZhangEPS13,DBLP:conf/pods/WeissC17,kalashnikov2018fastqre,deutch2019reverse,DBLP:conf/icdt/StaworkoW12,DBLP:conf/icdt/Barcelo017,DBLP:journals/tods/CohenW16,DBLP:conf/www/ArenasDK16}. The KR community has been concerned with constructing queries from examples under the open world semantics and with background knowledge given by an ontology~\cite{GuJuSa-IJCAI18,DBLP:conf/gcai/Ortiz19,DBLP:conf/cikm/CimaCL21,kr2021restr,KR}.
A fundamental problem that has been investigated by both communities is known as \emph{separability} or \emph{query-by-example} (QBE), a term coined by Zloof~[\citeyear{DBLP:journals/ibmsj/Zloof77}]:
\begin{description}
\item[Given:] sets $E^{+}$ and $E^{-}$ of pairs $(\mathcal{D},\avec{d})$ with a database instance $\mathcal{D}$ and a tuple $\avec{d}$ in $\mathcal{D}$, a (possibly empty) ontology $\mathcal{O}$, and a query language $\mathcal{Q}$.

\item[Problem:] decide whether there exists a query $\q\in\mathcal{Q}$ \emph{separating} $(E^{+},E^{-})$ in the sense that $\mathcal{O},\mathcal{D}\models \q(\avec{d})$ for all $(\mathcal{D},\avec{d})\in E^{+}$ and $\mathcal{O},\mathcal{D}\not\models \q(\avec{d})$ for all $(\mathcal{D},\avec{d})\in E^{-}$.
\end{description}
If such a $\q$ exists, then $(E^{+},E^{-})$ is often called  \emph{satisfiable} w.r.t.~$\mathcal{Q}$ under $\TO$, and the construction of $\q$ is called \emph{learning}.

In many applications, the input data is timestamped and queries are naturally formulated in languages with temporal operators. In this paper, we investigate temporal query-by-example by focusing on the basic but very useful case where data $\Abox$ is a set of timestamped atomic  propositions.
Our query languages are positive fragments of \emph{linear temporal logic} \LTL{} with the temporal operators $\Diamond$ (eventually), $\nxt$ (next), and $\U$ (until) interpreted under the strict semantics~\cite{DBLP:books/cu/Demri2016}. To enforce generalisation, we do not admit $\lor$.
Our most expressive query language $\mathcal{Q}[\U]$ is thus defined as the set of formulas constructed from atoms using $\land$ and $\U$ (via which $\nxt$ and $\Diamond$ are expressible); the fragments $\mathcal{Q}[\Diamond]$ and $\mathcal{Q}[\nxt,\Diamond]$ are defined analogously.
Ontologies can be given in full $\LTL$ or its fragments $\LTL\Xbd$ (known as the \emph{Prior logic}~\cite{prior:1956b}), which only uses the operators $\Box$ (always in the future) and $\Diamond$, and the Horn fragment $\LTL_\horn\Xallop$ containing axioms of the form $C_1 \land \dots \land C_k \to C_{k+1}$, where the $C_i$ are atoms possibly prefixed by $\Box$ and $\nxt$ for $i\leq k+1$, and also by $\Diamond$ for $i\leq k$.
Ontology axioms are supposed to hold at all times.
In fact, already this basic `one-dimensional' temporal ontology-mediated querying formalism provides enough expressive power in those real-world situations where the interaction among individuals in the object domain is not important and can be disregarded in data modelling; see~\cite{DBLP:journals/ai/ArtaleKKRWZ21} and also Example~\ref{ex:00} and the references before it.

Within
this temporal setting, we take a broad view of the potential applications of the QBE problem. On the one hand, there are non-expert users who would like to explore data via queries but are not familiar with temporal logic. They usually are, however, capable of providing data examples illustrating the queries they are after. QBE supports such users in the construction of those queries. On the other hand, the positive and negative data examples might come from an application, and the user is interested in possible explanations of the examples. Such an explanation is then provided by a temporal query separating the positive examples from the negative ones. In this case, our goal is similar to recent work on learning \LTL{}  formulas in explainable planning and program synthesis~\cite{lemieux2015general,DBLP:conf/fmcad/NeiderG18,DBLP:conf/aips/CamachoM19,DBLP:conf/icgi/FijalkowL21,DBLP:conf/tacas/RahaRFN22,DBLP:conf/kr/FortinKRSWZ22}.


%
%
%
%
%

\begin{example}\label{ex:00}\em
Imagine an engineer whose task is to explain the behaviour of the monitored equipment (say, why an engine stops) in terms of qualitative sensor data such as `low temperature' ($T$), `strong vibration' ($V$), etc. Suppose the engine stopped after the runs $\Abox^+_1$ and $\Abox^+_2$ below but did not stop after the runs $\Abox^-_1$, $\Abox^-_2$, $\Abox^-_3$, where we assume the runs to start at $0$ and measurements to be recorded at moments $0,1,2,\dots$:
\begin{multline*}
\Abox^+_1 = \{T(2), V(4) \},  \Abox^+_2 = \{T(1), V(4)\},\\ \Abox^-_1 = \{T(1)\},  \Abox^-_2 = \{ V(4)\},  \Abox^-_3 = \{V(1), T(2)\}.
\end{multline*}
The $\Diamond$-query $\q = \Diamond (T \land \Diamond \Diamond V)$ is true at $0$ in the $\Abox^+_i$, false in $\Abox^-_i$, and so gives a possible explanation of what could cause the engine failure. The example set $(\{\Abox^+_3,\Abox^+_4\},\{\Abox^-_4\})$ with
\begin{multline*}
\Abox^+_3 = \{T(1), V(2) \}, \ \ \Abox^+_4 = \{T(1),T(2), V(3)\}, \\ \Abox^-_4 = \{T(1),V(3)\}
\end{multline*}
is explained by the $\U$-query $T \U V$.
Using background knowledge, we can compensate for sensor failures resulting in incomplete data. To illustrate, suppose $\mathcal{E}^+_1 = \{H(3), V(4) \}$, where $H$ means `heater is on'\!. If an ontology $\TO$ has the axiom $\nxt H \to T$ saying that a heater can only be triggered by the low temperature at the previous moment, then the same $\q$ separates $\{\mathcal{E}^+_1, \Abox^+_2\}$ from $\{\Abox^-_1, \Abox^-_2,\Abox^-_3\}$ under $\TO$.
\hfill $\dashv$
\end{example}
Query $\q$ in Example~\ref{ex:00} is of a particular `linear' form,
in which the order of atoms is fixed and not left
open as, for instance, in the `branching' $\Diamond T \wedge \Diamond V$. More precisely, \emph{path $\nxt\Diamond$-queries} in the class $\smash{\mathcal{Q}_{p}[\nxt,\Diamond]}$ take the form
\begin{align}\label{dnpath}
\q = \rho_0 \land \op_1 (\rho_1 \land \op_2 (\rho_2 \land \dots \land \op_n \rho_n) ),
\end{align}
where $\op_i \in \{\nxt, \Diamond \}$ and $\rho_i$ is a conjunction of atoms; $\mathcal{Q}_{p}[\Diamond]$
restricts $\op_i$ to $\{\Diamond \}$; and \emph{path $\U$-queries} $\Qp[\U]$ look like
\begin{align}\label{upath}
	\q = \rho_0 \land (\lambda_1 \U (\rho_1 \land ( \lambda_2 \U ( \dots (\lambda_n \U \rho_n) \dots )))),
\end{align}
where $\lambda_i$ is a conjunction of atoms or $\bot$. Path queries are motivated by two observations. First, if a query language admits conjunctions of queries---unlike our classes of path queries---then, dually to  overfitting for $\lor$, multiple negative examples become redundant: if $\q_{\mathcal{D}}$ separates $(E^{+},\{\mathcal{D}\})$, for each $\mathcal{D} \in E^{-}$, then $\bigwedge_{\mathcal{D}\in E^{-}}\q_\mathcal{D}$ separates $(E^{+},E^{-})$.
%
%
Second, numerous natural query types known from applications can be captured by path queries. For example, the existence of a \emph{common subsequence} of
the positive examples (regarded as words) that is not a subsequence of any negative one corresponds to the existence of a separating $\mathcal{Q}_{p}[\Diamond]$-query with $\rho_0=\top$ and
$\rho_{i}\not=\top$ for $i>0$, and the existence of a \emph{common subword} of the positive examples that is not a subword of any negative one corresponds to the existence of
a separating query of the form $\Diamond(\rho_{1} \land \nxt (\rho_2 \land \dots \land \nxt  \rho_n) )$. These and similar queries are the basis of data comparison programs with numerous applications in computational linguistics, bioinformatics, and revision control systems~\cite{878178,4609376,DBLP:journals/cor/BlumDSJLMR21}.

While path queries express the intended separating pattern of events in many applications, branching queries are needed if the order of events is irrelevant  for separation.

\begin{example}\em
In the setting of Example~\ref{ex:00}, the positive examples $\{T(2), V(4) \}$ and $\{V(1), T(4)\}$ are separated from the negative $\{T(1)\}$ and $\{ V(4)\}$
%
%
by the branching $\mathcal{Q}[\Diamond]$-query $\Diamond T \land \Diamond V$ while no path query is capable of doing this. \hfill $\dashv$
\end{example}

Branching $\mathcal{Q}[\nxt,\Diamond]$-queries express transparent existential conditions and can be regarded as \LTL{}  CQs. However, branching $\mathcal{Q}[\U]$-queries with nestings of $\U$ on the left-hand side correspond to complex first-order formulas with multiple alternations of quantifiers $\exists$ and $\forall$, which are hard to comprehend. So we also consider the language $\mathcal{Q}[\Us]\supseteq \mathcal{Q}_{p}[\U]$ of `simple' $\mathcal{Q}[\U]$-queries without such nestings.



%
%
%

In this paper, we take the first steps towards understanding the complexity and especially feasibility of the query-by-example problems $\mathsf{QBE}(\mathcal{L},\mathcal{Q})$ with $\mathcal{L}$ an ontology and $\mathcal{Q}$ a query language.
We are particularly interested in whether there is a difference in complexity between path and branching queries and whether it can be reduced by bounding the number of positive or negative examples. Our results in the ontology-free case
\vspace*{-1mm}
\begin{table}[h]
\centering
\begin{tabular}{c|c|c|c}
		\hline
		$\QBE$ for & $\mathsf{b}+, \mathsf{b}-$ & $\mathsf{b}+$ & $\mathsf{b}-$ or unbounded \\
\hline
$\mathcal{Q}_{p}[\Diamond]/\mathcal{Q}_{p}[\nxt,\Diamond]$ & ${\small \le} \PTime$ & ${\small =} \NP$ & ${\small =} \NP$    \\
$\mathcal{Q}[\Diamond]/\mathcal{Q}[\nxt,\Diamond]$ & ${\small \le} \PTime$ & ${\small \le} \PTime$ & ${\small =} \NP$ \\\hline
$\Qp[\U]$ & \multicolumn{3}{c}{${\small =} \NP$} \\\hline
$\mathcal{Q}[\Us]$ & ${\small \le} \PTime$ & ${\small \le} \PTime$ & ${\small \ge} \NP$, ${\small \le} \PSpace$  \\\hline
$\mathcal{Q}[\U]$ & \multicolumn{3}{c}{${\small \le} \PSpace$} \\ \hline
\end{tabular}%
\caption{Complexity in the ontology-free case.}
\label{table:free}
\end{table}
\vspace*{-2mm}
\noindent
are summarised in Table~\ref{table:free}, where $\mathsf{b}+$\,/\,$\mathsf{b}-$ indicate that the number of positive\,/\,negative examples is bounded\footnote{We do not consider queries with $\nxt$ only as separability is trivially in \PTime{} and does not detect any useful patterns.}\!\!. Note that path queries are indeed harder than branching ones when the number of positive examples is bounded but not in the unbounded case.  Our proof techniques range from reductions to common subsequence existence problems~\cite{DBLP:journals/jacm/Maier78,DBLP:journals/tcs/Fraser96} and dynamic programming to mimicking separability by path and branching $\U$-queries in terms of containment and simulation of  transition systems~\cite{DBLP:conf/cav/KupfermanV96a}. The key to NP upper bounds is the \emph{polynomial separation property} (PSP) of the respective languages: any separable example set is separated by a polynomial-size query.
The complexity for $\mathcal{Q}_{p}[\Diamond]$, $\mathcal{Q}[\Diamond]$ can also be obtained from~\cite{DBLP:conf/icgi/FijalkowL21} who studied separability by $\mathcal{Q}[\Diamond]$-queries of bounded size.



In the presence of ontologies, we distinguish between the combined complexity of $\mathsf{QBE}(\mathcal{L},\mathcal{Q})$, when both data and ontology are regarded as input, and the data complexity, when the ontology is deemed fixed or negligibly small compared with the data.
%
We obtain encouraging results:  $\mathcal{Q}_{p}[\Diamond]$- and $\mathcal{Q}[\Diamond]$-queries  mediated by $\LTL\Xbd$-ontologies and all of our  queries mediated by $\LTL_\horn\Xallop$-ontologies enjoy the same data complexity as in Table~\ref{table:free}.
%
%
The combined complexity results for queries with $\LTL_\horn\Xallop$-ontologies we have obtained so far  are given in Table~\ref{table:Horn}. Interestingly, QBE for query classes with $\Diamond$ and $\nxt$ only is \PSpace-complete---
\begin{table}[h]
\centering
\begin{tabular}{c|c}\hline
$\mathcal{Q}[\Diamond]$\,/\,$\Qp[\Diamond]$ & \multirow{2}{*}{= \PSpace}  \\
$\mathcal{Q}[\nxt, \Diamond]$\,/\,$\Qp[\nxt, \Diamond]$ &  \\
$\mathcal{Q}[\Us]$ & $\ge \PSpace$, $\leq \textsc{ExpTime}$ \\
$\Qp[\U]$ & $\geq \NExpTime, \leq \ExpSpace$ \\
$\mathcal{Q}[\U]$ & $\ge \PSpace$, $\le 2\textsc{ExpTime}$ \\\hline
\end{tabular}
\caption{Combined complexity of $\QBE(\LTL_\horn\Xallop, \mathcal{Q})$ in both bounded and unbounded cases.}
\label{table:Horn}
\end{table}
\noindent
not harder than satisfiability. The upper bound is proved by establishing the \emph{exponential separation property} for all of these classes of queries and using the canonical (aka minimal)  model property of Horn \LTL. The upper bounds for $\U$-queries are by reduction to the simulation and containment problems for exponential-size transition systems. For arbitrary \LTL-ontologies, this technique only gives a $2\ExpTime{}$ upper bound for $\mathcal{Q}[\Us]$ and a $2\ExpSpace{}$ one for $\Qp[\U]$.
Separability by (path) $\Diamond$-queries under $\LTL\Xbd$ ontologies turns out to be $\Sigma_{2}^{p}$-complete, where the upper bound is shown by establishing the PSP.

Compared with non-temporal QBE, our results are very encouraging: QBE is \textsc{coNExpTime}-complete for conjunctive queries (CQs) over standard relational databases \cite{DBLP:conf/cp/Willard10,DBLP:conf/icdt/CateD15} and even undecidable for CQs under $\mathcal{ELI}$ or $\mathcal{ALC}$ ontologies~\cite{DBLP:conf/ijcai/FunkJLPW19,AAAI20}.





\section{Further Related Work}

We now briefly comment on a few other related research areas.
One of them is concept learning in description logic (DL), as  proposed by~\cite{DBLP:conf/ilp/BadeaN00} who, inspired by inductive logic
programming, used refinement operators to construct a concept separating
positive and negative examples in a DL ABox. 
There has been significant interest in this approach~\cite{DBLP:conf/ilp/LehmannH09,DBLP:journals/ml/LehmannH10,Lisi15,DBLP:conf/aaai/SarkerH19,DBLP:conf/dlog/Lisi12,DBLP:journals/fgcs/RizzoFd20}.
Prominent systems include the {\sc DL Learner}~\cite{DBLP:journals/ws/BuhmannLW16}, {\sc
DL-Foil}~\cite{DBLP:conf/ekaw/Fanizzi0dE18} and its extension {\sc DL-Focl} \cite{DBLP:conf/ekaw/0001FdE18}, SPaCEL~\cite{DBLP:journals/jmlr/TranDGM17}, {\sc
YinYang}~\cite{DBLP:journals/apin/IannonePF07}, {\sc pFOIL-DL}~\cite{Straccia15}, and {\sc EvoLearner}~\cite{DBLP:conf/www/HeindorfBDWGDN22}. However, this work has not considered the complexity of separability. Also closely related is the work on the separability of two formal (e.g., regular) languages  using a weaker (e.g., FO-definable) language~\cite{DBLP:journals/corr/PlaceZ14,DBLP:conf/icdt/HofmanM15,DBLP:conf/fsttcs/PlaceZ22}. When translated into a logical separability problem, the main difference to our results is that one demands $\TO,\mathcal{D}\models\neg \q(\avec{d})$---and not just $\TO,\mathcal{D}\not\models\q(\avec{d})$---for all $(\mathcal{D},\avec{d})\in E^{-}$.



\section{Preliminaries}\label{prelims}

\LTL-\emph{formulas} are built from \emph{atoms} $A_i$, $i < \omega$, using the Booleans and  (future-time) temporal operators  $\nxt$, $\Diamond$, $\Box$, $\U$, which we interpret under the \emph{strict semantics}~\cite{gkwz,DBLP:books/cu/Demri2016}. An \LTL-\emph{interpretation} $\I$ identifies those atoms $A_i$ that are \emph{true} at each time instant $n \in \mathbb N$, written $\I, n \models A_i$. The truth-relation for atoms is extended inductively to \LTL-formulas by taking  $\I,n \models \varphi \U \psi$ iff $\I,m \models \psi$, for some  $m > n$, and $\I, k \models \varphi$ for all $k \in (n,m)$, and using the standard clauses for the Booleans and equivalences $\nxt\varphi \equiv \bot \U \varphi$, $\Diamond\varphi \equiv \top \U \varphi$ and $\Box\varphi \equiv \neg\Diamond\neg \varphi$ with Boolean \emph{constants} $\bot$ and $\top$ for `false' and `true'\!.

An \emph{\LTL-ontology}, $\TO$, is any finite set of \LTL-formulas, called the \emph{axioms} of $\TO$. An interpretation $\I$ is a \emph{model} of $\TO$ if all axioms of $\TO$ are true at \emph{all times} in $\I$. 
As mentioned in the introduction, apart from full $\LTL$ we consider its Prior $\Box\Diamond$-fragment $\LTL\Xbd$ and $\LTL_\horn\Xallop$ whose axioms take the form
\begin{equation}\label{axiomH}
C_1 \land \dots \land C_k ~\to~ C_{k+1}
\end{equation}
with $C_i$ given by $C ::= A_i \mid \bot \mid \Box C  \mid  \nxt C$. In fact, we could allow $\Diamond$ on the left-hand side of~\eqref{axiomH} as $\Diamond C \to C'$ can be replaced by $\nxt C \to A$, $\nxt A \to A$, $A \to C'$ with fresh $A$.


A \emph{data instance} is a finite set $\Abox$ of atoms $A_i(\ell)$ with a \emph{time\-stamp} $\ell \in \mathbb N$; $\max\Abox$ is the maximal timestamp in $\Abox$.
We access data by means of \LTL{} analogues of conjunctive queries: our \emph{queries}, $\varkappa$, are constructed from atoms, $\bot$ and $\top$ using $\land$, $\nxt$, $\Diamond$ and $\U$.
The class of queries that only use operators from $\Phi \subseteq \{\nxt, \Diamond, \U\}$ is denoted by $\mathcal{Q}[\Phi]$; $\Qp[\Phi]$ is its subclass of \emph{path-queries}, which take the form~\eqref{dnpath} or~\eqref{upath}; and $\mathcal{Q}[\Us]$ comprises \emph{simple queries} in $\mathcal{Q}[\U]$ that do not contain subqueries $\varkappa_1 \U \varkappa_2$ with an occurrence of $\U$ in $\varkappa_1$. Note that $\Qp[\U] \subseteq \mathcal{Q}[\Us]$. The \emph{temporal depth} $\dep(\varkappa)$ of $\varkappa$ is the maximum number of nested temporal operators in $\varkappa$.

An interpretation $\I$ is a \emph{model} of a data instance $\Abox$ if \mbox{$\I,\ell \models A_i$} for all $A_i(\ell) \in \Abox$. $\TO$ and $\Abox$ are \emph{consistent} if they have a model.
We call $k \le \max\Abox$ a (\emph{certain}) \emph{answer} to the \emph{ontology-mediated query} $(\TO, \varkappa)$ over $\Abox$ and write \mbox{$\TO, \Abox \models \varkappa(k)$} if $\I,k \models \varkappa$ in all models $\I$ of $\TO$ and $\Abox$.
%


Let $\mathcal{L}$ and $\mathcal{Q}$ be an ontology and query language defined above. The \emph{query-by-example problem} $\mathsf{QBE}(\mathcal{L},\mathcal{Q})$ we are concerned with in this paper is formulated as follows:
\begin{description}
\item[given] an $\mathcal{L}$-ontology $\TO$ and an \emph{example set} $E = (E^+, E^-)$ with finite sets $E^+$ and $E^-$ of \emph{positive} and, respectively, \emph{negative} data instances,

\item[decide] whether $E$ is $\mathcal{Q}$-\emph{separable} under $\mathcal{O}$ in the sense that there is a $\mathcal{Q}$-query $\varkappa$ with $\mathcal{O}, \Abox \models \varkappa(0)$ for all $\Abox \in E^+$ and $\mathcal{O}, \Abox \not\models \varkappa(0)$ for all $\Abox \in E^-$.
\end{description}
If $\mathcal L = \emptyset$, we shorten $\mathsf{QBE}(\emptyset,\mathcal{Q})$ to $\mathsf{QBE}(\mathcal{Q})$. We also consider the $\mathsf{QBE}$ problems with the input example sets having a \emph{bounded} number of positive and/or negative examples, denoted
$\QBEbp(\mathcal{L},\mathcal{Q})$, $\QBEbm(\mathcal{L},\mathcal{Q})$, or $\QBEba(\mathcal{L},\mathcal{Q})$. Notations like $\QBE^{2\text{+}}_{1\text{--}}(\mathcal{L},\mathcal{Q})$ should be self-explanatory. The size of $\TO$, $E$, $\varkappa$, denoted $|\TO$, $|E|$, $|\varkappa|$, respectively, is the number of symbols in it with the timestamps given in unary.


The next example illustrates the definitions and relative expressive power of queries with different temporal operators.

\begin{example}\em
(a) Let $E = (\{\Abox_{1}\},\{\Abox_{2}\})$ with $\Abox_{1}= \{A(1)\}$, $\Abox_{2}=\{A(2)\}$.  Then $\nxt A$ separates $E$ but no $\mathcal{Q}[\Diamond]$-query does. $E$ is not separable under $\TO = \{ \nxt A \to A \}$ by any query $\varkappa$ as $\TO,\Abox_1 \models \varkappa(0)$ implies $\TO,\Abox_2 \models \varkappa(0)$.
	
	
(b) Let $E = (\{ \Abox_1,\Abox_2\}, \{\Abox_3\})$ with $\Abox_1 = \{A(1),B(2)\}$, $\Abox_2 = \{A(2),B(3)\}$, $\Abox_3 = \{A(3),B(5)\}$. Then the query $\Diamond(A \wedge \nxt B)$ separates $E$ but no query in $\mathcal{Q}[\Diamond]$ does.
	
(c) $A \U B$ separates $(\{ \{B(1)\}, \{A(1),B(2)\} \}, \{ \{B(2)\}\})$ but no $\mathcal{Q}[\nxt,\Diamond]$-query does. \hfill $\dashv$
\end{example}
We now establish a few important \emph{polynomial-time reduc\-tions}, $\le_p$, among the $\QBE$-problems for various query classes, including $\mathcal{Q}_{p}^{\circ}[\Diamond]$-\emph{queries} of the form
\begin{align}\label{normF}
\varkappa = \rho_0 \land \Diamond (\rho_1 \land \Diamond (\rho_2 \land \dots \land \Diamond \rho_n) ),
\end{align}
where each $\rho_{i}$ is a $\mathcal{Q}_p[\nxt]$-query  (i.e., $\Diamond$-free $\mathcal{Q}_p[\nxt,\!\Diamond]$-query).

\begin{theorem}\label{th:data-reduction}
The following polynomial-time reductions hold\textup{:}
\begin{description}
\item[(\emph{i}.1)] $\QBE(\mathcal{L},\mathcal{Q}) \le_p \QBE_{1\text{--}}(\mathcal{L},\mathcal{Q})$, for any $\mathcal{Q}$ closed under~$\land$, and any $\mathcal{L}$ \textup{(}including $\mathcal{L} = \emptyset$\textup{)},

\item[(\emph{i}.2)] $\QBE(\mathcal{L},\mathcal{Q}) \le_p \QBE^{2\text{+}}(\mathcal{L},\mathcal{Q})$, for $\mathcal{L} \in \{\LTL, \LTL\Xbd\}$,

\item[(\emph{i}.3)] $\QBE(\mathcal{L},\mathcal Q[\nxt,\Diamond]) \le_p \QBE(\mathcal{L},\mathcal{Q}_{p}^{\circ}[\Diamond]))$ and\\ $\mathsf{QBE}(\mathcal{L},\mathcal{Q}[\Diamond]) \le_p \mathsf{QBE}(\mathcal{L},\mathcal{Q}_{p}[\Diamond])$, for any $\mathcal{L}$,

\item[(\emph{ii}.1)] $\QBE(\Qp[\Diamond]) \le_p \QBE(\Qp[\nxt,\Diamond])$ and\\ $\QBE(\Qp[\Diamond]) \le_p \QBE(\Qp[\U]) \le_p \QBE_{1\text{--}}(\Qp[\U])$,

\item[(\emph{ii}.2)] $\mathsf{QBE}(\mathcal{Q}[\nxt,\Diamond]) =_p \QBE(\mathcal{Q}[\Diamond]) \le_p \QBE(\mathcal Q[\Us])$.
\end{description}
Reductions $(i.1)$--$(i.3)$ work for combined complexity\textup{;} $(i.1)$, $(i.3)$ also work for data complexity. The reductions preserve boundedness of the number of positive/negative examples.
\end{theorem}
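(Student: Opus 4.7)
The plan is to handle the six reductions via three recurring techniques: (a) closure of the query language under $\wedge$ to split multiple negatives, (b) data-side encodings that internalise $\nxt$ or branching as fresh atoms, and (c) a timestamp ``stretching'' that kills the extra expressive power of $\nxt$ and $\U$ on the transformed data.

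For (i.1), whenever $\mathcal{Q}$ is closed under $\wedge$, $(E^+,E^-)$ is $\mathcal{Q}$-separable under $\TO$ iff $(E^+,\{\Abox\})$ is $\mathcal{Q}$-separable under $\TO$ for every $\Abox\in E^-$: the non-trivial direction takes the conjunction of the individual separators, which still lies in $\mathcal{Q}$. This is a polynomial-time Turing reduction that leaves $\TO$ and $E^+$ unchanged, so it preserves both data complexity and boundedness of $|E^+|$. For (i.2), I would use the Boolean expressive power of $\LTL$ (and of $\LTL\Xbd$) to encode the $n$ positive examples via disjunctive ontology axioms over fresh marker atoms $P_1,\dots,P_n$ together with $\Box(P_i\to\varphi_i)$, choosing $\varphi_i$ so that each marker forces the content of $\Abox_i^+$; two marker-bearing instances $\Bmc_1,\Bmc_2$ are designed so that their minimal models together cover exactly the minimal models of the $\Abox_i^+$, and monotonicity of positive queries then gives that the certain answers over $\{\Bmc_1,\Bmc_2\}$ coincide with the common certain answers over the $\Abox_i^+$. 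For (i.3), I would augment each $\Abox$ with ``successor atoms'' $[\nxt^k A](t)$ whenever $A(t+k)\in\Abox$, together with branch markers that allow a single path to visit every conjunctive branch of a branching $\mathcal{Q}[\nxt,\Diamond]$-separator; the branching separator then translates into a path separator of form~\eqref{normF}, and conversely path separators on the augmented data project back to branching ones over the original signature.

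For the path reductions of (ii.1) and for one direction of (ii.2), the driving device is the stretching $f(\Abox)=\{A(Nt):A(t)\in\Abox\}$ with $N$ strictly larger than every timestamp appearing in $E$. On $f(E)$ atoms live only at multiples of $N$, so in any $\Qp[\nxt,\Diamond]$- or $\Qp[\U]$-query every subformula $\nxt\varphi$ or $\lambda\U\varphi$ with $\lambda$ a conjunction of atoms (as required by the path form) is false at every position that carries atoms, and an inductive rewriting collapses the separator to a pure-$\Diamond$ separator of $f(E)$; since $\Qp[\Diamond]$-separators are invariant under stretching, $E$ is $\Qp[\Diamond]$-separable iff $f(E)$ is $\Qp[\nxt,\Diamond]$- (resp.\ $\Qp[\U]$-) separable, giving the two relevant reductions in (ii.1). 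Stretching applied to branching queries also handles $\QBE(\mathcal{Q}[\Diamond])\le_p\QBE(\mathcal{Q}[\nxt,\Diamond])$, and the converse $\QBE(\mathcal{Q}[\nxt,\Diamond])\le_p\QBE(\mathcal{Q}[\Diamond])$ is obtained by the successor-atoms encoding of (i.3): each nested $\nxt^k A$ in a branching $\{\nxt,\Diamond\}$-separator is replaced by the fresh atom $[\nxt^k A]$, yielding a pure-$\Diamond$ branching separator over the augmented signature (correctness following from monotonicity of positive queries, since $[\nxt^k A]$ is forced at exactly the right positions). For $\QBE(\mathcal{Q}[\Diamond])\le_p\QBE(\mathcal{Q}[\Us])$, a bespoke refinement of the stretching, with additional ``filler'' atoms placed between data positions, is used to ensure that every $\U$-subformula whose left-hand side is a non-trivial $\U$-free query is forced to evaluate like a $\Diamond$, so that $\Us$-separability collapses to $\Diamond$-separability on the transformed data.

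The remaining reduction $\QBE(\Qp[\U])\le_p\QBE_{1\text{--}}(\Qp[\U])$ is the subtle one, because $\Qp[\U]$ is not closed under $\wedge$ and (i.1) is unavailable. My plan is to concatenate the negatives $\Abox_1^-,\ldots,\Abox_m^-$ into a single instance $\Abox^\ast$ on a common timeline, separating consecutive copies by a block of a fresh atom $F$ long enough to block every $\U$-walk across it. Since $F$ does not appear in the positive examples, any relevant separator has its $\lambda_i$'s free of $F$, and so the $\U$-walks witnessing $\q(0)$ cannot cross such a barrier; by the polynomial separation property for $\Qp[\U]$ one may restrict attention to separators of polynomial temporal depth, so the barrier length is polynomial, and one verifies that $\q(0)$ is refuted on every $\Abox_i^-$ iff $\q(0)$ is refuted on $\Abox^\ast$. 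The main obstacle, as I expect, will be tuning the barrier so that separators whose first $\U$-step is $\top\U\rho$ or $\bot\U\rho$ (behaving as $\Diamond\rho$ or $\nxt\rho$) and that can in principle cross barriers are still handled correctly---typically by padding the initial segment of $\Abox^\ast$ as well. Everything else reduces to short combinatorial arguments and a routine check that each transformation is polynomial, preserves boundedness of $|E^+|$ and $|E^-|$, and (for (i.1) and (i.3)) does not modify the ontology, so the data-complexity claims go through.
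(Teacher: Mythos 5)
Your (i.1), and the successor-atom half of (ii.2), match the paper; the rest relies on two central devices that do not work as described.

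First, the stretching in (ii.1)/(ii.2). You replace $E$ by $f(E)$ with $f(\Abox)=\{A(Nt)\mid A(t)\in\Abox\}$ and claim $E$ is $\Qp[\Diamond]$-separable iff $f(E)$ is $\Qp[\nxt,\Diamond]$-separable. The backward direction fails: take $E^+=\{\{A(1)\}\}$, $E^-=\{\{A(2)\}\}$, which is not $\Qp[\Diamond]$-separable, yet $f(E)=(\{\{A(N)\}\},\{\{A(2N)\}\})$ is separated by $\nxt^N A$, a path query of polynomial size since timestamps are unary. Your claim that every $\nxt\varphi$ is false at atom-carrying positions overlooks iterated $\nxt$'s with $\top$ in between, which form~\eqref{dnpath} allows, and the same query written with $\bot\U$ defeats the $\Qp[\U]$ case. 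The paper instead keeps $E$ intact and \emph{adds} to $E^+$ a stretched copy $\Abox'=\{A(mn)\mid A(n)\in\Abox\}$ of one positive example $\Abox$ (with $m=\max\Abox+2$): a separator must then hold in both $\Abox$ and $\Abox'$, and only queries equivalent to $\Qp[\Diamond]$-queries survive that. Your concatenation idea for $\QBE(\Qp[\U])\le_p\QBE_{1\text{--}}(\Qp[\U])$ is in the right spirit, but the gap you yourself flag (steps $\top\U\rho$, or $\lambda\U\rho$ witnessed at the very next instant, crossing the barrier) is real and is not repaired by padding the negative instance alone; the paper also pads the \emph{positive} examples with the fresh atoms $B$, $C$, which forces any separator to mention $B$ and to carry $C$ in its $\lambda_i$'s and thereby anchors it inside a single block of the concatenated negative.

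Second, (i.3) must hold for an arbitrary ontology language $\mathcal{L}$, and your data-level successor-atom encoding cannot: under an ontology, $\TO,\Abox\models\nxt^kA(t)$ need not be witnessed by $A(t+k)\in\Abox$, so adding $[\nxt^kA](t)$ only when $A(t+k)\in\Abox$ changes the certain answers, and the compensating biconditional axioms are in general not expressible in $\mathcal{L}$ (e.g.\ $\LTL\Xbd$ has no $\nxt$). Moreover, no ``branch markers'' can turn a branching separator into a single path separator in general: $\Diamond A\land\Diamond B$ is not equivalent to any path query over any augmentation of the data when both orders of $A$ and $B$ occur among the positives. The paper's proof of (i.3) touches neither the data nor the ontology: it rewrites any $\mathcal{Q}[\nxt,\Diamond]$-query into an equivalent conjunction of $\mathcal{Q}_{p}^{\circ}[\Diamond]$-queries and observes that against a \emph{single} negative example a conjunction separates iff some conjunct does, so the reduction merely poses the singleton-negative instances $(E^{+},\{\Abox\})$ for each $\Abox\in E^{-}$. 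That reduction to one negative example is the missing idea, and it is also exactly what makes the ``for any $\mathcal{L}$'' and data-complexity claims immediate.
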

\begin{proof}
In (\emph{i}.1), $(E^+,E^-)$ with $E^- = \{\Abox_1,\dots,\Abox_n\}$ is $\mathcal{Q}$-separable under $\TO$ iff each $(E^+,\{\Abox_i\})$ is because if $\varkappa_i$ separates $(E^+,\{\Abox_i\})$, then $\varkappa_1 \land \dots \land \varkappa_n$ separates $(E^+,E^-)$.

In (\emph{i}.2), $(E^+,E^-)$ with $E^+=\{\Abox_1,\dots,\Abox_n\}$, $n>1$, is $\mathcal{Q}$-separable under $\TO$ iff $(E'^+, E^-)$ is $\mathcal{Q}$-separable under $\TO'$ that extends $\TO$ with the following axioms simulating $E^+$:
\begin{align*}
& S_1\to A_1\lor\dots \lor A_n, \quad S_2\to A_1\lor\dots \lor A_n,\\
&C_i\land\Diamond A_j\to X, \quad D_i\land\Diamond A_j\to X, \ \ \text{for $X(i)\in\Abox_j$,}
\end{align*}
where $S_1$, $S_2$, $A_k$, $C_l$, $D_l$, for $l \le n' =\max_i\max\Abox_i$, are fresh  and $E'^+$ consists of $\{C_0(0),\dots,C_k(n'),S_1(n'+1)\}$ and $\{D_0(0),\dots,D_k(n'),S_2(n'+1)\}$.

(\emph{i}.3) Using $[\rho_{0} \wedge \Diamond(\rho_{1} \wedge \bigwedge_{i}\Diamond \varkappa_{i})] \equiv [\rho_{0} \wedge \bigwedge_{i} \Diamond(\rho_{1} \wedge \Diamond \varkappa_{i})]$, $\nxt \Diamond \varkappa \equiv \Diamond\nxt \varkappa$ and $\nxt (\varkappa \land \varkappa') \equiv (\nxt \varkappa \land \nxt \varkappa')$ we convert, in polytime, each $\mathcal{Q}[\nxt,\Diamond]$-query to an equivalent conjunction of $\mathcal{Q}_{p}^{\circ}[\Diamond]$-queries. Thus, there is $\q\in \mathcal{Q}[\nxt,\Diamond]$ separating $(E^{+},E^{-})$ iff there are polysize $\q_{\mathcal{D}}\in\mathcal{Q}_{p}^{\circ}[\Diamond]$ separating $(E^{+},\{\mathcal{D}\})$, for each $\mathcal{D}\in E^{-}$.

(\emph{ii}.1) The first two reductions are shown by adding to $E^+ \ni \Abox$, for some $\Abox$, the data instance $\mathcal{D}'=\{ A(mn) \mid  A(n)\in \mathcal{D}\}$ with $m=\max\mathcal{D}+2$. Now, if $\Abox\models \varkappa(0)$ and $\Abox'\models \varkappa(0)$, for $\varkappa \in \Qp[\U]$, then $\varkappa$ is equivalent to a $\Qp[\Diamond]$-query.
The third reduction, illustrated below for $E^+= \{\Abox^+_1, \Abox^+_2\}$ and $E^- = \{\Abox^-_1, \Abox^-_2\}$, transforms $E$ into two positive and \emph{one negative} example using `pads' of  fresh atoms $B$, $C$. We show \\
\centerline{
\begin{tikzpicture}[nd/.style={draw,thick,circle,inner sep=0pt,minimum size=1.5mm,fill=white},xscale=0.35]
\node[label=left:{$\Abox''^+_1$}] at (0,0) {};
\node   at (1.2,0.3) (x1)  {};
\node at (5,0.5) {\tiny $\Abox_1^+$};
\node  at (5,0.4) (x2) {};
\node[fit = (x1)(x2), basic box = black]  {};
\draw[thick,gray,-] (0,0) -- (2.5,0);
\draw[thick,gray,dotted] (2.5,0) -- (3.5,0);
\draw[thick,gray,-] (3.5,0) -- (5.5,0);
\draw[thick,gray,dotted] (5.5,0) -- (6.5,0);
\slin{0,0}{}{$0$};
\slin{1,0}{$B$}{$1$};
\slin{2,0}{$C$}{$2$}
\node at (3,0.2) {\tiny$\ldots$};
\slin{4,0}{$C$}{};
\slin{5,0}{}{};
\end{tikzpicture}
\quad
\begin{tikzpicture}[nd/.style={draw,thick,circle,inner sep=0pt,minimum size=1.5mm,fill=white},xscale=0.35]
\node[label=left:{$\Abox'^+_2$}] at (0,0) {};
\node   at (4.2,0.3) (x1)  {};
\node at (8,0.5) (D) {\tiny $\Abox_2^+$};
\node  at (8,0.4) (x2) {};
\node[fit = (x1)(x2), basic box = black]  {};
\draw[thick,gray,-] (0,0) -- (1.5,0);
\draw[thick,gray,dotted] (1.5,0) -- (2.5,0);
\draw[thick,gray,-] (2.5,0) -- (5.5,0);
\draw[thick,gray,dotted] (5.5,0) -- (6.5,0);
\draw[thick,gray,-] (6.5,0) -- (8.5,0);
\slin{0,0}{}{$0$};
\slin{1,0}{}{$1$};
\slin{3,0}{}{};
\slin{4,0}{$B$}{$m$};
\slin{5,0}{$C$}{};
\node at (6,0.2) {\tiny$\ldots$};
\slin{7,0}{$C$}{};
\slin{8,0}{}{};
\end{tikzpicture}
}

\centerline{
\begin{tikzpicture}[nd/.style={draw,thick,circle,inner sep=0pt,minimum size=1.5mm,fill=white},xscale=0.5]
\node[label=left:{$\Abox'^-$}\!\!\!] at (0,0) {};
\node   at (4,0.3) (x1)  {};
\node at (6,0.5) {\tiny $\Abox_1^-$};
\node  at (6,0.4) (x2) {};
\node at (11.2,0.3) (x3)  {};
\node at (13,0.5) {\tiny $\Abox^-_2$};
\node  at (13,0.4) (x4) {};
\node[fit = (x1)(x2), basic box = black]  {};
\node[fit = (x3)(x4), basic box = black]  {};
\draw[thick,gray,-] (0,0) -- (1.5,0);
\draw[thick,gray,dotted] (1.5,0) -- (2.5,0);
\draw[thick,gray,-] (2.5,0) -- (5.5,0);
\draw[thick,gray,dotted] (5.5,0) -- (6.5,0);
\draw[thick,gray,-] (6.5,0) -- (8.5,0);
\draw[thick,gray,dotted] (8.5,0) -- (9.5,0);
\draw[thick,gray,-] (9.5,0) -- (12.5,0);
\draw[thick,gray,dotted] (12.5,0) -- (13.5,0);
\draw[thick,gray,-] (13.5,0) -- (15.5,0);
\slin{0,0}{}{$0$};
\slin{1,0}{}{$1$};
\slin{3,0}{}{};
\slin{4,0}{$B$}{\tiny$m$};
\slin{5,0}{$C$}{};
\slin{7,0}{$C$}{};
\slin{8,0}{}{\tiny$2m$};
\slin{10,0}{}{};
\slin{11,0}{$B$}{\tiny$3m$};
\slin{12,0}{$C$}{};
\slin{14,0}{$C$}{};
\slin{15,0}{}{\tiny$4m$};
\end{tikzpicture}
}

\noindent
that $E$ is $\Qp[\U]$-separable iff $(\{\Abox''^+_1,\Abox'^+_2\}, \{\Abox'^-\})$ is.

(\emph{ii}.2) The first reduction is established by modifying every $\Abox$ in the given $E$ as illustrated below using fresh atoms $A_i$ and $B_j$ that encode $\nxt^iA$ and $\nxt^j B$, respectively:
\centerline{
\begin{tikzpicture}[nd/.style={draw,thick,circle,inner sep=0pt,minimum size=1.5mm,fill=white},xscale=1]
\node[label=left:{\scriptsize$\Abox$}] at (0,0) {};
\draw[thick,gray,-] (0,0) -- (3,0);
\slin{0,0}{}{\scriptsize$0$};
\slin{1,0}{\scriptsize$A$}{\scriptsize$1$};
\slin{2,0}{}{\scriptsize$2$};
\slin{3,0}{\scriptsize$B$}{\scriptsize$3$};
\end{tikzpicture}
\quad \
\begin{tikzpicture}[nd/.style={draw,thick,circle,inner sep=0pt,minimum size=1.5mm,fill=white},xscale=1]
\draw[thick,gray,-] (0,0) -- (3,0);
\slin{0,0}{\scriptsize$A_1,\!B_3$}{\scriptsize$0$};
\slin{1,0}{\scriptsize$A,B_2$}{\scriptsize$1$};
\slin{2,0}{\scriptsize$B_1$}{\scriptsize$2$};
\slin{3,0}{\scriptsize$B$}{\scriptsize$3$};
\node[label=right:{\scriptsize$\Abox'$}] at (3,0) {};
\end{tikzpicture}
}\\
Then $E$ is $\mathcal{Q}[\nxt,\!\Diamond]$-separable iff $E'$ is $\mathcal{Q}[\Diamond]$-separable. The converse and the second reduction are similar to (\emph{ii}.1).
\end{proof}




%

%

\section{QBE without Ontologies}\label{sec:no-ont}

We start investigating the complexity of the $\QBE$ problems for \LTL{} by considering queries without mediating ontologies.

\begin{theorem}\label{thm:qbewithout}
The $\QBE$-problems for the classes of queries defined above \textup{(}with the empty ontology\textup{)} belong to the complexity classes shown in Table~\ref{table:free}.
%
\end{theorem}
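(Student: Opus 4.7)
The plan is to cover the table entries group by group, using Theorem~\ref{th:data-reduction} to collapse redundant cases. For the branching classes, which are closed under $\land$, part~(\emph{i}.1) reduces unbounded $|E^-|$ to $|E^-|=1$, so for $\mathcal{Q}[\Diamond]$, $\mathcal{Q}[\nxt,\Diamond]$, $\mathcal{Q}[\Us]$ and $\mathcal{Q}[\U]$ only the bound on $|E^+|$ matters; the same collapse for $\Qp[\U]$ is given by~(\emph{ii}.1). For the remaining path classes $\Qp[\Diamond]$, $\Qp[\nxt,\Diamond]$ both bounds matter, and by~(\emph{ii}.1) it suffices to analyse $\Qp[\Diamond]$.

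First I would handle $\Qp[\Diamond]$. Each data instance $\Abox$ induces a word $w_\Abox$ of length $\max\Abox+1$ whose $\ell$th letter is the set of atoms true at $\ell$, and a query $\Diamond(\rho_1\land\Diamond(\rho_2\land\dots\land\Diamond\rho_n))$ holds at~$0$ iff $\rho_1\dots\rho_n$ is a \emph{generalised subsequence} of $w_\Abox$ (each $\rho_i$ being contained in the matched letter). $\QBEba(\Qp[\Diamond])$ then becomes a common-subsequence problem for a constant number of words, solvable in $\PTime$ by product dynamic programming over the positives against the negatives. For the unbounded cases, the $\NP$ upper bound follows from a \emph{polynomial separation property}: any separating $\Qp[\Diamond]$-query can be trimmed to length at most $\min_{\Abox\in E^+}(\max\Abox+1)$. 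The matching $\NP$-hardness is by reduction from Maier-style common-subsequence problems~\cite{DBLP:journals/jacm/Maier78,DBLP:journals/tcs/Fraser96}. Theorem~\ref{th:data-reduction}\,(\emph{ii}.1) then transfers everything to $\Qp[\nxt,\Diamond]$.

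For the branching $\Diamond$-classes, by~(\emph{ii}.2) and~(\emph{i}.3) it suffices to consider conjunctions of $\mathcal{Q}_p^{\circ}[\Diamond]$-queries. When $|E^+|$ is bounded, the set of threads satisfied by all positives is the intersection of their generalised-subsequence closures, a polynomial-size object; covering the (by~(\emph{i}.1)) single negative is a $\PTime$ greedy check. When $|E^+|$ is unbounded, $\NP$ membership comes from a branching polynomial separation property (one thread per negative, each of polynomial length), and $\NP$-hardness follows by a set-cover style reduction. For the $\U$-classes, I would view each $\Abox$ as a finite pointed transition system $T_\Abox$ whose states are the time instants up to $\max\Abox+1$, and observe that $\Abox\models\q(0)$ for $\q\in\mathcal{Q}[\U]$ is captured by a simulation-style preorder on such systems; separability by $\mathcal{Q}[\U]$ then reduces to the complement of a transition-system containment problem, placing $\QBE(\mathcal{Q}[\U])$ in $\PSpace$. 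For $\Qp[\U]$ a pumping argument along the path spine yields the polynomial separation property, hence $\NP$; for $\mathcal{Q}[\Us]$ with bounded $|E^+|$ the restriction that no $\U$ is nested on the left lets a separating query be summarised by polynomially many position-type witnesses, giving $\PTime$.

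The main obstacle I anticipate is pinning down the polynomial separation property in the branching cases with unbounded $|E^+|$: one must argue that arbitrarily large conjunctions of $\Diamond$-threads can always be replaced by polynomially many representative threads that still cover every negative. A related difficulty arises for $\mathcal{Q}[\Us]$ with bounded $|E^+|$, where one needs a normal form for simple until-queries that compresses the combinatorics of right-nested $\U$ into polynomial-size type witnesses without inflating the algorithm to $\PSpace$. The $\U$-to-simulation reduction itself is conceptually clean, but care is needed to keep the constructed transition systems of polynomial size so that the $\PSpace$ bound is preserved.
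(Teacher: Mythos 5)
Your overall architecture matches the paper's: collapse cases via Theorem~\ref{th:data-reduction}, treat $\Qp[\Diamond]$ as a generalised common-subsequence problem with dynamic programming for the doubly bounded case and the polynomial separation property plus Maier/Fraser-style reductions for the $\NP$ bounds, and handle the $\U$-classes by translating data instances into transition systems and testing simulation (branching queries) or containment (path queries). Up to and including $\Qp[\U]$ and $\mathcal{Q}[\Us]$ this is essentially the paper's proof, modulo the fact that your $\NP$-hardness for unbounded branching $\Diamond$-queries is obtained in the paper not by a new set-cover reduction but by noting that with a single negative example $\mathcal{Q}[\Diamond]$- and $\Qp[\Diamond]$-separability coincide.

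The genuine gap is your treatment of full $\mathcal{Q}[\U]$. A transition system whose states are the time instants up to $\max\Abox+1$ cannot capture queries with $\U$ nested on the left: in $(\varphi\U\psi)\U\chi$ the subquery $\varphi\U\psi$ must hold at \emph{every} instant of an interval, so the witnessing object for one $\U$-step is not a single target instant but a \emph{set} of instants anchoring the left argument together with a set for the right argument. Example~\ref{ex:u-path-not-treeB} gives an example set separated by $(A\U B)\U C$ but not $\mathcal{Q}[\Us]$-separable, which shows the instant-based system (sound for $\mathcal{Q}[\Us]$ and $\Qp[\U]$ by Theorem~\ref{criterionUsA}) is provably inadequate here. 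The paper instead builds systems whose non-initial states are pairs of sets of positions, with two kinds (colours) of transitions encoding the left- and right-hand obligations, and proves a refined simulation criterion; the state space is exponential in $\max\Abox$, and $\PSpace$ comes from a branch-by-branch embedding check of a depth-bounded computation tree, not from a polynomial-size system. Note also that for the branching class $\mathcal{Q}[\U]$ the relevant notion is simulation (embeddability of finite subtrees), not containment of runs as you wrote. Two smaller issues: your $\PTime$ claim for $\mathcal{Q}[\Us]$ with bounded $|E^+|$ cannot rest on a polynomial-size separating query, since the minimal separator can be exponential in $|E^+|$ --- it rests on simulation checking between the polynomial-size product and the union being in $\PTime$; and you leave the $\PSpace$ upper bound for unbounded $\mathcal{Q}[\Us]$ unaddressed.
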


We comment on the proof in the remainder of this section.

\smallskip
\noindent
\textbf{$\nxt\Diamond$-queries.}
\NP-hardness is established by reduction of the consistent subsequence existence problems~\cite[Theorems 2.1, 2.2]{DBLP:journals/tcs/Fraser96} in tandem with Theorem~\ref{th:data-reduction}; membership in NP follows from the fact that separating queries, if any, can always be taken of polynomial size.

Tractability is shown using dynamic programming. We explain the idea for $\QBEba(\mathcal{Q}_{p}[\nxt,\Diamond])$,  $E^{+}=\{\mathcal{D}_{1}^{+},\mathcal{D}_{2}^{+}\}$ and
$E^{-}=\{\mathcal{D}_{1}^{-},\mathcal{D}_{2}^{-}\}$. Suppose $\varkappa$ takes the form~\eqref{dnpath} with $\rho_{n}\ne \top$.
Then $\mathcal{D}\models \varkappa(0)$ iff there is a strictly monotone map $f \colon [0,n] \rightarrow [0,\max \mathcal{D}]$ with $f(0)=0$, $f(i+1)=f(i)+1$
if $\op_{i}=\nxt$, and $\rho_{i}\subseteq t_{\mathcal{D}}(f(i)) = \{A \mid A(f(i)) \in \Abox\}$.
We call such an $f$ a \emph{satisfying assignment} for $\varkappa$ in $\mathcal{D}$.
Let $S_{i,j}$ be the set of tuples $(k,\ell_1,\ell_2,n_{1},n_{2})$ such that $\ell_{1} \leq i\leq \max \mathcal{D}_{1}^{+}$, \mbox{$\ell_{2} \leq j\leq \max \mathcal{D}_{2}^{+}$}, and there is
$
\varkappa=\rho_0 \land \op_1 (\rho_1 \land \dots \land \op_k \rho_k)
$
for which $(i)$ there are satisfying assignments $f_{1},f_{2}$ in $\mathcal{D}_{1}^{+}$ and $\mathcal{D}_{2}^{+}$ with $f_{1}(k)=\ell_{1}$ and $f_{2}(k)=\ell_{2}$, respectively, and $(ii)$ $n_{1}$ is minimal with a satisfying assignment $f$ for $\varkappa$ in $\mathcal{D}_{1}^{-}$ having  $f(k)=n_{1}$, and $n_{1}=\infty$ if there is no such $f$; and similarly for $n_{2}$, $\mathcal{D}_{2}^{-}$.
It suffices to compute $S_{\max \mathcal{D}_{1}^{+},\max \mathcal{D}_{2}^{+}}$
in polytime. This can be done incrementally by initially observing that $S_{0,j}$ can only contain $(0,0,0,0,0)$, which is the case if there is $\rho_{0}\subseteq t_{\mathcal{D}_{1}^{+}}(0)$,
$\rho_{0}\subseteq t_{\mathcal{D}_{2}^{+}}(0)$ and
$\rho_{0}\not\subseteq t_{\mathcal{D}_{1}^{-}}(0)$,
$\rho_{0}\not\subseteq t_{\mathcal{D}_{2}^{-}}(0)$ (and similarly for $S_{i,0}$).
%

\smallskip
\noindent
\textbf{$\U$-queries.}
NP-hardness for $\Qp[\U]$, $\mathcal{Q}[\Us]$ follows from Theorem~\ref{th:data-reduction} $(ii.1)$, $(ii.2)$ and  NP-hardness for $\nxt\Diamond$-queries.

The upper bounds are shown by reduction of $\mathcal{Q}_{p}[\U]$- and $\mathcal{Q}[\Us]$-separability to the simulation and containment problems for transition systems~\cite{DBLP:conf/cav/KupfermanV96a}.
A \emph{transition system}, $S$, is a digraph each of whose nodes and edges is labelled by some set of symbols from a node or, respectively, edge alphabet; $S$ also has a designated set $S_0$ of \emph{start nodes}.
A \emph{run} of $S$ is a path in digraph $S$, starting in $S_0$, together with all of its labels.
The \emph{computation tree} of $S$ is the tree unravelling $\mathfrak T_S$ of $S$.
For systems $S$ and $S'$ over the same alphabets, we say that $S$ is \emph{contained} in $S'$ if, for every run $r$ of $S$, there is a run $r'$ of $S'$ such that $r$ and $r'$ have the same length and the labels on the states and edges in $r$ are subsumed by the corresponding labels in $r'$.
$S$ is \emph{simulated} by $S'$ if $\mathfrak T_{S}$ is \emph{finitely embeddable} into $\mathfrak T_{S'}$ in the sense that every finite subtree\footnote{A \emph{subtree} is a convex subset of $\mathfrak T_S$'s nodes with some start node.} of $\mathfrak T_{S}$ can be homomorphically mapped into $\mathfrak T_{S'}$ preserving (subsumption of) node and edge labels.

%

Now, let $E= (E^+, E^-)$ with $E^\sigma = \{\Abox_i \mid i \in I^\sigma\}$, for \mbox{$\sigma \in \{+,-\}$} and disjoint $I^+$ and $I^-$, and let  $\Sigma$ be the signature of $E$. For each $i \in I^+ \cup I^-$, we take a transition system $S^i$ with states $0^i, \dots, (\max \D_i+1)^i$, where $(\max \D_i+1)^i$ is labelled with $\emptyset$ and the remaining $j^i$ by $\{A \mid A(j) \in \D_i \}$. Transitions are $j^i \to k^i$, for $0 \leq j < k \leq \max \D_i+1$, that are labelled by $\{ A \in \Sigma \cup \{\bot\} \mid A(n) \in \D_i, n \in (j,k)  \}$ and $(\max \D_i+1)^i \to (\max \D_i+1)^i$ with label $\Sigma^\bot =\Sigma \cup \{\bot\}$. Thus, $\D_i$ shown on the left below gives rise to $S^i$ on the right:\\[-3pt]
\centerline{
\begin{tikzpicture}[nd/.style={draw,thick,circle,inner sep=0pt,minimum size=1.5mm,fill=white},xscale=0.8]
\draw[thick,gray,-] (0,0) -- (2,0);
\slin{0,0}{}{$0$};
\slin{1,0}{$A,\!B$}{$1$};
\slin{2,0}{$B,\!C$}{$2$};
\end{tikzpicture}
\begin{tikzpicture}[nd/.style={draw,thick,circle,inner sep=0pt,minimum size=1.5mm,fill=white}, node distance=18mm]
\tikzset{every state/.style={minimum size=0pt}}
%
%
%
\node[state, label=above:{\scriptsize $\emptyset$}] (s0) {};
\node[state, right of = s0, label=above:{\scriptsize $\{A,B\}$}] (s1) {} edge[above, <-] node{\scriptsize $\Sigma^\bot$} (s0);
\node[state, right of = s1, label=above:{\scriptsize $\{B, C\}$}] (s2) {} edge[above, <-] node{\scriptsize $\Sigma^\bot$} (s1);
\draw[->] (s0) edge [bend right=15] node[below]  {\scriptsize $\{A, B\}$} (s2);
\node[state, right of = s2, label=above:{\scriptsize $\emptyset$}] (sink) {} edge[above, <-] node{\scriptsize $\Sigma^\bot$} (s2)
edge [bend left=15, <-] node[below]  {\scriptsize $\{B, C\}$} (s1)
edge [bend right=30, <-] node[below]  {\scriptsize $\{B\}$} (s0)
edge [loop below, <-] node[left]  {\scriptsize $\Sigma^\bot$} (s2)
;
\end{tikzpicture}
}\\
We form the direct product (synchronous composition) $\mathfrak P$ of $\{S^i \mid i \in I^+\}$, for $I^+ = \{1, \dots, l\}$, whose states are vectors $(s_1, \dots, s_l)$ of states $s_i \in S^i$, which are labelled by the intersection of the labels of $s_i$ in $S^i$, with transitions $(s_1, \dots, s_l) \to (p_1, \dots, p_l)$, if $s_i \to p_i$ in $S^i$ for all $i$, also labelled by the intersection of the component transition labels. On the other hand, we take the disjoint union $\mathfrak N$ of $S^i$, for $i \in I^-$, and establish the following separability criterion:
\begin{theorem}\label{criterionUsA}
$(i)$ $E$ is not $\Q[\Us]$-separable iff $\mathfrak P$ is simulated by $\mathfrak N$.
$(ii)$ $E$ is not $\Qp[\U]$-separable iff $\mathfrak P$ is contained in $\mathfrak N$.
\end{theorem}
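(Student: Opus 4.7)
The plan is to prove both parts through a tight correspondence between $\U$-queries and runs (for path queries) or finite subtrees (for simple branching queries) of the transition systems $S^i$. The key technical observation I would pin down first is that a $\Qp[\U]$-query
$\q = \mu_0 \land (\nu_1 \U (\mu_1 \land (\nu_2 \U \cdots (\nu_n \U \mu_n)\cdots)))$
is satisfied by $\D$ at $0$ iff $S^\D$ has a run $0^\D \to s_1 \to \dots \to s_n$ whose state labels subsume the $\mu_i$ and whose edge labels subsume the $\nu_i$. This uses the edge-label convention (intersection of atoms true on the open interval, with $\bot$ included exactly when the interval is empty, so that $\nu_i = \bot$ matches precisely the edges $j \to j+1$) and the sink $(\max\D{+}1)^\D$ with label $\emptyset$, which accommodates witnesses whose $\mu_i$ are $\top$ and which extend past $\max\D$. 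An analogous observation, where a run is replaced by a finite subtree, covers $\Q[\Us]$-queries written in the normal form $\rho \land \bigwedge_j (\lambda_j \U \psi_j)$ with each $\psi_j \in \Q[\Us]$: such a query unfolds into a finite labelled tree, and its satisfaction at $0$ in $\D$ is exactly a label-subsumption-preserving embedding of that tree into $\mathfrak T_{S^\D}$.

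Given this correspondence, I would introduce, for each run $r$ of $\mathfrak P$ with state labels $\mu_0,\dots,\mu_n$ and edge labels $\nu_1,\dots,\nu_n$, its \emph{canonical} query $\q_r$ defined by the formula above, and likewise, for each finite subtree $T$ of $\mathfrak T_{\mathfrak P}$, its canonical $\Q[\Us]$-query $\q_T$. Since the labels of $\mathfrak P$ are componentwise intersections of labels of the $S^i$, $i \in I^+$, the projection of $r$ (resp.\ of $T$) to any $S^\D$, $\D \in E^+$, is a witness whose labels subsume those of $r$ (resp.\ $T$); hence $\q_r$ (resp.\ $\q_T$) is satisfied at $0$ in every positive $\D$. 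Conversely, given a separating query $\q$, the simultaneous witnesses for $\q$ in the individual $S^\D$ for $\D \in E^+$ assemble into a run (resp.\ subtree) of $\mathfrak P$ whose labels subsume those of $\q$.

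It remains to combine these two directions with the negative side. For $(ii)$, $\q_r$ fails to separate iff some $\D \in E^-$ satisfies it, i.e.\ some run of $S^\D \subseteq \mathfrak N$ of length $n$ has labels subsuming those of $r$; so $E$ is $\Qp[\U]$-separable iff some run of $\mathfrak P$ is not subsumed by any equal-length run of $\mathfrak N$, which is exactly the negation of containment. For $(i)$, the analogous argument with finite subtrees and label-preserving homomorphisms into $\mathfrak T_{\mathfrak N}$ yields the negation of simulation. The main obstacle, and what I would handle most carefully, is the precise matching at the boundary: aligning $\lambda_i = \bot$ with $\nxt$-steps via the edge-label definition, treating $\top$-labelled positions via the sink (and making sure the sink's self-loop labelled $\Sigma^\bot$ is exactly what lets canonical queries survive length-alignment in $\mathfrak N$), and verifying that the synchronous intersection of labels in $\mathfrak P$ indeed delivers the required uniform lower bound across all positive examples.
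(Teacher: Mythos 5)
Your proposal is correct and follows essentially the same route as the paper: the paper factors the argument through an abstract notion of a transition system \emph{representing} a data instance (condition $(i)$ being your ``every run/subtree's canonical query is satisfied'' direction and condition $(ii)$ your converse embedding direction), then proves the criterion by exactly your projection-to-components and witness-assembly arguments for $\mathfrak P$ and $\mathfrak N$. The only difference is presentational: the paper's abstraction lets it reuse the same lemma later for Horn and full \LTL{} ontologies, while you argue directly for the ontology-free case with the same canonical queries $\q_r$, $\q_T$ and the same boundary conventions for $\bot$-labels and the sink.
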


\begin{example}\em \label{ex:prod-unrav}
For the example set depicted below, in which the only negative instance is on the right-hand side,\\
\centerline{
\begin{tikzpicture}[nd/.style={draw,thick,circle,inner sep=0pt,minimum size=1.5mm,fill=white},xscale=0.6]
\draw[thick,gray,-] (0,0) -- (5,0);
\slin{0,0}{}{$0$};
\slin{1,0}{}{$1$};
\slin{2,0}{}{$2$};
\slin{3,0}{}{$3$};
\slin{4,0}{$A_2,\!B_1$}{$4$};
\slin{5,0}{$B_2$}{$5$};
\end{tikzpicture}
\
\begin{tikzpicture}[nd/.style={draw,thick,circle,inner sep=0pt,minimum size=1.5mm,fill=white},xscale=0.6]
\draw[thick,gray,-] (0,0) -- (3,0);
\slin{0,0}{}{$0$};
\slin{1,0}{}{$1$};
\slin{2,0}{$A_1,\!B_2$}{$2$};
\slin{3,0}{$B_1$}{$3$};
\end{tikzpicture}
\
\begin{tikzpicture}[nd/.style={draw,thick,circle,inner sep=0pt,minimum size=1.5mm,fill=white},xscale=0.6]
\draw[thick,gray,-] (0,0) -- (4,0);
\slin{0,0}{}{$0$};
\slin{1,0}{}{$1$};
\slin{2,0}{\scriptsize $B_1$}{$2$};
\slin{3,0}{}{$3$};
\slin{4,0}{$B_2$}{$4$};
\end{tikzpicture}
}\\
$\mathfrak T_{\mathfrak P}$ contains the subtree \\[-7pt]
\centerline{
\begin{tikzpicture}[->,thick,node distance=1.8cm, transform shape,scale=0.9]\footnotesize
\node[rectangle] (s00) {\scriptsize$(0^1,0^2)$};
\node[rectangle, right of = s00] (s31) {\scriptsize$(3^1,1^2)$} edge[above, <-] node{\scriptsize$\emptyset$} (s00);
\node[rectangle, above right of = s31, label=right:{\scriptsize$B_1$}] (s43) {\scriptsize$(4^1,3^2)$} edge[left, <-] node{\scriptsize$A_1, B_2$} (s31);
\node[rectangle, right of = s31, label=right:{\scriptsize$B_2$}] (s52) {\scriptsize$(5^1,2^1)$} edge[above, <-] node{\scriptsize$A_2, B_1$} (s31);
\end{tikzpicture}
}\\
where only the last $\mathfrak P$-node of a $\mathfrak T_{\mathfrak P}$-node (a sequence) is indicated together with the atoms that are true at nodes and on edges. Intuitively, $\mathfrak T_{\mathfrak P}$ `represents' all possible $\Q[\Us]$-queries and its paths represent $\Qp[\U]$-queries $\varkappa$ such that $\TO, \D \models \varkappa(0)$ for all $\D \in E^+$. The $\Q[\Us]$-query given by the subtree above is $\varkappa = \Diamond \big( ((A_1 \land B_2) \U B_1) \land ((A_2 \land B_1) \U B_2)\big)$. The subtree is not embeddable into $\mathfrak T_\mathfrak{N}$ (obtained for the negative instance), so $\varkappa$ separates $E$. Observe that every path in $\mathfrak T_{\mathfrak P}$ (and in the subtree above) is embeddable into $\mathfrak T_\mathfrak{N}$.
 \end{example}

By inspecting the structure of $\mathfrak P$ and $\mathfrak N$ we observe that if $\mathfrak P$ has a run that is not embeddable into any run of $\mathfrak N$, then we can find such a run of length $\le M = \min \{ \max \Abox_i \mid i \in I^+\}$ (any longer run has $\emptyset$-labels on its states after the $M$th one). Thus, we can guess the required run and check in \PTime{} if it is correct, establishing the \NP{} upper bound for $\mathcal{Q}_{p}[\U]$. To show the \PSpace{} upper bound for $\mathcal{Q}[\Us]$, we notice that if there is a finite subtree of $\mathfrak T_{\mathfrak P}$ that is not embeddable into $\mathfrak T_{\mathfrak N}$, then the full subtree $\smash{\mathfrak T_{\mathfrak P}^M}$ of depth $M$ is not embeddable into $\mathfrak T_{\mathfrak N}$, which can be checked by constructing $\smash{\mathfrak T_{\mathfrak P}^M}$ branch-by-branch while checking all possible embeddings of these branches into $\mathfrak T_{\mathfrak N}$. Finally, we have the \PTime{} upper bound for $\mathcal{Q}[\Us]$ with a bounded number of positive examples because $\mathfrak P$ is  constructible in polytime and checking simulation between transition systems is \PTime{}-complete~\cite{DBLP:conf/cav/KupfermanV96a}. Interestingly, the smallest separating query we can construct in this case is of the same size as $\smash{\mathfrak T_{\mathfrak P}^M}$, i.e., exponential in $|E^+|$; however, we can check its existence in polytime.

The \PSpace{} upper bound for $\mathcal{Q}[\U]$ requires a more sophisticated notion of simulation between transition systems.

\begin{example}\em \label{ex:u-path-not-treeB}
The example set below, where only the rightmost instance is negative, is separated by the $\mathcal{Q}[\U]$-query\\
\centerline{
\begin{tikzpicture}[nd/.style={draw,thick,circle,inner sep=0pt,minimum size=1.5mm,fill=white},xscale=0.6]
\draw[thick,gray,-] (0,0) -- (2,0);
\slin{0,0}{}{\scriptsize$0$};
\slin{1,0}{}{\scriptsize$1$};
\slin{2,0}{\scriptsize$B,C$}{\scriptsize$2$};
\end{tikzpicture}
\
\begin{tikzpicture}[nd/.style={draw,thick,circle,inner sep=0pt,minimum size=1.5mm,fill=white},xscale=0.6]
\draw[thick,gray,-] (0,0) -- (4,0);
\slin{0,0}{}{\scriptsize$0$};
\slin{1,0}{}{\scriptsize$1$};
\slin{2,0}{\scriptsize$A$}{\scriptsize$2$};
\slin{3,0}{\scriptsize$B$}{\scriptsize$3$};
\slin{4,0}{\scriptsize$B,C$}{\scriptsize$4$};
\end{tikzpicture}
\
\begin{tikzpicture}[nd/.style={draw,thick,circle,inner sep=0pt,minimum size=1.5mm,fill=white},xscale=0.6]
\draw[thick,gray,-] (0,0) -- (5,0);
\slin{0,0}{}{\scriptsize$0$};
\slin{1,0}{}{\scriptsize$1$};
\slin{2,0}{\scriptsize$A$}{\scriptsize$2$};
\slin{3,0}{\scriptsize$B$}{\scriptsize$3$};
\slin{4,0}{}{\scriptsize$4$};
\slin{5,0}{\scriptsize$B,C$}{\scriptsize$5$};
\end{tikzpicture}
}\\
$(A \U B) \U C$ but is not $\mathcal{Q}[\Us]$-separable by Theorem~\ref{criterionUsA}. \hfill $\dashv$
\end{example}
%

We prove a $\mathcal{Q}[\U]$-inseparability criterion using transition systems whose non-initial/sink states are \emph{pairs} of sets of numbers, and transitions are of \emph{two} types. The  picture below shows a data instance and the induced transition system (where $z$ has incoming arrows labelled by $\Sigma^\bot$ from all states \\
\centerline{
\begin{tikzpicture}[thick, transform shape]
\begin{scope}[nd/.style={draw,thick,circle,inner sep=0pt,minimum size=1.5mm,fill=white}, xshift = -1.2cm, yshift = .7cm, local bounding box=b]
\draw[thick,gray,-] (0,0) -- (1.5,0);
\slin{0,0}{}{\scriptsize $0$};
\slin{.5,0}{\scriptsize $A$}{\scriptsize $1$};
\slin{1,0}{\scriptsize $B$}{\scriptsize $2$};
\slin{1.5,0}{\scriptsize $B,\!C$}{\scriptsize $3$};
\end{scope}
\draw[thin, rounded corners=5pt] (b.south west) rectangle (b.north east);
\tikzset{every state/.style={minimum size=0pt}}
%
%
%
\node[state, label = below:{\scriptsize $0$}] at (0,0) (zero) {};
\node[state, label = above:{\scriptsize $\emptyset \{ 1 \}$},  label = right:{\scriptsize $\!\!A$}] at (2,0)(e1) {};
\node[state, label = above:{\scriptsize $\{ 1 \}\{ 2 \}$}, label = right:{\scriptsize $\!\!B$}] at (4,0) (one2) {};
\node[state, label = above:{\scriptsize $\emptyset\{ 2 \}$}, label = right:{\scriptsize $B$}] at (0,2) (e2) {};
\node[state, label = above:{\scriptsize $\{ 2 \}\{ 3 \}$}, label = left:{\scriptsize $B, C$}] at (2,2) (two3) {};
\node[state, label = above:{\scriptsize $\emptyset\{ 3 \}$}, label = right:{\scriptsize $B, C$}] at (4,2) (e3) {};
\node[state, label = above:{\scriptsize $\{ 1,2 \}\{ 3 \}$}, label = right:{\scriptsize $B, C$}] at (6,2) (onetwo3) {};
\node[state, label = above:{\scriptsize $\emptyset\{2, 3 \}$}, label = right:{\scriptsize $B$}] at (6,0) (e23) {};
\node[state, label = above right:{\scriptsize $u$}, label = below:{\scriptsize $\!\!\!\!\!\Sigma^\bot$}] at (5,1.2) (u) {};
\node[state, label = above right:{\scriptsize $z$}, label = below right:{\scriptsize $\emptyset$}] at (7,1) (z) {};
%
%
\path (zero) edge[->] (e1);
\path (zero) edge[->, bend right=15] node[below] {\scriptsize$A$}(one2);
\path (zero) edge[->, bend right=60] node[below] {\scriptsize$\emptyset$}(onetwo3);
%
%
\path (e1) edge[->] (e2);
\path (e1) edge[->, bend left=38] node[above right] {\scriptsize$B$}(two3);
%
%
%
\path (e2) edge[->, bend left=30] (e3);
\path (one2) edge[->, bend left=53] (e3);
\path[red] (two3) edge[->] (e3);
\path[red] (one2) edge[->, bend left=13] node[right] {\scriptsize$B$}(two3);
\path[red] (onetwo3) edge[->, bend left=50] (e23);
\path[red] (one2) edge[->, bend left=10] (e2);
\path[dotted] (6.6,1.5) edge[->] (z);
\path[dotted, red] (6.6,0.5) edge[->] (z);
\path[red] (e2) edge[->, bend right=5] (u);
\path[red] (e3) edge[->, bend right=0] (u);
\path[red] (e1) edge[->, bend right=80] (u);
\path[red] (e23) edge[->, bend left=20] (u);
\path[red] (u) edge[->, loop above] (u);
\path (u) edge[->, loop right] (u);
\path[red] (z) edge[->, bend left=30] (u);
\path[red] (z) edge[->, loop above] (z);
\path (z) edge[->, loop below] (z);
\end{tikzpicture}
}\\[-6mm]
but $u$, which are all omitted).
Each arrow from $0$ leads to a state $\{1, \dots, n-1 \}\{n\}$; it represents a formula $\varphi \U \psi$ that is true at $0$, with the arrow label indicating the non-nested atoms of $\varphi$ and the state label indicating the atoms of $\psi$. Each black (resp., red) arrow from $\avec{s}_1\avec{s}_2$ to $\avec{s}'_1\avec{s}'_2$ represents a $\U$-formula $\alpha_{\avec{s}_2 \to \avec{s}'_1\avec{s}'_2}$ (resp., $\alpha_{\avec{s}_1 \to \avec{s}'_1\avec{s}'_2}$) that is true at all points in $\avec{s}_2$ (resp., $\avec{s}_1$).
The black and red transitions are arranged in such a way that a transition from $\avec{s}''_1\avec{s}''_2$ to $\avec{s}_1\avec{s}_2$ with an arrow label $\lambda$ and $\avec{s}_1\avec{s}_2$-label $\mu$ represents the $\U$-formula $(\lambda \land \bigwedge \alpha_{\avec{s}_1 \to \avec{s}'_1\avec{s}'_2}) \U (\mu \land \bigwedge \alpha_{\avec{s}_2 \to \avec{s}'_1\avec{s}'_2})$ and similarly for the transitions from $0$. A version of Theorem~\ref{criterionUsA} for $\mathcal{Q}[\U]$ and a \PSpace-algorithm are given in the full paper.

%
%
%
%

\section{QBE with $\LTL_{\smash{\horn}}\Xallop$-Ontologies}
\label{sec:horn}
Recall from~\cite{DBLP:journals/ai/ArtaleKKRWZ21} that, for any $\LTL_\horn\Xallop$-ontology $\TO$ and data instance $\Abox$ consistent with $\TO$, there is a \emph{canonical model} $\mathcal C_{\TO, \Abox}$ of $\TO$ and $\Abox$ such that, for any query $\varkappa$ and any $k \in \mathbb N$, we have
$\TO, \Abox \models \varkappa(k)$ iff $\mathcal C_{\TO, \Abox} \models \varkappa(k)$.

Let $\sub_{\TO}$ be the set of subformulas of the $C_i$ in the axioms~\eqref{axiomH} of $\TO$ and their negations. A \emph{type for} $\TO$ is any maximal subset $\tp \subseteq \sub_{\TO}$ consistent with $\TO$. Let $\Type$ be the set  of all types for $\TO$. Given an interpretation $\I$, we denote by $\tp_{\I}(n)$ the type for $\TO$ that holds at $n \in \mathbb N$ in $\I$.
For $\TO$ consistent with $\Abox$, we abbreviate $\tp_{\C_{\TO,\Abox}}$ to $\tp_{\TO,\Abox}$. The   canonical models have a periodic structure in the following sense:
\begin{proposition}\label{prop:period}
For any $\LTL_{\smash{\horn}}\Xallop$ ontology $\TO$ and any data instance  $\Abox$ consistent with $\TO$, there are $s_{\TO,\Abox} \le 2^{|\TO|}$ and $p_{\TO,\Abox} \le 2^{2|\TO|}$ such that $\tp_{\TO,\Abox}(n) = \tp_{\TO,\Abox}(n + p_{\TO,\Abox})$, for all $n \ge \max \Abox + s_{\TO,\Abox}$.
%
%
Deciding $\mathcal C_{\TO, \Abox} \models \xi(\ell)$, for a binary $\ell$ and a conjunction of atoms $\xi$, is in $\PSpace$\,/\,$\PTime$ for combined\,/\,data complexity.
\end{proposition}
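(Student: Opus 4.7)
The plan is to use the canonical model $\C_{\TO,\Abox}$ recalled above and to exploit the determinism of the Horn chase together with a pigeonhole argument on a suitably chosen state. Since $\Diamond$ on the left-hand side of axioms has already been normalised away, we may assume every axiom uses only $\nxt$ and $\Box$ on the LHS; in particular $\sub_{\TO}$ is closed under these two operators, and each element of $\Type$ is a subset of $\sub_{\TO}$, so $|\Type| \le 2^{|\TO|}$.

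First, I would establish that for $n \ge \max\Abox$ the type $\tp_{\TO,\Abox}(n+1)$ is a deterministic function of $\tp_{\TO,\Abox}(n)$. The key observation is that any $\Box$-obligation about the future inherited from some earlier point is already encoded in $\tp_{\TO,\Abox}(n)$ (by the semantics of $\Box$ and the minimality of $\C_{\TO,\Abox}$), while the $\nxt$-triggers of axioms look back exactly one step; combined with the Horn shape of the axioms---which forces a unique least derivation---this yields a well-defined transition $\tp(n) \mapsto \tp(n+1)$ on the finite set $\Type$. Ultimate periodicity of $(\tp_{\TO,\Abox}(n))_{n \ge \max\Abox}$ is then immediate.

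For the concrete bounds, I would observe that the restriction of $\tp(n)$ to $\Box$-formulas is monotone---if $\Box C \in \tp(n)$ then $\Box C \in \tp(n+1)$---so it stabilises after at most $2^{|\TO|}$ steps, yielding $s_{\TO,\Abox} \le 2^{|\TO|}$. After this stabilisation the transition is a deterministic map on the remaining part of the type; for a loose bound, a pigeonhole on the pair of consecutive types $(\tp(n),\tp(n+1))$, of which there are at most $2^{2|\TO|}$, already delivers $p_{\TO,\Abox} \le 2^{2|\TO|}$.

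For the complexity of deciding $\C_{\TO,\Abox} \models \xi(\ell)$ with $\ell$ in binary, the approach is to first use periodicity to replace $\ell$, if necessary, by some $\ell' \le \max\Abox + s_{\TO,\Abox} + p_{\TO,\Abox}$ with $\tp(\ell')=\tp(\ell)$, and then to compute $\tp_{\TO,\Abox}(\ell')$ by iterating the one-step transition, keeping in memory only the current type (of size $O(|\TO|)$) and a binary step counter. The number of iterations is singly exponential in $|\TO|$ plus polynomial in $|\Abox|+|\ell|$, and each step closes the current type under the axioms of $\TO$ in polynomial time, giving $\PSpace$ for combined complexity. For data complexity, $\TO$ is fixed so $s_{\TO,\Abox}$ and $p_{\TO,\Abox}$ are constants and the whole transition table can be precomputed; only the prefix up to $\max\Abox + O(1)$ needs to be simulated, with the answer produced in $\PTime$. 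The main obstacle will be formally justifying the one-step determinism past $\max\Abox$ in the presence of deeply nested $\nxt$ and $\Box$ operators, which requires an induction on formula nesting using the minimality of $\C_{\TO,\Abox}$ to rule out spurious atoms.
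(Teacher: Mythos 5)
First, a point of reference: the paper does not actually prove this proposition --- both the canonical model $\C_{\TO,\Abox}$ and its ultimately periodic shape are recalled from the cited literature (Artale et al.\ 2021; see also Ryzhikov et al.\ 2021) --- so there is no in-paper argument to compare yours against, and your proposal must stand on its own. Its overall shape is right (ultimate periodicity via a deterministic type transition past $\max\Abox$ plus pigeonhole, with monotonicity of the $\Box$-component), but the step you yourself defer as ``the main obstacle'' is in fact the entire content of the periodicity claim, and your sketch of it points in the wrong direction. You say the $\nxt$-premises ``look back exactly one step''; in $\LTL_\horn\Xallop$ an axiom with $\nxt C$ on the left fires at $n$ when $C$ holds at $n{+}1$, i.e.\ it propagates information \emph{backwards}, and $\Box C \in \tp_{\TO,\Abox}(n)$ is a property of the entire infinite future. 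This is precisely why it is not obvious that $\tp_{\TO,\Abox}(n{+}1)$ is a function of $\tp_{\TO,\Abox}(n)$ for $n \ge \max\Abox$: an atom at $n{+}1$ may have a well-founded derivation that travels several steps beyond $n{+}1$ and then returns through $\nxt$-premises. The induction showing that every such derivation factors through subformulas already recorded in $\tp_{\TO,\Abox}(n)$ (using closure of $\sub_{\TO}$ under the relevant prefixes and monotonicity of the $\Box$-part) is exactly what you omit; without it there is no transition function and hence no pigeonhole. Note also that monotonicity of the $\Box$-part bounds only how many times it can change, not \emph{where} it stabilises, so your derivation of $s_{\TO,\Abox}\le 2^{|\TO|}$ itself presupposes the determinism (enter a cycle within $2^{|\TO|}$ steps; on a cycle the monotone $\Box$-part must be constant).

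The algorithmic part also does not work as described. You propose to iterate the one-step transition from position $0$, keeping only the current type and ``closing it under the axioms in polynomial time'' at each step. But (a) on the prefix $[0,\max\Abox]$ there is no forward transition function at all: because of backward $\nxt$-premises, $\tp_{\TO,\Abox}(i)$ for $i\le\max\Abox$ depends on data and derived atoms at later positions, so the canonical model on the prefix must be obtained as a global fixed point over the whole prefix-plus-periodic-suffix structure, not by a single left-to-right pass; and (b) even on the suffix the transition is only well-defined, not locally computable: deciding whether $\Box C$ or $\nxt C$ belongs to $\tp_{\TO,\Abox}(n{+}1)$ again requires information about positions beyond $n{+}1$. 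The $\PSpace$ upper bound is standardly obtained via Horn-\LTL{} entailment (Chen and Lin) or a nondeterministic guess-and-verify of the type sequence rather than a deterministic forward simulation; your data-complexity observation that everything depending only on the fixed $\TO$ is of constant size is fine, but the prefix still requires the fixed-point computation.
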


We now show that the combined complexity of $\QBE$ with $\Diamond$- and $\nxt,\Diamond$-queries is \PSpace-complete in both bounded and unbounded cases, i.e., as complex as $\LTL_{\smash{\horn}}\Xallop$ reasoning.

\begin{theorem}\label{thm:Horn-diam}
Let $\mathcal{Q}\in \{\mathcal{Q}[\nxt,\Diamond],\mathcal{Q}[\Diamond],\mathcal{Q}_{p}[\nxt, \Diamond],\mathcal{Q}_{p}[\Diamond]\}$. Then $\mathsf{QBE}(\LTL_{\smash{\horn}}\Xallop,\mathcal{Q})$ and $\QBEba\!(\LTL_{\smash{\horn}}\Xallop,\mathcal{Q})$ are both \PSpace-complete for combined complexity.
\end{theorem}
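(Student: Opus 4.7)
The plan is to prove matching \PSpace{} upper and lower bounds. The upper bound rests on the canonical model property (Proposition~\ref{prop:period}) together with an \emph{exponential separation property} (ESP), while the lower bound follows by reduction from a known \PSpace-hard reasoning task for $\LTL_{\smash{\horn}}\Xallop$-ontologies.

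For the upper bound, I would first invoke Theorem~\ref{th:data-reduction}(i.3) to reduce each of the four query classes to its path-query counterpart, either $\Qp[\nxt,\Diamond]$\,/\,$\mathcal{Q}_{p}^{\circ}[\Diamond]$ or $\Qp[\Diamond]$. By the canonical model property, satisfaction of a path query $\rho_0 \land \Diamond(\rho_1 \land \Diamond(\cdots \Diamond\rho_n))$ in $(\TO,\Abox)$ reduces to finding a strictly monotone sequence of positions in $\mathcal{C}_{\TO,\Abox}$ at which the $\rho_i$ hold, and by Proposition~\ref{prop:period} each such canonical model is eventually periodic with parameters of size $2^{O(|\TO|)}$. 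The central lemma I would establish is the ESP: if $E$ is separable under $\TO$, then it is separated by a path query of depth at most $N = 2^{O(|\TO|+|E|)}$. The argument is a pumping-cycle cut\,---\,in any longer separating query, the joint configuration of witness positions in the positive canonical models together with the \emph{minimum still-reachable positions} in the negative canonical models must revisit a previous value, and the corresponding segment of the query can then be excised without destroying separation.

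With the ESP in hand, the upper bound follows from a nondeterministic polynomial-space procedure that guesses the query one $\rho_i$ at a time. For each positive $\Abox_i^+$ it stores a \emph{witness position} $p_i^+$ in $\mathcal{C}_{\TO,\Abox_i^+}$ and advances it nondeterministically to some strictly greater position satisfying the just-guessed $\rho_i$; for each negative $\Abox_j^-$ it stores the \emph{minimum still-reachable position} $m_j^-$, or $\infty$ if already killed, updating it to the least position $> m_j^-$ at which $\rho_i$ holds. The key observation legitimising this compact state is that, for path queries, the set of positions from which the remaining query can still be satisfied is upward-closed and hence determined by its minimum. Each position is encoded either as a pre-periodic index or as a (period-count, offset) pair of polynomial bit-length, since periods and $N$ are only exponential. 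Locating the next position satisfying $\rho_i$ or checking its satisfaction at a given position is in \PSpace{} by Proposition~\ref{prop:period}. The algorithm halts and accepts once every $p_i^+$ realises the final $\rho_n$ and every $m_j^-$ equals $\infty$; the number of steps is capped by $N$, and Savitch's theorem converts the procedure to deterministic \PSpace.

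For \PSpace-hardness I would reduce from a standard \PSpace-hard entailment problem for $\LTL_{\smash{\horn}}\Xallop$, such as atomic entailment $\TO,\Abox \models A(k)$. Using fresh signature symbols one crafts a single positive and a single negative example differing only by a marker that the ontology can ``remove'' precisely when the chosen atomic fact is entailed, so that separability of the pair coincides with the entailment and is witnessed by a query whose shape lies in all four classes (for instance of the form $\Diamond^{k}A$ with appropriate padding); hardness then transfers uniformly to $\QBE$ and, since only one positive and one negative example are used, also to $\QBEba$. The main obstacle will be the tight ESP: the pumping step must cut a cycle in the joint configuration while simultaneously preserving the \emph{existence} of witness sequences in every positive canonical model \emph{and} the \emph{non-existence} of witness sequences in every negative one, and compatibility with the distinct exponential periods of all canonical models at once is what forces the separating-query depth up to exponential, rather than polynomial, in $|\TO|+|E|$.
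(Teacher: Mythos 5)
Your proposal follows essentially the same route as the paper's proof: reduce to path queries via Theorem~\ref{th:data-reduction}(i.3), use the periodicity of canonical models (Proposition~\ref{prop:period}) to establish an exponential separation property, run a nondeterministic polynomial-space procedure that guesses the separating query level by level while tracking witness positions for the positive examples and minimal still-reachable positions for the negative ones (all storable in polynomially many bits), and inherit \PSpace-hardness from $\LTL_{\smash{\horn}}\Xallop$ reasoning. The one place needing more care is the $\nxt,\Diamond$ classes, where a conjunct $\rho_i$ in the normal form~\eqref{normF} may have exponential $\nxt$-depth and hence cannot be guessed ``one $\rho_i$ at a time'' in polynomial space --- the paper interleaves the guessing of individual $\nxt$-levels with the position updates in a two-mode variant of your algorithm --- and your claim that the reachable positions are ``determined by the minimum'' is sound at $\Diamond$-boundaries but not level-by-level inside a $\nxt$-block.
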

\begin{proof}
\PSpace-hardness is inherited from that of $\LTL_{\smash{\horn}}\Xallop$. We briefly sketch the proof of the matching upper bound for $\mathcal{Q}[\nxt,\Diamond]$ using the reduction of Theorem~\ref{th:data-reduction} $(i.3)$. We can assume that $\TO$ and $\D$ are consistent for any $\D \in E^+ \cup E^-$. For if $\TO$ and $\D \in E^-$ are inconsistent, then $E$ is not $\Q$-separable under $\TO$ as $\TO, \D \models \varkappa(0)$ for all $\varkappa \in \Q$; if $\TO$ and $\D \in E^+$ are inconsistent, then $E$ is separable iff $(E^+\setminus \{\D\}, E^-)$ is. Checking consistency is known to be $\PSpace{}$-complete.

Given an $\LTL_{\smash{\horn}}\Xallop$-ontology $\TO$ and an example set $E$, let
$$
k = \max_{\mathcal{D} \in E^+ \cup E^-} (\max \mathcal{D} + s_{\TO,\mathcal{D}}), \quad \textstyle m = \prod_{\mathcal{D} \in E^+ \cup E^-} p_{\TO,\mathcal{D}},
$$
where $s_{\TO,\mathcal{D}}$ and $p_{\TO,\mathcal{D}}$ in $\mathcal{C}_{\TO,\mathcal{D}}$ are from  Proposition~\ref{prop:period}. We show that if $E$ is $\mathcal{Q}[\nxt,\Diamond]$-separable under $\TO$, then it is separated by a conjunction of $|E^-|$-many $\varkappa\in\mathcal{Q}_{p}^{\circ}[\Diamond]$ of $\Diamond$-depth $\leq k+1$ and $\nxt$-depth $\leq k+m$ in~\eqref{normF}. Indeed, in this case any $(E^{+},\{\mathcal{D}\})$, for $\mathcal{D}\in E^{-}$, is separated under $\TO$ by some $\varkappa$ of the form~\eqref{normF} with the $\rho_l$ of $\nxt$-depth $\le k+m$ because $\rho_l = \bigwedge_{i=0}^\ell \nxt^i \lambda_i$ with $\ell > k+m$ can be replaced by
$$
\textstyle\bigwedge_{i=0}^k \nxt^i \lambda_i \land \bigwedge_{j=1}^m \nxt^{k+j} 
\bigwedge_{i \le \ell, j=(i-k) \ \text{mod} \ m} \lambda_i.
$$
In addition, if $n > k$ in~\eqref{normF}, then $(E^{+},\{\mathcal{D}^{-}\})$ is separated by
$$
\textstyle\rho_0 \land \Diamond (\rho_1 \land \Diamond (\rho_2 \land \dots \land \Diamond \rho_k)) \land \bigwedge_{i=k+1}^n \Diamond^{k+1} \rho_i ,
$$
and so by some
$\rho_0 \land \Diamond (\rho_1 \land \dots \land \Diamond (\rho_k \wedge \Diamond \rho_{j}))$ with  \mbox{$k<j\leq n$}. Our nondeterministic \PSpace-algorithm incrementally guesses the $\rho_l$ and checks if they are satisfiable in the relevant part of the relevant $\mathcal{C}_{\mathcal{O},\mathcal{D}}$ bounded by $k + m$.
\end{proof}

The situation is quite different for queries with $\U$:

\begin{theorem}\label{thm:Horn-U}
$\mathsf{QBE}(\LTL_\horn\Xallop, \mathcal{Q}[\Us])$ is in \ExpTime{} for combined complexity,
$\mathsf{QBE}(\LTL_\horn\Xallop, \mathcal{Q}_{p}[\U])$ is in \ExpSpace{}, and
$\smash\QBEba(\LTL_\horn\Xallop, \mathcal{Q}_{p}[\U])$ is \NExpTime{}-hard.
\end{theorem}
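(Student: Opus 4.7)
The plan is to lift the transition-system characterisation of Theorem~\ref{criterionUsA} from bare data instances to canonical models. By Proposition~\ref{prop:period}, each $\mathcal{C}_{\TO,\Abox_i}$ admits a finite lasso-shaped representation of size at most exponential in $|\TO|+|\Abox_i|$: a stem of length $\max\Abox_i + s_{\TO,\Abox_i}$ followed by a cycle of length $p_{\TO,\Abox_i}$. From this I build a transition system $S^i_{\TO}$ whose states encode the positions in this representation, labelled by the restriction of $\tp_{\TO,\Abox_i}$ to $\sig(E)$, and whose non-unit edges $j^i \to k^i$ are labelled by the atoms true at all points strictly between $j$ and $k$, exactly as in Section~\ref{sec:no-ont}; the wrap-around of the lasso needs some care but the resulting $S^i_{\TO}$ remain exponentially sized.

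Taking the product $\mathfrak{P}$ of $\{S^i_{\TO} : \Abox_i \in E^+\}$ and the disjoint union $\mathfrak{N}$ of $\{S^i_{\TO} : \Abox_i \in E^-\}$, I then prove the ontology-mediated analogue of Theorem~\ref{criterionUsA}: $E$ is not $\mathcal{Q}[\Us]$-separable under $\TO$ iff $\mathfrak{P}$ is simulated by $\mathfrak{N}$, and analogously for $\mathcal{Q}_{p}[\U]$ using containment in place of simulation. Soundness and completeness go through as in Section~\ref{sec:no-ont}, using the canonical-model property of Proposition~\ref{prop:period} to reduce $\TO, \Abox \models \varkappa(0)$ to $\mathcal{C}_{\TO,\Abox} \models \varkappa(0)$. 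Since simulation of transition systems is decidable in polynomial time and containment in \PSpace{} in the size of the systems~\cite{DBLP:conf/cav/KupfermanV96a}, and $\mathfrak{P}$, $\mathfrak{N}$ are of exponential size, this gives \ExpTime{} for $\mathcal{Q}[\Us]$ and \ExpSpace{} for $\mathcal{Q}_{p}[\U]$.

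For the \NExpTime-hardness of $\QBEba(\LTL_\horn\Xallop, \mathcal{Q}_{p}[\U])$, the plan is to reduce from the $2^n \times 2^n$ tiling problem. A Horn ontology $\TO$ implements a binary counter using $\nxt$ and $\Box$ axioms, forcing the canonical model of a single positive data instance to exhibit an exponentially long sequence of cell-positions naturally indexed by grid coordinates, at each of which all candidate tile-type atoms are derivable. A bounded number of negative data instances is designed to block every local tiling violation, so that a separating path $\U$-query of the form~\eqref{upath} must walk through all $2^{2n}$ cells committing to one tile per cell while avoiding every forbidden horizontal or vertical adjacency---i.e., must encode a valid tiling. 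The main obstacle is to engineer the ontology and the negative examples so that a \emph{bounded} number of them, combined with the path shape of~\eqref{upath}, already rules out every invalid tiling: this is where the $\lambda_i$ left components of the until operators and the strict semantics are used to enforce the behaviour on the stretches between successive cell markers, and where the absence of conjunction on the top level of~\eqref{upath} forces the query to commit to a single linear walk through the grid rather than split into independent branches.
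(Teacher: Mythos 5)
Your upper-bound argument is essentially the paper's: the paper also builds, from the lasso-shaped canonical model of Proposition~\ref{prop:period}, a transition system $S^i$ of size $|\Abox_i|+O(2^{|\TO|})$ (with wrap-around edges inside the periodic part whose labels collect the atoms along the wrap-around interval), proves that $S^i$ \emph{represents} $\TO,\Abox_i$ so that Theorem~\ref{criterionUsA} carries over verbatim, and then applies \PTime{} simulation checking and \PSpace{} containment checking to the exponential-size product and disjoint union. So that half needs no further comment.

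For the \NExpTime{} lower bound, your overall strategy (force an exponentially long canonical model, let negative examples forbid local violations, conclude that a separating $\Qp[\U]$-query must spell out a correct computation/tiling) is also the paper's, with the $2^n\times 2^n$ tiling problem replacing acceptance of an \NExpTime{} Turing machine -- an immaterial difference. But the step you explicitly defer as ``the main obstacle'' is exactly the step that carries the whole proof, and your setup with a \emph{single} positive example does not resolve it. The difficulty is not only to block invalid tilings with boundedly many negative examples; it is to force any separating query of the form~\eqref{upath} to have \emph{exactly} $N^2$ nested untils with \emph{exactly one} alphabet symbol in each $\rho_i$ and a controlled left-hand side $\lambda_i$. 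With one positive example whose canonical model makes every tile atom true at every cell, nothing prevents a separator that is much shorter, that puts several (or zero) symbols into some $\rho_i$, or that uses the $\lambda_i$ to skip cells -- and then the correspondence between separators and tilings breaks in both directions. The paper solves this with a \emph{second} positive example $\Abox_2^+$ whose canonical model is a $C$-padded enumeration $(a_1C,C,\dots,a_kC,C)^{N^2}$ of the alphabet: satisfaction in this model forces $|\rho_i|\le 1$ and pins down the parities of the witnessing positions, while the initial segments $\Xi^{N^2-1}$ and $(\Xi C,C)^{N^2-2}$ of the single negative example then force $|\rho_i|=1$ for all $i$ and depth exactly $N^2$, leaving the illegal-tuple gadgets $\Abox_{\lt_i}$ (concatenated at exponentially spaced offsets inside the same negative instance) to do the job you assign to your violation-blocking examples. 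Until you supply an analogous shape-forcing gadget, your reduction is a plan rather than a proof; I would recommend adding a second positive example of this padded-enumeration kind and checking both directions of the equivalence against it.
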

\begin{proof}
For the upper bounds, we again assume that $\TO$ and $\D$ are consistent for all $\D \in E^+ \cup E^-$.
Observe that Theorem~\ref{criterionUsA} continues to hold in the presence of $\LTL_\horn\Xallop$ ontologies $\TO$ but we need a different construction of transition systems $S^i$ that represent all $\mathcal{Q}[\Us]$-queries mediated by $\TO$ over $\D_i$. We illustrate it for $\TO = \{ A \to C \land \nxt B,\, B \to \nxt^2 B,\, B \to \nxt C \}$ and $\D_i = \{ A(0)\}$  below, where the picture on the left shows the canonical model of $\TO, \D_i$ (see Proposition~\ref{prop:period}) and next to it is $S^i$ (the omitted labels on transitions are $\Sigma^\bot$).

\centerline{
\begin{tikzpicture}[thick, transform shape]
\begin{scope}[nd/.style={draw,thick,circle,inner sep=0pt,minimum size=1.5mm,fill=white}, xshift = -1.2cm, yshift = .7cm, local bounding box=b]
\draw[thick,gray,-] (0,0) -- (1.5,0);
\slin{0,0}{\scriptsize
$\mathbf{A}, C$
}{\scriptsize $\mathbf{0}$};
\slin{.5,0}{\scriptsize $B$}{\scriptsize $1$};
\slin{1,0}{\scriptsize $C$}{\scriptsize $2$};
\slin{1.5,0}{\scriptsize $B$}{\scriptsize $3$};
\path(1.5,.5) edge[dotted,->, bend right=10] node{} (.5,.5);
\end{scope}
\draw[thin, rounded corners=5pt] (b.south west) rectangle (b.north east);
\tikzset{every state/.style={minimum size=0pt}}
%
%
%
\node[state, label = below:{\scriptsize $0$},  label = left:{\scriptsize $A, C$}] at (0,0) (zero) {};
\node[state, label = below:{\scriptsize $1$},  label = above:{\scriptsize $B$}] at (2,0)(one) {};
\node[state, label = above:{\scriptsize $2$}, label = below:{\scriptsize $C$}] at (4,1) (two) {};
\node[state, label = above:{\scriptsize $3$}, label = right:{\scriptsize $B$}] at (6,0) (three) {};
\path (zero) edge[->] (one);
\path (one) edge[->] (two);
\path (two) edge[->, bend left=10] (three);
\path (three) edge[->] (one);
\path (zero) edge[->, bend left=5] node[above] {\scriptsize$B$}(two);
\path (zero) edge[->, bend right=25] node[above] {\scriptsize$\emptyset$}(three);
\path (two) edge[->, bend left=20] node[below,right] {\scriptsize$B$}(one);
\path (three) edge[->, bend left=10] node[above, left] {\scriptsize$B$}(two);
\path (one) edge[->, bend right=13] node[below] {\scriptsize$C$}(three);
\path (one) edge[->, out=205, in=250, looseness=10] node[left] {\scriptsize$\emptyset$} (one);
\path (two) edge[->, out=50, in=15, looseness=10] node[right] {\scriptsize$B,C$} (two);
\path (three) edge[->, loop below] node[right] {\scriptsize$\emptyset$} (three);
%
\end{tikzpicture}}

In general, the size of $S^i$ is $|\Abox_i| + O(2^{|\TO|})$ and the product of $S^i$, $\Abox_i \in E^+$ is of size $O(2^{|\TO|+|E^+|})$. The upper bounds now follow from \PTime{} and \PSpace{} completeness of checking simulation and containment for transition systems.

Now we sketch the proof of the lower bound. Let $\M$ be a non-deterministic Turing machine that accepts 
$\Sigma$-words $\boldsymbol{x}=x_1\dots x_n$ in \mbox{$N=2^{\textit{poly}(|\boldsymbol{x}|)}$} steps  and erases the tape after a successful computation.
%
%
%
We represent configurations $\conf$ of a computation of $\M$ on $\boldsymbol{x}$ by an $N-1$-long word (with sufficiently many blanks at the end), in which $y$ in the active cell is replaced by $(q,y)$ with the current state $q \in Q$.
An accepting computation of $\M$ on $\boldsymbol{x}$ is encoded by the $N^2$-long word $w=\sharp \conf_1 \, \sharp \, \conf_2 \, \sharp\,  \dots \, \sharp \, \conf_{N-1} \, \sharp \, \conf_{N}$   over $\Xi=\Sigma\cup(Q\times\Sigma)\cup\{\sharp\}$.
Thus, a word $w$ of length $N^2$ encodes an accepting computation iff it starts with the initial configuration $\conf_1$ preceded by $\sharp$, ends with the accepting configuration $\conf_{\it acc}$, and every two length 3 subwords at distance $N$ apart form a \emph{legal tuple}~\cite[Theorem 7.37]{DBLP:books/daglib/0086373}.

We define $\TO$ and $E=(\{\Abox_1^+,\Abox_2^+\},\{\Abox^-\})$ so that their canonical models look as follows, for $\Xi=\{a_1,\ldots,a_k\}$:\\
\centerline{
\begin{tikzpicture}[nd/.style={draw,thick,circle,inner sep=0pt,minimum size=1.5mm,fill=white},xscale=0.5]
\node at (-1,0.3) {$\C_{\TO,{\Abox_1^+}}$};
\node   at (1,0.3) (x1)  {};
\node at (3,0.5) {$\conf_1$};
\node  at (3,0.4) (x2) {};
\node at (8.2,0.3) (x3)  {};
\node at (9.9,0.5) {$\conf_{acc}$};
\node  at (10,0.4) (x4) {};
\node[fit = (x1)(x2), basic box = black]  {};
\node[fit = (x3)(x4), basic box = black]  {};
\draw[thick,gray,-] (0,0) -- (1.5,0);
\draw[thick,gray,dotted] (1.5,0) -- (2.5,0);
\draw[thick,gray,-] (2.5,0) -- (4.5,0);
\draw[thick,gray,dotted] (4.5,0) -- (6.5,0);
\draw[thick,gray,-] (6.5,0) -- (8.5,0);
\draw[thick,gray,dotted] (8.5,0) -- (9.5,0);
\draw[thick,gray,-] (9.5,0) -- (11.5,0);
\slin{0,0}{}{$0$};
\slin{1,0}{$\sharp$}{$1$};
\slin{3,0}{}{\tiny$N$};
\slin{4,0}{${\Xi}$}{\tiny$N+1$};
\slin{7,0}{${\Xi}$}{\tiny$N^2-N+1$};
\slin{8,0}{}{};
\slin{10,0}{}{\tiny$N^2$};
\slin{11,0}{}{};
\end{tikzpicture}
}\\
\centerline{
\begin{tikzpicture}[nd/.style={draw,thick,circle,inner sep=0pt,minimum size=1.5mm,fill=white},xscale=0.4]
\node at (-1,0.3) {$\C_{\TO,{\Abox_2^+}}$};
\draw[thick,gray,-] (0,0) -- (1.5,0);
\draw[thick,gray,dotted] (1.5,0) -- (2.5,0);
\draw[thick,gray,-] (2.5,0) -- (6.5,0);
\draw[thick,gray,dotted] (6.5,0) -- (7.5,0);
\draw[thick,gray,-] (7.5,0) -- (9.5,0);
\draw[thick,gray,dotted] (9.5,0) -- (11.5,0);
\draw[thick,gray,-] (11.5,0) -- (13.5,0);
\draw[thick,gray,dotted] (13.5,0) -- (14.5,0);
\draw[thick,gray,-] (14.5,0) -- (16.5,0);
\slin{0,0}{}{$0$};
\slin{1,0}{}{$1$};
\slin{3,0}{\tiny $a_1\!,\!\!C$}{\tiny$N^2+1$};
\slin{4,0}{\tiny $C$}{};
\slin{5,0}{\tiny $a_2\!,\!\!C$}{};
\slin{6,0}{\tiny $C$}{};
\slin{8,0}{\tiny$a_k\!,\!\!C$}{};
\slin{9,0}{\tiny$C$}{\tiny$N^2+2|\Sigma'|$};
\slin{12,0}{\tiny $a_1\!,\!\!C$}{};
\slin{13,0}{\tiny $C$}{};
\slin{15,0}{\tiny $a_k\!,\!\!C$}{};
\slin{16,0}{\tiny $C$}{\tiny$N^2+2|\Sigma'|N^2$};
\end{tikzpicture}
}\\
\centerline{
\begin{tikzpicture}[nd/.style={draw,thick,circle,inner sep=0pt,minimum size=1.5mm,fill=white},xscale=0.4]
\node at (-1,0.3) {$\C_{\TO,{\Abox^-}}$};
\node   at (14,0.3) (x7)  {};
\node at (16,0.5) {$\Abox_{\lt_i}$};
\node  at (16,0.4) (x8) {};
\node[fit = (x7)(x8), basic box = black]  {};
\draw[thick,gray,-] (0,0) -- (2.5,0);
\draw[thick,gray,dotted] (2.5,0) -- (3.5,0);
\draw[thick,gray,-] (3.5,0) -- (7.5,0);
\draw[thick,gray,dotted] (7.5,0) -- (8.5,0);
\draw[thick,gray,-] (8.5,0) -- (11.5,0);
\draw[thick,gray,dotted] (11.5,0) -- (13.5,0);
\draw[thick,gray,-] (13.5,0) -- (16.5,0);
\draw[thick,gray,dotted] (16.5,0) -- (17.5,0);
\slin{0,0}{}{$0$};
\slin{1,0}{}{$1$};
\slin{2,0}{$\Xi$}{$2$};
\slin{4,0}{$\Xi$}{\tiny$N^2$};
\slin{5,0}{}{};
\slin{6,0}{\tiny $\Xi C$}{\tiny$N^2+2$};
\slin{7,0}{\tiny $C$}{};
\slin{9,0}{\tiny $\Xi C$}{};
\slin{10,0}{\tiny $C$}{};
\slin{11,0}{}{\tiny$3N^2$}
\slin{14,0}{}{\tiny $(2i+1)N^2$};
\slin{16,0}{}{};
\end{tikzpicture}
}
\centerline{
\begin{tikzpicture}[nd/.style={draw,thick,circle,inner sep=0pt,minimum size=1.5mm,fill=white},xscale=0.4]
\node at (-2.4,0) {where $\Abox_{\lt_i}\!\! =$};
\draw[thick,gray,-] (0,0) -- (1.5,0);
\draw[thick,gray,dotted] (1.5,0) -- (2.5,0);
\draw[thick,gray,-] (2.5,0) -- (7.5,0);
\draw[thick,gray,dotted] (7.5,0) -- (8.5,0);
\draw[thick,gray,-] (8.5,0) -- (13.5,0);
\draw[thick,gray,dotted] (13.5,0) -- (14.5,0);
\draw[thick,gray,-] (14.5,0) -- (15,0);
\slin{0,0}{}{$0$};
\slin{1,0}{\tiny$\Xi C$}{$1$};
\slin{3,0}{\tiny$\Xi C$}{};
\slin{4,0}{\tiny$a$}{};
\slin{5,0}{\tiny$b$}{};
\slin{6,0}{\tiny$c$}{};
\slin{7,0}{\tiny $\Xi$}{\tiny$N^2-N$};
\slin{9,0}{\tiny $\Xi $}{};
\slin{10,0}{\tiny $d$}{};
\slin{11,0}{\tiny $e$}{}
\slin{12,0}{\tiny $f$}{\tiny$N^2$}
\slin{13,0}{\tiny$\Xi$}{};
\slin{15,0}{\tiny$\Xi$}{\tiny$2N^2-N-3$};
\end{tikzpicture}
}
and $\lt_i=(a,b,c,d,e,f)$ is the lexicographically $i$-th illegal tuple. The parts of the canonical models shown above are of exponential size; however, due to their repetitive nature, they can be described by a polynomial-size $\LTL_{\smash{\horn}}\Xallop$ ontology $\TO$ as in~\cite{DBLP:conf/time/RyzhikovSZ21}.
We show  that the $\Qp[\U]$-query
$$
\varkappa =\Diamond(\rho_1\land C\U(\rho_2 \land C \U (\dots(\rho_{N^2-1}\land (C\U \rho_{N^2}))\dots))),
$$
where $\rho_1\dots\rho_{N^2}$ encodes an accepting computation of $\M$ on $\boldsymbol{x}$, is the only type of query that can separate $E$ under $\TO$.
\end{proof}

As for \emph{data} complexity, we show that $\LTL_{\smash{\horn}}\Xallop$-ontologies come essentially for free:

\begin{theorem}\label{thm:datahorn}
The results of Theorem~\ref{thm:qbewithout} continue to hold for queries mediated by a fixed $\LTL_{\smash{\horn}}\Xallop$-ontology.
\end{theorem}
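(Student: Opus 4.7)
The plan is to reduce $\QBE(\LTL_{\smash{\horn}}\Xallop,\Q)$ with a fixed ontology $\TO$ to the ontology-free problem of Theorem~\ref{thm:qbewithout} by materialising, in time polynomial in the data, a finite representation of each canonical model $\C_{\TO,\Abox}$. Since $\TO$ is fixed, the periodicity parameters $s_{\TO,\Abox}$ and $p_{\TO,\Abox}$ of Proposition~\ref{prop:period} are bounded by constants, and the $\PTime$-data-complexity part of that proposition lets me enumerate, in polynomial time, the atoms holding in $\C_{\TO,\Abox}$ at every $n\le \max\Abox + s_{\TO,\Abox} + p_{\TO,\Abox}$. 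Inconsistencies would be dealt with exactly as in the proof of Theorem~\ref{thm:Horn-diam}: an inconsistent negative instance precludes separation, while an inconsistent positive one can simply be dropped.

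For the $\Diamond$- and $\nxt\Diamond$-query classes I would then adapt the algorithms from the proof of Theorem~\ref{thm:qbewithout} so that they run on the materialised prefixes rather than on the raw data. Using Theorem~\ref{th:data-reduction}$(i.3)$, which preserves data complexity, it suffices to treat the path classes $\mathcal{Q}_p^{\circ}[\Diamond]$ and $\Qp[\Diamond]$. The key point, essentially the period-collapse argument from the proof of Theorem~\ref{thm:Horn-diam} but now with $s_{\TO,\Abox}$ and $p_{\TO,\Abox}$ constant, is that any query separating $E$ under $\TO$ can be rewritten as one of polynomial temporal depth; the dynamic-programming procedure and the $\NP$ guess-and-check procedure then carry over verbatim and match the bounds of Table~\ref{table:free}.

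For the $\U$-query classes the transition systems $S^i$ from Section~\ref{sec:no-ont} are built directly from the canonical models: their states are the time points up to $\max\Abox_i + s_{\TO,\Abox_i} + p_{\TO,\Abox_i}$, labelled with the atoms entailed by $\C_{\TO,\Abox_i}$ there, with the edges of Section~\ref{sec:no-ont} augmented by a loop closing the periodic tail. Each $S^i$ is of polynomial size in the data. Reformulating Theorem~\ref{criterionUsA} and the $\Q[\U]$-criterion sketched after Example~\ref{ex:u-path-not-treeB} in terms of these canonical transition systems is immediate from the equivalence $\TO,\Abox\models\varkappa(0)$ iff $\C_{\TO,\Abox},0\models\varkappa$, and the $\PTime$, $\NP$, and $\PSpace$ upper bounds then follow from exactly the same simulation and containment algorithms used in the ontology-free case.

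The main obstacle I expect is the period-collapse step for the $\Diamond$- and $\nxt\Diamond$-queries: one must show that no separating query needs to reach beyond the materialised prefix of any positive instance, i.e., that a deeply nested query probing the periodic tail can be replaced by a polynomial-depth query that agrees with it on every $\C_{\TO,\Abox}$ simultaneously. Once this uniform collapse is established, the rest of the argument is essentially routine bookkeeping inside each complexity class.
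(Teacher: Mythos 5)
Your proposal follows essentially the same route as the paper's proof: the paper also materialises the canonical models via Proposition~\ref{prop:period} (with $s_{\TO,\Abox}$ and $p_{\TO,\Abox}$ constant for fixed $\TO$), establishes the period-collapse/polynomial-size-separator step you flag as the main obstacle (Lemma~\ref{lem:hornsize}), runs the dynamic-programming and guess-and-check algorithms over satisfying assignments into $\C_{\TO,\Abox}$ restricted to a polynomial prefix, and for $\U$-queries builds the transition systems directly from the canonical models with wrap-around transitions on the periodic tail before reapplying the simulation/containment criteria. The only caveat is that for full $\mathcal{Q}[\U]$ the representing systems are not of polynomial size (their states are pairs of subsets of the prefix, as in the ontology-free case), but as in the paper the \PSpace{} bound is retained by checking embeddability branch-by-branch rather than constructing the system explicitly.
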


Intuitively, the reason is that, given a fixed $\LTL_{\smash{\horn}}\Xallop$-ontology $\TO$, we can compute the types of the canonical model $\C_{\TO,\Abox}$, for consistent $\TO$ and $\Abox$, in polynomial time in $\Abox$ by Proposition~\ref{prop:period}, with the length $M$ from Section~\ref{sec:no-ont} being polynomial in $E$. Checking consistency of $\D$ and fixed $\TO$ is known to be in $\PTime$~\cite{DBLP:journals/ai/ArtaleKKRWZ21}.



\section{QBE with $\LTL\Xbd$-Ontologies}
\label{sec:boxdiamond}

In this section, we investigate separability by $\Diamond$-queries under $\LTL\Xbd$-ontologies. Remarkably, we show that, for data complexity, $\LTL\Xbd$-ontologies also come for free
despite admitting arbitrary Boolean operators; cf.,~\cite{Schaerf93}.



%
\begin{theorem}\label{thm:dim}
Let $\mathcal{Q} \in \{\mathcal{Q}_{p}[\Diamond],\mathcal{Q}[\Diamond]\}$. If $E$ is $\mathcal{Q}$-separable under an $\LTL\Xbd$-ontology $\TO$, then $E$ can be separated under $\TO$ by a $\mathcal{Q}$-query of polysize in $E$ and $\TO$. $\mathsf{QBE}(\LTL\Xbd,\mathcal{Q})$ and $\QBEba(\LTL\Xbd,\mathcal{Q})$ are $\Sigma_{2}^{p}$-complete for combined complexity. The presence of $\LTL\Xbd$-ontologies has no effect on the data complexity, which remains the same as in Theorem~\ref{thm:qbewithout}.
\end{theorem}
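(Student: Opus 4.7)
The plan is to establish the Polynomial Separation Property (PSP) for $\mathcal{Q}$-queries under $\LTL\Xbd$-ontologies, from which both the $\Sigma_{2}^{p}$ upper bound and the data-complexity claim will follow, and then separately prove $\Sigma_{2}^{p}$-hardness.

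First, the PSP. Using Theorem~\ref{th:data-reduction} $(i.3)$, I restrict attention to path queries $\varkappa = \rho_0 \land \Diamond(\rho_1 \land \dots \land \Diamond \rho_n)$, taking conjunctions of at most $|E^{-}|$ of them. The width of each $\rho_i$ is automatically polynomial, as only atoms from $\sig(E) \cup \sig(\TO)$ are useful. To bound $n$, I would associate with each $\Abox \in E^{+}$ the sequence $\tau_\Abox(\ell) = \{A \mid \TO, \Abox \models A(\ell)\}$ and argue, using the $\Box\Diamond$-structure of $\LTL\Xbd$ (no $\nxt$, no $\U$), that it is ultimately periodic with polynomial prefix and period: beyond $\max\Abox$ the sequence stabilizes to what the ontology forces globally from the atoms appearing in $\Abox$. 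A pumping argument then collapses any separating $\varkappa$ whose depth exceeds the number of distinguishable positions across all positive examples; two consecutive $\rho_i, \rho_{i+1}$ witnessed at positions carrying identical combined type profiles across all $\Abox \in E^{+}$ can be merged without introducing satisfaction in any negative $\Abox' \in E^{-}$.

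Second, $\Sigma_{2}^{p}$-completeness. The upper bound is obtained by guessing a polynomial-size candidate $\varkappa$ (NP step) and verifying separation via polynomially many calls: each positive check $\TO, \Abox \models \varkappa(0)$ is in $\coNP$, since non-entailment is witnessed by a polynomial-size countermodel (as $\LTL\Xbd$-satisfiability is in $\NP$), and each negative check $\TO, \Abox \not\models \varkappa(0)$ is directly in $\NP$. Combined verification is in $\PTime^{\NP}$, yielding overall $\Sigma_{2}^{p}$. For hardness, I plan a reduction from $\exists\forall$-3SAT: the existential assignment is encoded by the atoms chosen inside a path $\Diamond$-query that must be witnessed across fixed positive examples, while a single negative example together with the ontology encodes the universal check, so that separability holds iff the $\exists\forall$-formula is true.

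Third, data complexity. For fixed $\TO$, the ultimate periodicity from the PSP argument has constant prefix and period; hence $\tau_\Abox(\ell)$ can be computed for all relevant $\ell$ in polynomial time in $|\Abox|$, yielding an enriched instance $\Abox^{*}$ of polynomial size with $\TO, \Abox \models \varkappa(0)$ iff $\Abox^{*} \models \varkappa(0)$ ontology-free. Separability under $\TO$ thus reduces in polytime to ontology-free separability, so the data-complexity bounds of Theorem~\ref{thm:qbewithout} transfer verbatim. The main obstacle is the PSP itself: unlike the Horn case of Theorem~\ref{thm:Horn-diam}, $\LTL\Xbd$ admits arbitrary Booleans inside $\Box$ and $\Diamond$, so non-entailment witnesses can have nontrivial shape and there is no canonical model in the usual sense. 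The delicate point is showing that pumping simultaneously preserves all positive entailments and all $|E^{-}|$ non-entailments, which requires a careful choice of the type-profile bookkeeping so that the local replacement of a $\rho_i, \rho_{i+1}$ pair remains undetectable in every negative countermodel.
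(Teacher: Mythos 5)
Your high-level architecture (polynomial separation property $\Rightarrow$ guess-and-check in $\Sigma_2^p$; QBF reduction for hardness; separate treatment of data complexity) matches the paper's, but two of your key steps have genuine gaps. For the PSP, you propose to pump on the \emph{positive} side using the certain-atom profiles $\tau_\Abox(\ell)=\{A\mid \TO,\Abox\models A(\ell)\}$ and to merge consecutive $\rho_i,\rho_{i+1}$ witnessed at positions with identical profiles. This bookkeeping is too coarse: under a non-Horn $\LTL\Xbd$-ontology, entailment of a $\Diamond$-query is not determined by the certain atoms at each position (disjunctive axioms force query satisfaction without forcing any single atom), so "identical profiles" does not license the merge, and you yourself flag but do not resolve the problem of preserving all $|E^-|$ non-entailments simultaneously. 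The paper avoids this entirely by working on the \emph{negative} side: by Ono--Nakamura, each $\Abox\in E^-$ has a countermodel $\J_\Abox\not\models\varkappa(0)$ whose type sequence is ultimately periodic with prefix $\le\max\Abox+|\TO|$ and period $\le|\TO|$; if the query is deeper than the maximal prefix $K$, some $\rho_i$ with $i>K$ is contained in no type of the periodic part, and one keeps only the prefix $\rho_0,\dots,\rho_K$ plus that one blocking conjunct per negative example. Positive entailment is preserved for free because dropping conjuncts weakens the query; negative non-entailment is witnessed by the same $\J_\Abox$. No characterisation of positive entailment is needed.

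For data complexity, your reduction to a single ontology-free instance $\Abox^{*}$ via the completion with certain atoms does not work for $\LTL\Xbd$: the paper explicitly notes that the equivalence $\TO,\Abox\models\varkappa(0)$ iff $\I_{\TO,\Abox}\models\varkappa(0)$ holds only for Boolean ontologies without temporal operators and \emph{fails} for $\LTL\Xbd$. The actual \PTime{} argument replaces each data instance by a polynomial-size family of ``decorations'' $D$, each inducing a canonical interpretation $\I_D$ (not in general a model of $\TO$), such that $\TO,\Abox\models\varkappa(0)$ iff $\I_D,0\models\varkappa$ for \emph{all} $D$ in the family; separability then becomes a condition quantifying over these families (true in all $\I_D$ for positives, false in some $\I_D$ for each negative), to which the dynamic programming of the ontology-free case is adapted. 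Your $\Sigma_2^p$ upper bound and the shape of the QBF lower bound are essentially the paper's, but both rest on the PSP, so the first gap must be closed before they go through.
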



We comment on the proof of this theorem for $\mathcal{Q}_{p}[\Diamond]$. Taking  into account $\NP$-completeness of checking if $\TO$ is consistent with $\D$ and tractability of this problem for a fixed $\TO$~\cite{DBLP:journals/ai/ArtaleKKRWZ21}, we can assume, as in Theorem~\ref{thm:Horn-diam}, that $\TO$ and $\D$ are consistent for each $\D \in E^+ \cup E^-$. Observe first that if $E$ is separated by $\varkappa \in \Qp[\Diamond]$ of the form~\eqref{dnpath} under an $\LTL\Xbd$-ontology $\TO$, then, as follows from~\cite{ono1980on},  for any $\Abox \in E^-$, there is a model $\mathcal{J}_{\mathcal{D}}\not\models \varkappa(0)$ of $\TO$ and $\Abox$ whose types form a sequence
%
\begin{equation}\label{periodA}
	\!\tp_{0},\dots,\tp_{k},\tp_{k+1},\dots,\tp_{k+l},\dots,\tp_{k+1},\dots,\tp_{k+l},\dots
\end{equation}
with $\max \mathcal{D} \leq k \leq \max \mathcal{D}+|\TO|$ and $l\leq |\TO|$.
%
%
This allows us to find a separating $\varkappa$ of polysize in $E$, $\TO$. Indeed, let $K$ be the maximal $k$ in~\eqref{periodA} over all $\Abox \in E^-$. If the depth $n$ of $\varkappa$ is $\le K$,  we are done. If $n > K$, we shorten $\varkappa$ as follows. Consider the prefix $\varkappa'$ of $\varkappa$ formed by $\rho_0,\dots,\rho_K$. If $\mathcal{J}_{\mathcal{D}}\not\models \varkappa'(0)$ for all $\Abox \in E^-$, we are done. Otherwise, for each $\Abox \in E^-$, we pick a $\rho_i$,   $i > K$, with $\rho_i \not\subseteq \tp_{k + j}$ for any $j \le l$; it must exist as $\mathcal{J}_{\mathcal{D}}\not\models \varkappa(0)$. Then we construct $\varkappa''$ by omitting from $\varkappa$ all $\rho_l$ that are different from those in $\varkappa'$ and the chosen $\rho_i$ with $i > K$. Clearly, $\varkappa''$ is as required.

A $\Sigma_{2}^{p}$-algorithm guesses $\varkappa$ and $\J_{\Abox}$, for $\Abox \in E^-$, and checks in polytime that $\J_{\Abox} \models \TO,\Abox$ and $\J_{\Abox} \not\models \varkappa(0)$ and in \coNP~\cite{ono1980on} that $\TO, \Abox \models  \varkappa(0)$ for all $\Abox \in E^+$. The lower bound is shown by reduction of the validity problem for fully quantified Boolean formulas $\exists \avec{p}\, \forall \avec{q}\,\psi$, where $\avec{p}=p_{1},\dots,p_{k}$ and 	$\avec{q}= q_{1},\dots,q_{m}$ are all propositional variables in $\psi$. We can assume that $\psi$ is not a tautology and $\lnot\psi\not\models x$ for $x\in\{p_i, \lnot p_i, q_j, \lnot q_j \mid i\leq k, j \leq m\}$.
	Let $E = (E^+,E^-)$ with $E^+ = \{\Abox_1, \Abox_2\}$, $E^- = \{\Abox_3\}$, where
	$$
	\mathcal{D}_{1}=\{B_1(0)\}, \ \mathcal{D}_{2}=\{B_2(0)\}, \ \mathcal{D}_{3}=\{q_{1}(0),\dots,q_{m}(0)\},
	$$
	and let $\mathcal{O}$ contain the following axioms with fresh  atoms $B_{1},B_{2},A_{i},\bar{A}_{i}$, for $i=1,\dots,k$:
	\begin{align*}
		B_1 \lor B_2 \rightarrow \neg\psi,\quad 		%
		& \textstyle p_{i} \rightarrow \Diamond \big(\bar{A}_{i} \wedge \bigwedge_{j\not=i} (A_{j} \wedge \bar{A}_{j})\big), \\
		& \hspace*{-2.5mm} \textstyle\neg p_{i} \rightarrow \Diamond \big(A_{i} \wedge \bigwedge_{j\not=i} (A_{j} \wedge \bar{A}_{j})\big).
		%
	\end{align*}
	Then $\exists \avec{p}\, \forall \avec{q}\, \psi$ is valid iff
	$E$ is $\Qp[\Diamond]$-separable under $\mathcal{O}$.

We obtain the \NP{} upper bounds in data complexity using the same argument as for the $\Sigma_{2}^{p}$-upper bound and observing that checking
$\mathcal{O},\mathcal{D}\models \varkappa(0)$ is in \PTime{} in data complexity. The \NP{} lower bounds are inherited from the ontology-free case. The proof of the \PTime{} upper bounds is more involved. We illustrate the idea for $\TO$ with arbitrary Boolean but without temporal operators.
In this case, one can show (which is non-trivial) that $\TO,\Abox\models \varkappa(0)$ iff $\I_{\TO,\Abox}\models \varkappa(0)$, where $\I_{\TO,\Abox}$ is the \emph{completion} of $\mathcal{D}$: it  contains $A(\ell)$ iff $\TO\cup \{B \mid B(\ell)\in \Abox\} \models A$. For example, if $\TO=\{A \vee B\}$
and $\Abox = \{A(1),B(1),A(3),B(3)\}$, the completion $\I_{\TO,\Abox}$ is just $\Abox$ regarded as an interpretation (so $\I_{\TO,\Abox}$ does not have to be a model of $\TO$). It can be constructed in polytime in $\Abox$ and, due to the equivalence above, used to prove the \PTime{} upper bounds using dynamic programming. That equivalence does not hold for $\LTL\Xbd$, but the technique can be extended by applying it to data sets enriched by certain types.

Note that the completion technique does not work for $\nxt,\!\Diamond$-queries. For example, $\TO,\Abox\models \Diamond(A \land \nxt B)$ for $\mathcal{D}$ and $\TO$ defined above, and so the equivalence does not hold. In fact, the complexity of separability by $\nxt\Diamond$-queries remains open.


\section{QBE with \LTL{}-Ontologies}
\label{sec:full}
For ontologies with arbitrary \LTL-axioms, we obtain:

\begin{theorem}\label{thm:fullLTL}
$(i)$ $\QBE(\LTL,\mathcal{Q})$ is in $2\ExpTime{}$, for any $\mathcal{Q} \in \{\, \mathcal{Q}[\Diamond], \mathcal{Q}[\nxt, \Diamond], \mathcal{Q}[\Us]\,\}$.
$(ii)$ $\mathsf{QBE}(\LTL, \mathcal{Q})$ is in $2\ExpSpace{}$, for any $\mathcal{Q} \in \{\, \mathcal{Q}_p[\Diamond], \mathcal{Q}_p[\nxt, \Diamond], \mathcal{Q}_p[\U]\,\}$.
%
\end{theorem}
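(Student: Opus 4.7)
The plan is to extend the transition-system criterion of Theorem~\ref{criterionUsA} to the setting where $\TO$ is an arbitrary \LTL-ontology. The source of extra complexity is that without canonical models a $\Q[\Us]$-query is certain over $(\TO,\D)$ iff it holds along \emph{every} model of $\TO$ extending $\D$, so the transition system representing certainty must be built by a universal/deterministic construction rather than the simple one used in Section~\ref{sec:horn}. As a preliminary step, I would check consistency of $\TO$ with each $\D \in E^+ \cup E^-$ (in \PSpace{} for full \LTL), drop inconsistent positive instances, and declare $E$ non-separable if any negative instance is inconsistent with $\TO$.

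Next I would convert $\TO$ into a nondeterministic Büchi automaton $\mathcal{A}_\TO$ of size $2^{O(|\TO|)}$ and, for each $\D$, intersect it with the local constraints from $\D$ to obtain $\mathcal{A}_{\TO,\D}$. For $\D \in E^-$, the transition system $S^-_\D$ is read off from $\mathcal{A}_{\TO,\D}$ directly, with states labelled by the types they force and transitions labelled by intermediate types: existence of a countermodel corresponds to existence of a run, so $S^-_\D$ is of size exponential in $|\TO|$. For $\D \in E^+$, since the query must hold on \emph{all} models, I would Safra-determinize $\mathcal{A}_{\TO,\D}$ into a deterministic parity automaton and then take a subset-style product that, at each reachable state, records the set of atoms and until-patterns that must be satisfied on every continuation; the resulting $S^+_\D$ has size $2^{2^{O(|\TO|)}}$. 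Forming $\mathfrak{P}$ as the product of $\{S^+_\D : \D \in E^+\}$ and $\mathfrak{N}$ as the disjoint union of $\{S^-_\D : \D \in E^-\}$, the generalisation of Theorem~\ref{criterionUsA} I aim for states that $E$ is not $\Q[\Us]$-separable under $\TO$ iff $\mathfrak{P}$ is simulated by $\mathfrak{N}$, and not $\Qp[\U]$-separable iff $\mathfrak{P}$ is contained in $\mathfrak{N}$.

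Granting this criterion, part~(i) follows because simulation between transition systems is decidable in time polynomial in their sizes~\cite{DBLP:conf/cav/KupfermanV96a}, and our systems are at most doubly exponential, giving $2\ExpTime$; part~(ii) follows because containment is decidable in space polynomial in the sizes, giving $2\ExpSpace$. The cases $\mathcal{Q}[\Diamond]$, $\mathcal{Q}[\nxt,\Diamond]$ and their path versions reduce to the $\U$-cases (with identical asymptotic bounds) via Theorem~\ref{th:data-reduction}, since the reductions there preserve the ontology. The main obstacle I expect is justifying the correctness of the universal construction that produces $S^+_\D$: I need to verify that the subset-of-determinized automaton really captures exactly those temporal patterns which hold on \emph{all} models of $\TO,\D$, and that simulation on $\mathfrak{P},\mathfrak{N}$ coincides with $\Q[\Us]$-inseparability in the same way as in the ontology-free proof. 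A closely related subtlety is bounding the query depth at which a separating witness can be extracted so that only \emph{finite} simulations/containments matter; I expect this to follow from a pumping argument on the periodic behaviour of the deterministic parity automaton, in analogy with the length bound $M$ used in Section~\ref{sec:no-ont} and with the period bounds of Proposition~\ref{prop:period}.
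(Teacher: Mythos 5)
Your overall strategy---replace the ontology-free transition systems of Theorem~\ref{criterionUsA} by systems that capture certain answers under an arbitrary \LTL{} ontology, then invoke the \PTime{}/\PSpace{} bounds for simulation and containment on doubly exponential systems---is also the paper's strategy, and your determinized, set-based construction for the positive examples is in the right spirit: the paper uses as states \emph{sets of types} $\avec{T}\subseteq\avec{T}_\TO$ realisable in $\TO,\D$, with transitions $\avec{T}_1\to_\Gamma\avec{T}_2$ defined by quantifying over \emph{all} models of $\TO,\D$. The genuine gap is your asymmetric treatment of the negative examples. The separability condition on a negative instance $\D$ is $\TO,\D\not\models\varkappa(0)$, so non-separability says that every query certain over all positive examples is also \emph{certain} over some negative example. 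Hence, in the criterion ``$\mathfrak P$ is simulated by / contained in $\mathfrak N$'', the systems $S^i$ for $i\in I^-$ must likewise \emph{represent} $\TO,\Abox_i$ in the sense used in the appendix: a query tree embeds into $\mathfrak T_{S^i}$ if and only if the query is a certain answer over $\TO,\Abox_i$. A transition system read off ``directly'' from a nondeterministic B\"uchi automaton, whose runs correspond to individual models, captures satisfaction in \emph{some} model rather than in all of them; with such an $\mathfrak N$ the criterion breaks, because a subtree of $\mathfrak T_{\mathfrak P}$ could embed into the part of $\mathfrak T_{\mathfrak N}$ arising from one particular model of a negative example without the corresponding query being certain over that example, so inseparability would be reported incorrectly. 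The paper therefore applies the same doubly exponential type-set construction uniformly to every $\D\in E^+\cup E^-$.

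A secondary problem: you dispatch $\mathcal{Q}[\Diamond]$ and $\mathcal{Q}[\nxt,\Diamond]$ by appeal to the reductions of Theorem~\ref{th:data-reduction}, but items $(ii.1)$--$(ii.2)$ are stated and proved only for the empty ontology---the gadgets introduce fresh atoms that an arbitrary ontology does not constrain, so certain answers are not preserved---and, independently, separability by a larger query class neither implies nor is implied by separability by a smaller one, so one cannot simply ``reduce to the $\U$-cases''. The paper instead handles $\mathcal{Q}[\nxt,\Diamond]$ (and $\mathcal{Q}[\Diamond]$) by rebuilding the same type-set transition systems over the restricted edge alphabet $\{\emptyset,\Sigma^\bot\}$.
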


The proof requires a further modification of the transition systems $S^i$ in Theorem~\ref{criterionUsA}. We illustrate it by an example.
Let
$\TO = \{ A \to \Diamond B,\, \top \to A \lor B,\, A \land B \to \bot\}$ with the set of $\TO$-types $\avec{T}_\TO = \{\tp_1, \tp_2, \tp_3 \}$, where $\tp_1 = \{ A, \neg B, \Diamond B\}$, $\tp_2 = \{ \neg A, B, \neg \Diamond B\}$, $\tp_3 = \{ A, \neg B, \neg \Diamond B\}$, and $\tp_4 = \{\neg A, B, \Diamond B\}$, from which we omitted subformulas such as $A \lor B$ that are true or false in all types. For non-empty sets $\avec{T}_1, \avec{T}_2 \subseteq \avec{T}_\TO$ and $\Gamma \subseteq \Sigma^\bot$, we take the relation $\avec{T}_1 \to_\Gamma \avec{T}_2$, which, intuitively, says that if there are instants $n_\I$ in all models $\I$ of $\TO, \Abox$ such that $\{\tp_\I(n_\I) \mid \I \models \TO, \Abox\} = \avec{T}_1$, then there exist $m_\I > n_\I$ with $\{\tp_\I(m_\I) \mid \I \models \TO, \D\} = \avec{T}_2$ and $\Gamma = \{ A \in \Sigma^\bot \mid \I, m \models A \text{ for all }\I \text{ and } n_\I < m < m_\I \}$. In our  example, we have $\{\tp_1, \tp_3\} \to_{\Sigma^\bot} \{\tp_1, \tp_2, \tp_3, \tp_4\}$ and $\{\tp_1, \tp_3\} \to_{\{B\}} \{\tp_1, \tp_3, \tp_4\}$ (among others). Then we construct the following transition system $S^i$ for, say, $\Abox_i = \{ A(0)\}$, which reflects all $\Q[\U_s]$-queries over $\TO, \Abox_i$ using $\avec{T}' \subseteq \avec{T}_\TO$ as states (the initial state is $\{\tp_1, \tp_3\}$ since $A(0) \in \Abox_i$):\\[-10pt]
\centerline{
\begin{tikzpicture}[thick, transform shape]
\tikzset{every state/.style={minimum size=0pt}}
%
%
%
\node[state, label = below:{\scriptsize $\{t_1, t_3\}$},  label = above:{\scriptsize $A$}] at (2,.5) (one) {};
\node[state, label = below:{\scriptsize $\{t_1, t_3, t_4\}$},  label = above:{\scriptsize $\emptyset$}] at (0,0) (two) {};
\node[state, label = below:{\scriptsize $\{t_1, t_2, t_3, t_4\}$},  label = above:{\scriptsize $\emptyset$}] at (4,0) (three) {};
\path (one) edge[->]  node[above] {\scriptsize$B$} (two);
\path (two) edge[->, bend right=7]  (three);
\path (one) edge[->] (three);
\path (three) edge[->, loop right] (three);
\path (two) edge[->, loop left] node[left] {\scriptsize$B$} (two);
\end{tikzpicture}
}

The $S^i$ can be constructed in $2\ExpTime{}$ in $|\Abox_i|+|\TO|$ (checking $\avec{T}_1 \to_\Gamma \avec{T}_2$, for given $\avec{T}_1$, $\avec{T}_2$ and $\Gamma$, can be done in \ExpSpace{}). Also, the product of the $S^i$, for $\Abox_i \in E^+$,  can be constructed in $2\ExpTime{}$ in $|\Abox_i|+|E^+|$.



\section{Conclusions}\label{conclusions}

We have started an investigation of the computational complexity of query-by-example for principal classes of \LTL{}-queries, both with and without mediating ontologies. Our results are encouraging as we exhibit important cases that are tractable for data complexity and not harder than satisfiability for combined complexity. Many interesting and technically challenging problems remain open. Especially intriguing are queries with $\U$. For example, we still need to pinpoint the size of minimal separating $\mathcal{Q}[\Us]$- and $\Qp[\U]$-queries under a Horn ontology. 
The tight complexity of QBE for unrestricted $\U$-queries is also open. In general, such queries could be too perplexing for applications; however, they can express useful disjunctive patterns such as `in at most $n$ moments of time'\!.
Note also that sparse data instances with large gaps between timestamps may require binary representations thereof, for which the proofs of some of our results do not go through.

Our results and techniques provide a good starting point for studying QBE with (ontology-mediated) queries over temporal databases with a full relational component~\cite{DBLP:journals/tods/ChomickiTB01,DBLP:reference/db/ChomickiT18a,DBLP:journals/jair/ArtaleKKRWZ22} and also for the construction of separating queries satisfying additional conditions such as being a longest/shortest separator~\cite{DBLP:journals/cor/BlumDSJLMR21,DBLP:conf/icgi/FijalkowL21} or a most specific/general one~\cite{DBLP:journals/corr/abs-2206-05080}.


%
%

%


\section*{Acknowledgements}
This work was supported by EPSRC UK grants EP/S032207, EP/S032282, and EP/W025868.


\bibliographystyle{named}



\appendix
\onecolumn
\noindent{\Large \bf Appendix: Proofs}


\section{Proofs for Section~\ref{prelims}}
\bigskip
\noindent
\textbf{Theorem~\ref{th:data-reduction}.}
{\em The following polynomial-time reductions hold\textup{:}
\begin{description}
\item[(\emph{i}.1)] $\QBE(\mathcal{L},\mathcal{Q}) \le_p \QBE_{1\text{--}}(\mathcal{L},\mathcal{Q})$, for any $\mathcal{Q}$ closed under~$\land$, and any $\mathcal{L}$ \textup{(}including $\mathcal{L} = \emptyset$\textup{)},

\item[(\emph{i}.2)] $\QBE(\mathcal{L},\mathcal{Q}) \le_p \QBE^{2\text{+}}(\mathcal{L},\mathcal{Q})$, for $\mathcal{L} \in \{\LTL, \LTL\Xbd\}$,

\item[(\emph{i}.3)] $\QBE(\mathcal{L},\mathcal Q[\nxt,\Diamond]) \le_p \QBE(\mathcal{L},\mathcal{Q}_{p}^{\circ}[\Diamond]))$ and $\mathsf{QBE}(\mathcal{L},\mathcal{Q}[\Diamond]) \le_p \mathsf{QBE}(\mathcal{L},\mathcal{Q}_{p}[\Diamond])$, for any $\mathcal{L}$,

\item[(\emph{ii}.1)] $\QBE(\Qp[\Diamond]) \le_p \QBE(\Qp[\nxt,\Diamond])$ and $\QBE(\Qp[\Diamond]) \le_p \QBE(\Qp[\U]) \le_p \QBE_{1\text{--}}(\Qp[\U])$

\item[(\emph{ii}.2)] $\mathsf{QBE}(\mathcal{Q}[\nxt,\Diamond]) =_p \QBE(\mathcal{Q}[\Diamond]) \le_p \QBE(\mathcal Q[\Us])$.
%
\end{description}
Reductions $(i.1)$--$(i.3)$ work for combined complexity and $(i.1)$ and $(i.3)$ also work for data complexity. The reductions preserve boundedness of the number of positive/negative example.
}
\begin{proof}
(\emph{i}.1) Observe that if $\mathcal{Q}$ is closed under $\land$, $E = (E^+,E^-)$ and $E^- = \{\Abox_1^-,\dots,\Abox_n^-\}$, then $E$ is $\mathcal{Q}$-separable under $\TO$ iff each $(E^+,\{\Abox_i^-\})$, $1 \le i \le n$, is. Indeed, if $\varkappa_i$ separates $(E^+,\{\Abox_i^-\})$, then $\varkappa_1 \land \dots \land \varkappa_n$ separates $E$. For such $\mathcal{Q}$, we can thus assume that $E^-$ consists of a single data instance. Note that $\Qp[\nxt,\Diamond]$ and $\Qp[\U]$ are not closed under $\land$.

(\emph{i}.2) Let $E^+=\{\Abox_1,\dots,\Abox_n\}$ and let $k=\max_i\max\Abox_i$. We construct an ontology $\TO'$ by taking the fresh atoms $A_1,\ldots,A_n$, $C_0,\ldots,C_k$, $D_0,\ldots, D_k$, $S_1,S_2$ and adding the following axioms to the given ontology $\TO$:
\begin{align*}
& S_1\to A_1\lor\dots \lor A_n, \quad S_2\to A_1\lor\dots \lor A_n,\\
&C_i\land\Diamond A_j\to X, \quad D_i\land\Diamond A_j\to X, \ \ \text{for $X(i)\in\Abox_j$.}
\end{align*}
Let $E'^+$ consist of $\Abox'_1=\{C_0(0),\dots,C_k(n'),S_1(n'+1)\}$ and $\Abox'_2=\{D_0(0),\dots,D_k(n'),S_2(n'+1)\}$. Then every model of $\TO',\Abox'_1$ or $\TO',\Abox'_2$ contains a model of at least one of the $\TO,\Abox_i$ and, conversely, every model of any $\TO,\Abox_i$ can be converted into a model of $\TO',\Abox_1'$ or $\TO',\Abox_2'$ by adding only the newly introduced symbols. So, if there is a separating query for $(E^+,E^-)$ under the ontology $\TO$, then the same query separates $(E'^+,E^-)$ under the ontology $\TO'$. And if there is a separating query for $(E'^+,E^-)$ under the ontology $\TO'$, then it cannot contain any symbols from $\sig  (\TO') \setminus\sig (\TO)$, and so it separates $(E^+,E^-)$ under the ontology $\TO$.

(\emph{i}.3) Recall from the main part of the paper that $[\rho_{0} \wedge \Diamond(\rho_{1} \wedge \bigwedge_{i}\Diamond \varkappa_{i})] \equiv [\rho_{0} \wedge \bigwedge_{i} \Diamond(\rho_{1} \wedge \Diamond \varkappa_{i})]$, $\nxt \Diamond \varkappa \equiv \Diamond\nxt \varkappa$ and $\nxt (\varkappa \land \varkappa') \equiv (\nxt \varkappa \land \nxt \varkappa')$, and so each $\mathcal{Q}[\nxt,\Diamond]$-query can be equivalently transformed in polynomial time into a conjunction of $\mathcal{Q}_{p}^{\circ}[\Diamond]$-queries. If $E^{-}$ is a singleton, then a conjunction of queries in $\mathcal{Q}_{p}^{\circ}[\Diamond]$ separates $(E^{+},E^{-})$ under an ontology $\TO$ iff a single conjunct separates $(E^{+},E^{-})$ under $\TO$. Thus, there is $\q\in \mathcal{Q}[\nxt,\Diamond]$ separating an arbitrary $(E^{+},E^{-})$ under $\TO$ iff there are polysize $\q_{\mathcal{D}}\in\mathcal{Q}_{p}^{\circ}[\Diamond]$ separating $(E^{+},\{\mathcal{D}\})$ under $\TO$, for each $\mathcal{D}\in E^{-}$.
The second reduction is obtained by dropping $\nxt$ from the argument above.

In (\emph{ii}.1), the first two reductions are proved by adding to $E^+ \ni \Abox$, for some $\Abox$, the data instance $\mathcal{D}'=\{ A(mn) \mid  A(n)\in \mathcal{D}\}$ with $m=\max\mathcal{D}+2$. Now, if $\Abox\models \varkappa(0)$ and $\Abox'\models \varkappa(0)$, for $\varkappa \in \Qp[\U]$, then $\varkappa$ is equivalent to a $\Qp[\Diamond]$-query.

For the third reduction, we observe first that without loss of generality one can assume that the positive examples do not contain atoms of the form $X(0)$. 
%
%
Indeed, suppose $E = (E^+, E^-)$, $E^+=\{\Abox^+_1,\ldots,\Abox^+_n\}$, and $E^-=\{\Abox^-_1,\dots,\Abox^-_k\}$. 
Let $\rho=\bigcap_{i=1}^n\{X\mid X(0)\in\Abox^+_i\}$, $E'^-=\{\Abox\in E^-\mid \Abox\not\models\rho\}$, and let $D=(D^+,D^-)$, where $D^+=\{\Abox\setminus\{X(0)\mid X\in \sig(E)\}\mid\Abox\in E^+\}$ and $D^-=\{\Abox\setminus\{X(0)\mid X\in \sig(E)\}\mid\Abox\in E'^-\}$. If $\varphi=\rho_0\land \lambda_1\U(\rho_1\land \lambda_2\U ( \dots(\rho_{l-1}\land (\lambda_{l}\U \rho_{l}))\dots))$ separates $E$ then $\varphi'$, which is $\varphi$ with $\rho_0$ replaced by $\emptyset$, separates $D$. If $\psi=\rho_0\land \lambda_1\U(\rho_1\land \lambda_2\U ( \dots(\rho_{l-1}\land (\lambda_{l}\U \rho_{l}))\dots))$ separates $D$, then $\rho_0=\emptyset$ and $\psi'$, which is $\psi$ with $\rho_0$ replaced by $\rho$, separates $E$.

Now, suppose $E = (E^+, E^-)$, $E^+=\{\Abox^+_1,\ldots,\Abox^+_n\}$, $n>1$, and $E^-=\{\Abox^-_1,\dots,\Abox^-_k\}$, and $X(0) \notin \Abox_i^+$, for any $X$ and $i$. Let $B,C$ be fresh atoms and $m=\max_{\Abox\in E^+\cup E^-}(\max(\Abox))+2$.
We set $E'^+=\{\Abox''^+_1,\Abox_2'^+,\dots,\Abox'^+_n\}$, where 
\begin{align*}
& \Abox_i'^+ = \{X(j+m)\mid X(j)\in\Abox^+_i\}
\cup\{B(m)\}\cup\{C(j)\mid m< j< m+\max(\Abox_i^+)\},\\
& \Abox_1''^+ = \{X(j+1)\mid X(j)\in\Abox^+_1\}\cup\{B(1)\}
\cup\{C(j)\mid 1< j< 1+\max(\Abox_1^+)\}.
\end{align*}
We also set $E'^-=\{\Abox'^-\}$ with 
$$
\Abox'^-=\{X((2i-1)m+j)\mid X(j)\in\Abox_i^-, i\in[1,k]\}
\cup\{B((2i-1)m)\mid i\in[1,k]\}\cup \{C(i)\mid (2i-1)m< i<2im,i\in[1,k]\}.
$$
See the picture below for an illustration.\\
\centerline{
\begin{tikzpicture}[nd/.style={draw,thick,circle,inner sep=0pt,minimum size=1.5mm,fill=white},xscale=0.35]
\node[label=left:{$\Abox''^+_1$}] at (0,0) {};
\node   at (1.2,0.3) (x1)  {};
\node at (5,0.5) {\tiny $\Abox_1^+$};
\node  at (5,0.4) (x2) {};
\node[fit = (x1)(x2), basic box = black]  {};
\draw[thick,gray,-] (0,0) -- (2.5,0);
\draw[thick,gray,dotted] (2.5,0) -- (3.5,0);
\draw[thick,gray,-] (3.5,0) -- (5.5,0);
\draw[thick,gray,dotted] (5.5,0) -- (6.5,0);
\slin{0,0}{}{$0$};
\slin{1,0}{$B$}{$1$};
\slin{2,0}{$C$}{$2$}
\node at (3,0.2) {\tiny$\ldots$};
\slin{4,0}{$C$}{};
\slin{5,0}{}{};
\end{tikzpicture}
\quad
\begin{tikzpicture}[nd/.style={draw,thick,circle,inner sep=0pt,minimum size=1.5mm,fill=white},xscale=0.35]
\node[label=left:{$\Abox'^+_i$}] at (0,0) {};
\node   at (4.2,0.3) (x1)  {};
\node at (8,0.5) (D) {\tiny $\Abox_i^+$};
\node  at (8,0.4) (x2) {};
\node[fit = (x1)(x2), basic box = black]  {};
\draw[thick,gray,-] (0,0) -- (1.5,0);
\draw[thick,gray,dotted] (1.5,0) -- (2.5,0);
\draw[thick,gray,-] (2.5,0) -- (5.5,0);
\draw[thick,gray,dotted] (5.5,0) -- (6.5,0);
\draw[thick,gray,-] (6.5,0) -- (8.5,0);
\slin{0,0}{}{$0$};
\slin{1,0}{}{$1$};
\slin{3,0}{}{};
\slin{4,0}{$B$}{$m$};
\slin{5,0}{$C$}{};
\node at (6,0.2) {\tiny$\ldots$};
\slin{7,0}{$C$}{};
\slin{8,0}{}{};
\end{tikzpicture}
}

\centerline{
\begin{tikzpicture}[nd/.style={draw,thick,circle,inner sep=0pt,minimum size=1.5mm,fill=white},xscale=0.5]
\node[label=left:{$\Abox'^-$}\!\!\!] at (0,0) {};
\node   at (4,0.3) (x1)  {};
\node at (6,0.5) {\tiny $\Abox_1^-$};
\node  at (6,0.4) (x2) {};
\node at (11.2,0.3) (x3)  {};
\node at (13,0.5) {\tiny $\Abox^-_k$};
\node  at (13,0.4) (x4) {};
\node[fit = (x1)(x2), basic box = black]  {};
\node[fit = (x3)(x4), basic box = black]  {};
\draw[thick,gray,-] (0,0) -- (1.5,0);
\draw[thick,gray,dotted] (1.5,0) -- (2.5,0);
\draw[thick,gray,-] (2.5,0) -- (5.5,0);
\draw[thick,gray,dotted] (5.5,0) -- (6.5,0);
\draw[thick,gray,-] (6.5,0) -- (8.5,0);
\draw[thick,gray,dotted] (8.5,0) -- (9.5,0);
\draw[thick,gray,-] (9.5,0) -- (12.5,0);
\draw[thick,gray,dotted] (12.5,0) -- (13.5,0);
\draw[thick,gray,-] (13.5,0) -- (15.5,0);
\slin{0,0}{}{$0$};
\slin{1,0}{}{$1$};
\slin{3,0}{}{};
\slin{4,0}{$B$}{\tiny$m$};
\slin{5,0}{$C$}{};
\slin{7,0}{$C$}{};
\slin{8,0}{}{\tiny$2m$};
\slin{10,0}{}{};
\slin{11,0}{$B$}{\tiny$(2k\!-\!\!1)m$};
\slin{12,0}{$C$}{};
\slin{14,0}{$C$}{};
\slin{15,0}{}{\tiny$2km$};
\end{tikzpicture}
}

\noindent

Let $E = (E'^+, E'^-)$.
We prove equiseparability of $E$ and $E'$. 

($\Rightarrow$) Suppose $\varphi =\rho_0\land \lambda_1\U(\rho_1\land \lambda_2\U ( \dots(\rho_{l-1}\land (\lambda_{l}\U \rho_{l}))\dots))$ with $l<m$ separates $E$. Then $\rho_0=\top$ (since $X(0) \notin \Abox_i^+$, for any $X$). Consider the query $\Diamond\varphi'$, where $\varphi'$ is $\varphi$ in which $\rho_0$ is replaced by $B$ 
and $\lambda_i\ne\bot$ are replaced with $\lambda_i\land C$. Since $\Abox^+_j\models\varphi$ we have $\Abox\models\varphi'$ for all $\Abox\in E'^+$. If $\Abox'^-\models\Diamond\varphi'$, then $\Abox'^-\models\varphi'((2j-1)m)$ for some $j$. Since all $\lambda'$s contain $C$ and $l<m$, we have $\Abox^-_j\models\varphi$, which is impossible. Therefore, $\Diamond\varphi'$ separates $E'$.

($\Leftarrow$) Suppose that $\psi =\rho_0\land \lambda_1\U(\rho_1\land \lambda_2\U ( \dots(\rho_{l-1}\land (\lambda_{l}\U \rho_{l}))\dots))$ with $\rho_l\ne\top$ 
separates $E'$. Since $\Abox''^+_1\models \psi$, we have $l<m$. Find the smallest $i$ such that $\rho_i\ne\top$. As $\Abox'^+_2\models \psi$, there is $i'\le i$ with $\lambda_{i'} = \top$.

If $B\in\rho_i$ then, since $\Abox''^+_1\models \psi$, we have $i=1$ and $\lambda_1=\top$. Let $\psi'= \lambda'_{2}\U(\rho'_{2}\land \lambda'_{3}\U ( \dots(\rho'_{l-1}\land (\lambda_{l}\U \rho_{l}))\dots))$ where $\rho'_j=\rho_j\setminus\{C\}$ and $\lambda'_j=\lambda_j\setminus\{C\}$. We see that in this case $\Abox'^+_j\models \psi'(m)$, and so $\Abox^+_j\models\psi'$ for all $j$. Also since $\Abox'^-\not\models\psi$, we have $\Abox^-_j\not\models\psi'$, and so $\psi'$ separates $E$.

If $B\notin \rho_i$, let $\psi'= \Diamond(\rho_i\land\lambda_{i+1}\U(\rho_{i+1}\land \lambda_{i+2}\U ( \dots(\rho_{l-1}\land (\lambda_{l}\U \rho_{l}))\dots)))$. Then $\Abox^+\models\psi'$ for all $\Abox^+\in E'^+$. If $\Abox'^-\models\psi'$, then we have $\Abox'^-\models\psi$ as $\lambda_{i'} = \top$, and so $\psi'$ also separates $E'$. Consider the instance $\Abox_l'^-$ (corresponding to some $\Abox^-_l$) shown below:\\ 
$$\Abox_l'^-=\{X((2l-1)m+j)\mid X(j)\in\Abox_l^-\}\cup\{B((2l-1)m)\}\cup \{C(j)\mid (2l-1)m<j<2lm \}.$$
\centerline{
\begin{tikzpicture}[nd/.style={draw,thick,circle,inner sep=0pt,minimum size=1.5mm,fill=white},xscale=0.5]
\node[label=left:{$\Abox_k'^-$}\!\!\!] at (0,0) {};
\node   at (4,0.3) (x1)  {};
\node at (6,0.5) {\tiny $\Abox_l^-$};
\node  at (6,0.4) (x2) {};
\node[fit = (x1)(x2), basic box = black]  {};
\draw[thick,gray,-] (0,0) -- (1.5,0);
\draw[thick,gray,dotted] (1.5,0) -- (2.5,0);
\draw[thick,gray,-] (2.5,0) -- (5.5,0);
\draw[thick,gray,dotted] (5.5,0) -- (6.5,0);
\draw[thick,gray,-] (6.5,0) -- (8.5,0);
\draw[thick,gray,dotted] (8.5,0) -- (9.5,0);
\slin{0,0}{}{$0$};
\slin{1,0}{}{$1$};
\slin{3,0}{}{};
\slin{4,0}{$B$}{\tiny$(2l-1)m$};
\slin{5,0}{$C$}{};
\slin{7,0}{$C$}{};
\slin{8,0}{}{\tiny$2lm$};
\end{tikzpicture}
}

\noindent
Since $\Abox_l'^-\subseteq\Abox'^-$, we have $\Abox_l'^-\not\models\psi'$, and so $\Abox_l^- \not\models\psi''$, where $\psi''$ is $\psi'$ with the $\rho_j$ replaced by $\rho_j\setminus\{C\}$ and $\lambda_j$ replaced by $\lambda_j\setminus\{B,C\}$, for $\lambda_j\ne\emptyset$. Clearly, $\Abox^+\models\psi''$ for all $\Abox^+\in E^+$, and so $\psi''$ separates $E$.

\medskip

$(ii.2)$ To show  $\mathsf{QBE}(\mathcal{Q}[\nxt,\Diamond]) \le_p \QBE(\mathcal{Q}[\Diamond])$, suppose that $E=(E^{+},E^{-})$ is given. Let $m$ be the maximum over all $\max \mathcal{D}$ with $\mathcal{D}\in E^{+}$. Introduce, for every $A$ such that $A(\ell) \in \mathcal{D}$ for some $\mathcal{D}\in E^{+}$, a
fresh atom $A_{k}$, $0<k\leq m$, and extend any $\mathcal{D}\in E^{+}\cup E^{-}$ to a data instance $\mathcal{D}'$ by adding $A_{k}(\ell)$ to $\mathcal{D}$ if $A(k+\ell)\in \mathcal{D}$. Let $F^{+}=\{\mathcal{D}' \mid \mathcal{D}\in E^{+}\}$ and
$F^{-}=\{\mathcal{D}' \mid \mathcal{D}\in E^{-}\}$. Then clearly $E$ is $\mathcal{Q}[\Diamond\nxt]$-separable iff $F$ is $\mathcal{Q}[\Diamond]$-separable.

The converse reduction and $\QBE(\mathcal{Q}[\Diamond]) \le_p \QBE(\mathcal Q[\Us])$ are proved similarly to (\emph{ii}.1).
\end{proof}

\section{Proofs for Section~\ref{sec:no-ont}}
We show the complexity results in Table~\ref{table:free}. To this end, we first introduce some notation for sequence problems.
Let $\Sigma$ be an alphabet of symbols. A \emph{word over $\Sigma$} is a finite sequence of symbols from $\Sigma$. A word $\alpha$ is a \emph{subsequence} of a word $\beta$ if $\alpha$ can be obtained from $\beta$ by removing zero or more symbols anywhere in $\beta$. For a set $S$ of words, we call a word $\alpha$ a \emph{common subsequence of $S$} if it is a subsequence of every word in $S$.
The \emph{consistent subsequence problem (CSSP)} is formulated as follows:
\begin{description}
	\item[Given:] sets $S^{+}$ and $S^{-}$ of words over an alphabet $\Sigma$.	
	\item[Problem:] decide whether there exists a common subsequence of $S^{+}$ that is a not subsequence of any word in $S^{-}$.
\end{description}

The following is shown in~\cite{DBLP:journals/tcs/Fraser96}:

\begin{theorem}\label{thm:Fraser1}
$(i)$ CSSP is \NP{}-hard even if both the alphabet and $S^{+}$ have cardinality two.
	
$(ii)$ CSSP is \NP{}-hard even if $S^{-}$ is a singleton.
\end{theorem}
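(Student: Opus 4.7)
The plan is to obtain both claims by polynomial-time reductions from well-known $\NP$-hard problems. The key observation is that CSSP sits on top of the LCS family: restricting $S^{+}$ and $\Sigma$ on one side, or $S^{-}$ on the other, shifts the combinatorial burden between the positive and the negative constraints, and each restriction calls for a different encoding.

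For $(i)$, since LCS of two strings over any alphabet is in $\PTime$, the hardness must come from the negative examples. I would reduce from $3$-SAT (or a suitable restricted variant). Given a formula $\varphi = \bigwedge_{j\le m} C_j$ over variables $x_1,\dots,x_n$, I would design two positive strings $w_1^{+},w_2^{+}\in\{0,1\}^{\ast}$ whose common subsequences are in bijection (up to trivial padding) with truth assignments: each variable gets a binary-encoded ``gadget'' in $w_1^{+}$ and a mirror gadget in $w_2^{+}$, so that any common subsequence is forced to pick, for each~$i$, one of the two possible bit-patterns for $x_i$. For every clause $C_j$, one adds to $S^{-}$ a negative string encoding exactly those assignment patterns that falsify $C_j$; the binary alphabet is ensured by encoding all internal symbols and separators as short binary blocks. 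A separating subsequence then exists iff $\varphi$ is satisfiable. The delicate point will be to guarantee that (a) the two positive strings do not admit spurious common subsequences that fail to encode a legitimate assignment and (b) the binary encoding does not create unintended subsequence relations with the negative strings; this is typically achieved by inserting sufficiently long marker blocks between variable gadgets and arguing alignment monotonicity.

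For $(ii)$, with $|S^{-}|=1$, the hardness must arise from the interaction of many positive examples, so the natural source is LCS of $n$ strings, which is $\NP$-hard even over a binary alphabet (Maier, 1978). Given strings $w_1,\dots,w_n$ and a target length $k$, I would set $S^{+}=\{w_1,\dots,w_n\}$ (possibly padded with fresh separator symbols to disambiguate alignments) and construct a single word $s^{-}$ that plays the role of a ``universal template'' for too-short candidates, so that every common subsequence of $S^{+}$ shorter than $k$ is a subsequence of $s^{-}$, while every common subsequence of length $\ge k$ fails to embed in $s^{-}$. A first approximation is $s^{-} = (\sigma_1\cdots\sigma_{|\Sigma|})^{k-1}$, which already contains every word of length $\le k-1$ as a subsequence. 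The main obstacle will be to strengthen this template (e.g., by enriching the alphabet with end-markers, or replacing each $w_i$ by $w_i\#_i$ and $s^{-}$ by a carefully concatenated ``sum'' of such blocks) so that length $\ge k$ becomes not merely sufficient but also \emph{necessary} for avoiding $s^{-}$; without such tightening, long strings could still accidentally embed into $s^{-}$, breaking the reduction. Once this technical step is in place, the equivalence between the existence of a separating subsequence and of a length-$k$ common subsequence of $\{w_1,\dots,w_n\}$ is straightforward.
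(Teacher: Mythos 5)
The paper does not actually prove this statement: Theorem~\ref{thm:Fraser1} is imported verbatim from Fraser's paper \cite{DBLP:journals/tcs/Fraser96}, so there is no in-paper argument to compare yours against; the closest relative is Lemma~\ref{lem:np} in the appendix, which \emph{uses} these hardness results (together with Maier's KsubS) rather than proving them. Judged on its own terms, your proposal is a programme rather than a proof. For $(i)$ the high-level shape is the right instinct (LCS of two strings is in \PTime{}, so the hardness must be injected through $S^{-}$, one negative string per clause), but the entire technical content lies in the variable gadget that forces the common subsequences of \emph{two binary strings} to be in bijection with truth assignments, and that gadget is exactly what you defer to ``marker blocks and alignment monotonicity''. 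Over a two-letter alphabet the markers are built from the same two letters as the payload, so nothing in the sketch rules out common subsequences that skip a variable gadget, straddle two gadgets, or reuse marker symbols as payload; until those are excluded, the claimed bijection, and hence the reduction, is unproven.

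For $(ii)$ the gap is sharper and partly a wrong turn. Your first approximation $s^{-}=(\sigma_1\cdots\sigma_{|\Sigma|})^{k-1}$ cannot be repaired by end-markers alone: any string that absorbs both $A^{k-1}$ and $B^{k-1}$ as subsequences contains at least $2(k-1)\ge k$ occurrences of letters from $\{A,B\}$ and therefore contains \emph{some} word of $\{A,B\}^{k}$ as a subsequence, so a single string containing all words of length $\le k-1$ but no word of length $k$ does not exist. What remains possible is only the weaker requirement you state (reject exactly the \emph{common subsequences of $S^{+}$} of length $\ge k$), but then one must control, instance by instance, which long words $s^{-}$ accidentally absorbs against which long words survive as common subsequences of the padded $S^{+}$ --- and that co-design of $S^{+}$ and $s^{-}$ is the whole proof, not a ``technical step''. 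It is instructive that the paper's own single-negative-example reduction from KsubS (Lemma~\ref{lem:np}) solves precisely this difficulty in the \emph{data-instance} setting by placing the atoms $A$ and $B$ at the same timestamp, so that $k-1$ ``fat'' positions absorb every candidate of length $\le k-1$ while no candidate of length $k$ fits; that move has no analogue for plain words, which is why statement $(ii)$ needs Fraser's dedicated construction rather than a padding of the Maier instance.
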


Another problem of interest for us is the following \emph{common subsequence problem (KsubS)}:
\begin{description}
	\item[Given:] a set $S$ of words over an alphabet $\Sigma$ and a number $k$.	
	\item[Problem:] decide whether there exists a common subsequence of $S$ of length at least $k$.
\end{description}

The following is shown in~\cite{DBLP:journals/jacm/Maier78}:

\begin{theorem}\label{Meier1}
  KsubS is \NP{}-hard even if the alphabet has cardinality $2$.
\end{theorem}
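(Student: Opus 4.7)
The plan is to prove NP-hardness by polynomial-time reduction from \textsc{Vertex Cover} (equivalently \textsc{Independent Set}), which is NP-complete on general graphs. Given a graph $G = (V, E)$ with $V = \{v_1, \ldots, v_n\}$ and an integer $k$, I would construct a set $S$ of binary words and a threshold $k'$ so that $G$ has an independent set of size at least $n-k$ iff $S$ admits a common subsequence of length at least $k'$.

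The central design is a family of atomic vertex codewords. For each vertex $v_i$, take $c_i = 0\,1^{i}\,0\,1^{n+1-i}\,0$ --- all of the same length $L = n+4$, pairwise distinct, and pairwise incomparable under the subsequence order. For $U = \{v_{i_1}, \ldots, v_{i_t}\}$ with $i_1 < \cdots < i_t$, write $p_U = c_{i_1}c_{i_2}\cdots c_{i_t}$. For each edge $e = \{v_i, v_j\} \in E$ with $i<j$, I would include in $S$ one sequence $s_e$ over $\{0,1\}$ designed so that $p_U$ is a subsequence of $s_e$ exactly when $\{v_i, v_j\} \not\subseteq U$; a natural template is a two-block word whose first block lists all $c_\ell$ with $\ell \ne j$ and whose second block lists all $c_\ell$ with $\ell \ne i$, so that selecting both $c_i$ and $c_j$ would force the order $i<j$ to fail. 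Finally, set $k' = (n-k)L$.

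The easy direction is clean: an independent set $U$ of size $n-k$ yields $p_U$, which is a subsequence of every $s_e$ (since $|U \cap \{v_i, v_j\}| \leq 1$ for every edge) and has length $k'$. The hard direction requires an \emph{alignment lemma}: every common subsequence of $S$ of length at least $k'$ is, up to a structural normalisation, of the form $p_U$ for some $U \subseteq V$ with $|U| \geq n-k$, and then the edge gadgets force $U$ to be independent, so $V \setminus U$ is a vertex cover of size $\leq k$.

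The main obstacle is precisely this alignment lemma over a two-symbol alphabet. With an alphabet of size $\geq n+1$, one could delimit codewords with unique separator symbols and the lemma would be immediate; but over $\{0,1\}$ a candidate common subsequence can in principle weave the tail of one codeword together with the head of another and thereby achieve extra length without corresponding to any valid vertex selection. The technical heart of the proof is to rule out such cross-codeword alignment, either by stretching the runs inside each $c_i$ to grow fast enough in $i$ (so that any two distinct codewords share only a short common subsequence) or by augmenting $S$ with a small number of calibration sequences that pin codeword boundaries down. Once the lemma is established, the reduction is clearly polynomial in $|V|+|E|$, and NP-hardness of KsubS with $|\Sigma|=2$ follows.
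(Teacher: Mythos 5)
First, a point of comparison: the paper does not prove this statement at all --- Theorem~\ref{Meier1} is quoted directly from Maier (1978), so there is no internal proof to measure your attempt against. Your high-level strategy (a polynomial reduction from \textsc{Vertex Cover}/\textsc{Independent Set} using per-vertex binary codewords and per-edge gadget strings) is in the same spirit as Maier's original argument, but as written your proposal has a genuine gap, and in fact the specific construction you give fails.

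The concrete problem is that the hard direction is not merely unproven but false for your parameters. Each gadget string $s_e$ is a concatenation of $2(n-1)$ codewords of length $n+4$, and each codeword $c_\ell = 0\,1^{\ell}\,0\,1^{n+1-\ell}\,0$ contains $n+1$ occurrences of the symbol $1$; hence every $s_e$ contains at least $2(n-1)(n+1)=2n^2-2$ ones, so the string $1^{2n^2-2}$ is a common subsequence of $S$ of that length. Your threshold is $k'=(n-k)(n+4)\le n^2+4n$, which is strictly smaller than $2n^2-2$ for all $n\ge 5$, \emph{regardless of the input graph}. So $S$ always admits a common subsequence of length at least $k'$ and the reduction cannot distinguish yes-instances from no-instances. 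This is exactly the failure mode you flag under the name ``alignment lemma'', but it is not a technicality that can be deferred: ruling out degenerate common subsequences (long runs of a single symbol, or interleavings of the tail of one codeword with the head of another) over a two-letter alphabet is the entire content of the theorem, and neither of your two proposed fixes is developed far enough to see that it could work --- stretching the runs inside $c_i$ makes the single-symbol common subsequences \emph{longer}, not shorter, while a ``calibration sequence'' poor in $1$'s would also destroy the legitimate candidates $p_U$. Until an alignment lemma is actually stated and proved for a construction in which the maximum common subsequence length genuinely encodes the independence number, the proposal does not establish the theorem.
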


We are now in a position to prove the results for $\nxt\Diamond$-queries
in Table~\ref{table:free}. We start by proving the \NP{}-lower bounds for  $\mathcal{Q}_{p}[\Diamond]$. We actually show a slightly stronger result than in the table.

\begin{lemma}\label{lem:np}
	 QBE$(\mathcal{Q}_{p}[\Diamond])$ with two positive examples or a single  negative example is \NP{}-hard.
\end{lemma}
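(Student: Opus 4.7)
I would prove this lemma by a direct polynomial-time reduction from the Consistent Subsequence Problem (CSSP), exploiting the well-known correspondence, already highlighted in the introduction, between $\mathcal{Q}_{p}[\Diamond]$-queries with singleton $\rho_i$'s and subsequences. By Theorem~\ref{thm:Fraser1}, CSSP is \NP-hard even when $|S^{+}|=2$ (giving the bounded-positive case) and also when $|S^{-}|=1$ (giving the bounded-negative case), so a single reduction handles both claims.

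The plan is to encode each input word $\alpha=a_1 a_2\cdots a_n$ over $\Sigma$ as the data instance $\mathcal{D}_{\alpha}=\{a_i(i)\mid 1\le i\le n\}$, viewing the letters of $\Sigma$ as atomic propositions and placing the empty type at timestamp $0$. Given a CSSP instance $(S^{+},S^{-})$, I would set $E^{\sigma}=\{\mathcal{D}_{\alpha}\mid\alpha\in S^{\sigma}\}$ for $\sigma\in\{+,-\}$ and claim: $E=(E^{+},E^{-})$ is $\mathcal{Q}_{p}[\Diamond]$-separable iff there is a common subsequence of $S^{+}$ that is not a subsequence of any word in $S^{-}$. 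The construction is clearly polytime and preserves the cardinalities $|E^{+}|=|S^{+}|$ and $|E^{-}|=|S^{-}|$, so Theorem~\ref{thm:Fraser1}(i) yields the two-positive-example hardness and Theorem~\ref{thm:Fraser1}(ii) the single-negative-example one.

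For the ($\Leftarrow$) direction, a common subsequence $b_1\cdots b_k$ of $S^{+}$ that is not a subsequence of any $\alpha\in S^{-}$ translates directly into the separating query $\Diamond(b_1\land\Diamond(b_2\land\dots\land\Diamond b_k))$. For ($\Rightarrow$), suppose $\varkappa=\rho_0\land\Diamond(\rho_1\land\dots\land\Diamond\rho_n)$ separates $E$. Since $\tp_{\mathcal{D}_{\alpha}}(0)=\emptyset$, we must have $\rho_0=\top$; and since each non-zero timestamp in any $\mathcal{D}_{\alpha}$ carries exactly one atom, each $\rho_i$ that is satisfiable at some such timestamp in any $\mathcal{D}_{\alpha}$ with $\alpha\in S^{+}$ must either be $\top$ or a single atom. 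Dropping the $\top$-conjuncts yields a word $b_1\cdots b_k$ which, by the satisfaction semantics of path $\Diamond$-queries, is a subsequence of every $\alpha\in S^{+}$ and, by the non-separation of negatives, a subsequence of no $\alpha\in S^{-}$.

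The only step requiring a little care is the normalisation of the separating query to a word-shaped $\mathcal{Q}_{p}[\Diamond]$-query, i.e., arguing that conjunctions of two or more distinct atoms at a single $\rho_i$ cannot occur in a separating query over these ``word-like'' instances. This is immediate from the fact that each timestamp of $\mathcal{D}_{\alpha}$ holds exactly one atom, so a non-trivial conjunction would falsify $\varkappa$ on all positive examples. I do not expect any real obstacle beyond this observation, and no appeal to Theorem~\ref{th:data-reduction} is needed since the $\Qp[\Diamond]$ lower bounds transfer to $\Qp[\nxt,\Diamond]$ via part $(ii.1)$ of that theorem.
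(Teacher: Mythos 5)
There is a genuine gap in the ($\Rightarrow$) direction of your claimed equivalence, and it makes the reduction unsound. A $\mathcal{Q}_{p}[\Diamond]$-query may contain components $\rho_i=\top$, i.e., iterated $\Diamond\Diamond\cdots$, and such a query is \emph{strictly stronger} than the one obtained by dropping the $\top$-components: the extra $\Diamond$'s force the matched atoms to occur at sufficiently late timestamps. Hence from $\mathcal{D}_\alpha\not\models\varkappa(0)$ for a negative $\alpha$ you cannot conclude that the word $b_1\cdots b_k$ of non-$\top$ components is not a subsequence of $\alpha$ --- the weaker plain-subsequence query may hold on $\alpha$ even though $\varkappa$ does not. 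Concretely, take $S^{+}=\{ya,\,za\}$ and $S^{-}=\{a\}$ (a legitimate instance with a single negative example). Its only common subsequences are $\epsilon$ and $a$, both subsequences of $a$, so this is a no-instance of CSSP; yet your encoding yields $E^{+}=\{\{y(1),a(2)\},\{z(1),a(2)\}\}$, $E^{-}=\{\{a(1)\}\}$, which \emph{is} separated by the $\Qp[\Diamond]$-query $\Diamond\Diamond a$, since the negative instance has $a$ only at timestamp $1$. So your map sends a no-instance of CSSP to a yes-instance of QBE, and the sentence ``by the non-separation of negatives, a subsequence of no $\alpha\in S^{-}$'' is where the argument breaks.

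This is exactly the difficulty the paper's detailed argument is built to avoid: for the single-negative case it reduces from KsubS and places the letters of the auxiliary instances at timestamps $i(k+2)$, so that any $\top$- (or $\nxt$-) components of a candidate separator can be ``shifted'' and shown redundant, forcing every separator to be a genuine common subsequence of length $k$. Your encoding has no such padding, so the correspondence between separators and common subsequences fails. (For the two-positive claim the source instances from Theorem~\ref{thm:Fraser1}$(i)$ have a binary alphabet, where the counterexample above does not directly apply, but you give no argument that the equivalence holds in that restricted setting either, and it is not obvious.) To repair the proof you must either pad the encoded words so that separators exploiting $\top$-gaps can be normalised away, or follow the paper's KsubS construction.
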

\begin{proof}
	The proof by polynomial-time reduction of CSSP (as formulated in Theorem~\ref{thm:Fraser1}) is trivial. It is also of interest to give a proof of the second claim (a single negative example) via a polynomial-time reduction of KsubS.
	The proof also works directly for $\mathcal{Q}_{p}^{\circ}[\Diamond]$-queries.
	Suppose that an instance $S,k$ of KsubS over alphabet $\{A,B\}$ is given.
	We define $E$ of the form $(E^{+},\{\mathcal{D}^{-}\})$ such that the following conditions are equivalent:
	\begin{itemize}
		\item there exists a common subsequence of $S$ of length $k$;
		\item there exists $\q\in \mathcal{Q}_{p}[\Diamond]$ that separates $(E^{+},\{\mathcal{D}^{-}\})$;
		\item there exists $\q\in \mathcal{Q}_{p}^{\circ}[\Diamond]$ that separates $(E^{+},\{\mathcal{D}^{-}\})$.
	\end{itemize}
	We represent each word $w\in S$ as a data instance $\Abox_{w}$ starting at time point $1$ (for example, the word $w=ABBA$ is represented as $\Abox_w=\{A(1), B(2), B(3), A(4)\}$). Now let
	$$
	\Abox^{+} =\{ A(i(k+2)),B(i(k+2))\mid 1 \leq i \leq k\}
	$$
	and 
	$$
	\Abox^{-} =\Abox^{+}\setminus \{A(k(k+2)),B(k(k+2))\}
	$$
	and let $E^{+}=\{\Abox_{w} \mid w\in S\}\cup \{\Abox^+\}$. Assume first that there exists a common subsequence $C_{1}\cdots C_{k}$ of $S$ of length $k$. Then
	$\Diamond(C_{1}\wedge \Diamond(C_{2} \wedge \cdots \wedge \Diamond C_{k}))$ separates $(E^{+},\{\mathcal{D}^{-}\})$. Now assume that a query
	\begin{align*}
		\varkappa& = \rho_0 \land \Diamond (\rho_1 \land \Diamond (\rho_2 \land \dots \land \Diamond \rho_n) ),
	\end{align*}
	where every $\rho_{i}$ is a $\mathcal{Q}[\nxt]$-query, separates $(E^{+},\{\mathcal{D}^{-}\})$.
	As $\Abox_{w}\models \varkappa(0)$ for some $w\in S$, we have that $n\leq k$ and all $\rho_{i}$ have depth bounded by $k$. Then $\rho_{0}=\top$ and also, as there are `gaps' of length $k+1$ between any two entries in $\mathcal{D}^{+}$ and since $\mathcal{D}^{+}\models \varkappa(0)$ we may assume that each $\rho_{i}$, $i>0$, is of the form $\nxt^{m_{i}}\rho_{i}'$ with $0\leq m_{i} \leq k$ and
	$\rho_{i}'$ a conjunction of atoms. Observe that we can satisfy, in $\mathcal{D}^{+}$,
	\begin{itemize}
		\item $\rho_{1}$ in the interval $\{1,\ldots,k+2\}$;
		\item $\rho_{2}$ in the interval $\{(k+1)+1,\ldots,2(k+2)\}$;
		\item and so on, with $\rho_{n}$ satisfied in the interval $\{(n-1)(k+2)+1,\ldots,n(k+2)\}$.
	\end{itemize}
	In particular, if $\rho_{i}$ is a conjunction of atoms, then it can be satisfied in $i(k+2)$.
	If $n<k$, then it follows directly that $\mathcal{D}^{-}\models \varkappa(0)$, and we have derived a contradiction.
	Hence $n=k$. Then, as the depth of $\varkappa$ is bounded by $k$, $\rho_{k}$ is a conjunction of atoms. In fact, one can now show by induction starting with $\rho_{k-1}$ that all $\rho_{i}$, $i>0$, are nonempty conjunctions of atoms. Otherwise a shift to the left shows that $\mathcal{D}^{-}\models \varkappa(0)$ and we have derived a contradiction.
	Thus $\varkappa$ takes the form
	$\Diamond (\rho_1 \land \Diamond (\rho_2 \land \dots \land \Diamond \rho_k) )$ with all $\rho_{i}$ non-empty. It follows from $\Abox_{w}\models \varkappa(0)$ for all $w\in S$ that
	$\varkappa$ defines a common subsequence of $S$ of length $k$, as required.
	%
\end{proof}
The \NP{}-lower bound for QBE$(\mathcal{Q}_{p}[\nxt,\Diamond])$ with a bounded number of positive examples or a single negative example follows from Lemma~\ref{lem:np} and Theorem~\ref{th:data-reduction} $(ii.1)$.

We next obtain the \NP{}-lower bound for QBE$(\mathcal{Q}[\Diamond])$ with a single negative example from Lemma~\ref{lem:np} by observing that it follows from the proof of the first part of Theorem~\ref{th:data-reduction} $(i.3)$ that any $(E^{+},E^{-})$ with $E^{-}$ a singleton is $\mathcal{Q}_{p}[\Diamond]$-separable if, and only if, it is $\mathcal{Q}[\Diamond]$-separable. The \NP{}-lower bound for QBE$(\mathcal{Q}[\nxt,\Diamond])$ with a single negative example follows from
the \NP{}-lower bound for QBE$(\mathcal{Q}[\Diamond])$ with a single negative example using Theorem~\ref{th:data-reduction} $(ii.2)$.

We come to the $\NP{}$-upper bounds. Recall that a query language $\mathcal{Q}$ has the \emph{polynomial separation property (PSP) under an ontology language $\mathcal{L}$} if any $\mathcal{Q}$-separable example is separated by a query in $\mathcal{Q}$ of polynomial size.
The \NP{}-upper bounds for query languages using $\Diamond$ (and $\nxt$) in Table~\ref{table:free} follow trivially from the following result.

\begin{lemma}\label{lem:psp}
	Let $\mathcal{Q}\in \{\mathcal{Q}_{p}[\Diamond],\mathcal{Q}_{p}[\nxt,\Diamond],
	\mathcal{Q}[\Diamond],\mathcal{Q}[\nxt,\Diamond]\}$.
	Then $\mathcal{Q}$ has the PSP under the empty ontology.
\end{lemma}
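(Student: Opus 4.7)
My plan is to establish the PSP directly for the path classes and then lift it to the branching classes via Theorem~\ref{th:data-reduction}~(\emph{i}.3).

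Suppose $\varkappa = \rho_0 \land \op_1(\rho_1 \land \dots \land \op_n \rho_n) \in \mathcal{Q}_p[\nxt,\Diamond]$ separates $E=(E^{+},E^{-})$. I would shrink $\varkappa$ to polynomial size by two observations. First, every atom occurring in $\varkappa$ must lie in $\sig(E^{+})$: otherwise some $\rho_i$ fails at every position of every $\mathcal{D}\in E^{+}$, contradicting separation; hence $|\rho_i|\le|\sig(E)|$. Second, I would iteratively ensure that $\rho_n\neq\top$: since the time domain is $\mathbb{N}$, we have $\nxt\top\equiv\Diamond\top\equiv\top$, so any trivial tail $\op_n\top$ absorbs into $\rho_{n-1}$ while preserving separation. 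Once $\rho_n$ is a non-empty conjunction of atoms from $\sig(E^{+})$, every satisfying assignment for $\varkappa$ at $0$ in a positive example $\mathcal{D}$ is a strictly monotone map $f\colon[0,n]\to\mathbb{N}$ with $f(0)=0$ and $f(n)\le\max\mathcal{D}$, forcing $n\le\min_{\mathcal{D}\in E^{+}}\max\mathcal{D}\le|E|$ (timestamps are given in unary). Combined, $|\varkappa|=O(|E|^{2})$; restricting operators to $\Diamond$ yields the same bound for $\mathcal{Q}_p[\Diamond]$.

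For the branching classes, suppose $\varkappa$ separates $E$. For each $\mathcal{D}\in E^{-}$, the same $\varkappa$ separates $(E^{+},\{\mathcal{D}\})$; applying the equivalences from the proof of Theorem~\ref{th:data-reduction}~(\emph{i}.3) rewrites $\varkappa$ as a conjunction of $\mathcal{Q}_p^{\circ}[\Diamond]$-queries, and because the negative side is a singleton at least one conjunct alone must already separate $(E^{+},\{\mathcal{D}\})$. By the path case just established, that conjunct admits a polynomial-size equivalent $\varkappa_{\mathcal{D}}\in\mathcal{Q}_p^{\circ}[\Diamond]$. Then $\bigwedge_{\mathcal{D}\in E^{-}}\varkappa_{\mathcal{D}}$ is a polynomial-size $\mathcal{Q}[\nxt,\Diamond]$-separator for $E$; dropping all $\nxt$'s handles the $\mathcal{Q}[\Diamond]$ case.

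The main subtlety is the depth bound for the path case, which hinges on the small but crucial tautologies $\nxt\top\equiv\Diamond\top\equiv\top$ over $\mathbb{N}$ that allow us to strip trivially-true tails without losing separation. Everything else is routine: the atom-restriction step is immediate, and the path-to-branching lift uses only that a conjunction separates a singleton-negative example set iff one of its conjuncts does.
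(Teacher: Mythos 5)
Your proof is correct and follows essentially the same route as the paper: the temporal depth of a path query is bounded by the least maximal timestamp of a positive example (after stripping trivially true tails), and the branching case is reduced, via the decomposition of Theorem~\ref{th:data-reduction}~(\emph{i}.3), to one $\mathcal{Q}_{p}^{\circ}[\Diamond]$-conjunct per negative example, which is then shrunk and reconjoined. The only point worth making explicit is that those conjuncts have $\mathcal{Q}_{p}[\nxt]$-queries rather than plain conjunctions of atoms in place of the $\rho_i$, so you must also bound their $\nxt$-depth --- but the same satisfying-assignment argument yields this immediately, exactly as in the paper.
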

\begin{proof}
	The proof for $\mathcal{Q}\in \{\mathcal{Q}_{p}[\Diamond],\mathcal{Q}_{p}[\nxt,\Diamond]\}$ is trivial: if $\Abox\models \varkappa(0)$ for some $\varkappa \in \mathcal{Q}$, then $\varkappa$ is clearly equivalent to a query in $\mathcal{Q}$ whose temporal depth does not exceed the maximal timestamp in $\Abox$, and so is of linear size in $\Abox$.
	
	For $\mathcal{Q}\in \{	\mathcal{Q}[\Diamond],\mathcal{Q}[\nxt,\Diamond]\}$, the argument is as follows. Assume that $\varkappa$ separates $(E^{+},E^{-})$. We may assume that $\varkappa$ is a conjunction of at most $|E^{-}|$-many queries in $\mathcal{Q}_{p}^{\circ}[\Diamond]$ of the form
	\begin{align*}
		\varkappa& = \rho_0 \land \Diamond (\rho_1 \land \Diamond (\rho_2 \land \dots \land \Diamond \rho_n) ),
	\end{align*}
	where every $\rho_{i}$ is a query in $\mathcal{Q}_{p}[\nxt]$. Then the conjuncts of $\varkappa$ are equivalent to queries in which $n$ does not exceed the maximal timestamps in data instances in $E^{+}$ and each $\rho_{i}$ is a
	query in $\mathcal{Q}_{p}[\nxt]$ whose temporal depth also does not exceed the maximal timestamps in data instances in $E^{+}$.
\end{proof}

We next complete the description of the polynomial-time algorithm solving  $\QBEba(\mathcal{Q}_{p}[\nxt,\Diamond])$ for $E^{+}=\{\mathcal{D}_{1}^{+},\mathcal{D}_{2}^{+}\}$ and
$E^{-}=\{\mathcal{D}_{1}^{-},\mathcal{D}_{2}^{-}\}$. The extension to arbitrary $E^{+},E^{-}$ is straightforward.
Recall that we assume that $\varkappa$ takes the form~\eqref{dnpath} with $\rho_{n}\ne \top$. Also recall that $S_{i,j}$ is the set of tuples $(k,\ell_1,\ell_2,n_{1},n_{2})$ such that
\begin{enumerate}
	\item $\ell_{1} \leq i\leq \max \mathcal{D}_{1}^{+}$,
	\item $\ell_{2} \leq j\leq \max \mathcal{D}_{2}^{+}$,
\end{enumerate}
and there is
$
\varkappa=\rho_0 \land \op_1 (\rho_1 \land \dots \land \op_k \rho_k)
$
for which
\begin{enumerate}
	\item there are satisfying assignments $f_{1},f_{2}$ in $\mathcal{D}_{1}^{+}$ and $\mathcal{D}_{2}^{+}$ with $f_{1}(k)=\ell_{1}$ and $f_{2}(k)=\ell_{2}$, respectively, and
	\item $n_{1}$ is minimal with a satisfying assignment $f$ for $\varkappa$ in $\mathcal{D}_{1}^{-}$ such that $f(k)=n_{1}$, and $n_{1}=\infty$ if there is no such $f$; $n_{2}$ is minimal with a satisfying assignment $f$ for $\varkappa$ in $\mathcal{D}_{2}^{-}$ such that $f(k)=n_{2}$, and $n_{2}=\infty$ if there is no such $f$.
\end{enumerate}
Then clearly there is a $\varkappa\in \mathcal{Q}_{p}[\nxt,\Diamond]$ separating
$(E^{+},E^{-})$ if there are $k,\ell_{1},\ell_{2}$ such that $(k,\ell_{1},\ell_{2},\infty,\infty) \in S_{\max \mathcal{D}_{1}^{+},\max \mathcal{D}_{2}^{+}}$.

So it suffices to compute $S_{\max \mathcal{D}_{1}^{+},\max \mathcal{D}_{2}^{+}}$
in polytime incrementally, starting with $S_{0,0}$. We have computed $S_{0,j}$ and $S_{i,0}$ already. Recall that $t_{\mathcal{D}}(i)= \{A \mid A(i)\in \mathcal{D}\}$.
To obtain $S_{i+1,j+1}$, we add to $S_{i+1,j}\cup S_{i,j+1}$ any tuple
$(k,\ell_{1},\ell_{2},n_{1},n_{2})$ for which there is $(k',\ell_{1}',\ell_{2}',n_{1}',n_{2}')\in S_{i+1,j}\cup S_{i,j+1}$
with $k'<k$, $\ell_{1}'<\ell_{1}\leq i+1$, $\ell_{2}'<\ell_{2}\leq j+1$
such that
$\ell_{1}=i+1$ or $\ell_{2}=j+1$ and, for $m=k-k'+1\geq 0$, we have $\ell_{1}-m>\ell_{1}'$, $\ell_{2}-m>\ell_{2}'$ and some sets of atoms $\rho_{1},\ldots,\rho_{m}$ with
$$
\rho_{1}\subseteq t_{\mathcal{D}_{1}^{+}}(\ell_{1}-m) \cap t_{\mathcal{D}_{2}^{+}}(\ell_{2}-m),
\dots, \rho_{m}\subseteq t_{\mathcal{D}_{1}^{+}}(\ell_{1}) \cap t_{\mathcal{D}_{2}^{+}}(\ell_{2})
$$
such that
\begin{itemize}
\item either $n_{1}$ is minimal with $n_{1}-n_{1}'>m$ and
$$
\rho_{1}\subseteq t_{\mathcal{D}_{1}^{-}}(n_{1}-m), \dots,
\rho_{m}\subseteq t_{\mathcal{D}_{1}^{-}}(n_{1})
$$
or, if no such $n_{1}$ exists, $n_{1}=\infty$, and
\item either $n_{2}$ is minimal with $n_{2}-n_{2}'>m$ and
$$
\rho_{1}\subseteq t_{\mathcal{D}_{2}^{-}}(n_{2}-m), \dots,
\rho_{m}\subseteq t_{\mathcal{D}_{2}^{-}}(n_{2})
$$
or, if no such $n_{2}$ exists, $n_{2}=\infty$.
\end{itemize}
Thus, we obtain $S_{i+1,j+1}$ from $S_{i+1,j}\cup S_{i,j+1}$ by adding any tuple that describes a query obtained from a query $\varkappa$ described by a tuple in $S_{i+1,j}\cup S_{i,j+1}$ by attaching the query $\Diamond(\rho_{1} \wedge \nxt(\rho_{2}\wedge\cdots \wedge \nxt \rho_{m}))$ with $m\geq 0$ to it. Clearly $S_{i+1,j+1}$ can be computed in polynomial time from $S_{i+1,j}$ and $S_{i,j+1}$. This finishes the proof for  $\QBEba(\mathcal{Q}_{p}[\nxt,\Diamond])$.

The proof for $\QBEba(\mathcal{Q}_{p}[\Diamond])$ is obtained by dropping $\nxt$ from the proof above. The \PTime{}-upper bound for QBE$(\mathcal{Q}[\Diamond])$ with a bounded number of positive examples can be proved in two steps:
(1) by Theorem~\ref{th:data-reduction} $(i.1)$ it suffices to prove the \PTime{}-upper bound for $\QBEba(\mathcal{Q}[\Diamond])$;
(2) by Theorem~\ref{th:data-reduction} $(i.3)$, $\QBEba(\mathcal{Q}[\Diamond])\leq_{p} \QBEba(\mathcal{Q}_{p}[\Diamond])$.
Finally, the \PTime{}-upper bound for QBE$(\mathcal{Q}[\nxt,\Diamond])$ with a bounded number of positive examples follows from the \PTime{}-upper bound for QBE$(\mathcal{Q}[\Diamond])$ with a bounded number of positive examples
by Theorem~\ref{th:data-reduction} $(ii.2)$).

\medskip

We now prove the results for query languages with $\U$ in Table~\ref{table:free}. We start with the \NP{}-lower bounds.
The \NP{}-lower bound for QBE$(\mathcal{Q}_{p}[\U])$ with a bounded number of positive and negative examples follows from the \NP{}-lower bound for
QBE$(\mathcal{Q}_{p}[\Diamond])$ with a bounded number of positive examples (shown above) and the second part of Theorem~\ref{th:data-reduction} $(ii.1)$ which reduces the number of negative examples from unbounded to a singleton.
The \NP{}-lower bound for QBE$(\mathcal{Q}[\Us])$ with a bounded number of negative examples follows from the \NP{}-lower bound for
QBE$(\mathcal{Q}[\Diamond])$ with a bounded number of negative examples (shown above) and Theorem~\ref{th:data-reduction} $(ii.2)$.
This completes the proof of the $\NP{}$-lower bounds.

The \NP{}-upper bound for QBE$(\mathcal{Q}_{p}[\U])$ follows from its PSP under the empty ontology which is proved in the same way as Lemma~\ref{lem:psp}:
\begin{lemma}\label{lem:pspU}
	$\mathcal{Q}_{p}[\U]$ has the PSP under the empty ontology.
\end{lemma}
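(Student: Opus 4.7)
The plan is to lift the argument of Lemma~\ref{lem:psp} to path $\U$-queries by separately controlling the nesting depth and the per-level conjunctions. Assume a separating $\q\in\mathcal{Q}_p[\U]$ of the form~\eqref{upath}; the goal is to replace $\q$ by an equivalent separating query of size polynomial in $|E|$.

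First I would normalise $\q$ so that $\rho_n\neq\top$. Under strict semantics, $\lambda_n\U\top$ holds everywhere (taking the very next instant as witness makes the universal quantification over the empty interval vacuous), so any trailing layer with $\rho_n=\top$ collapses and the resulting query has strictly smaller depth; iterate. Now, for every $\Abox^+\in E^+$, the assumption $\Abox^+\models\q(0)$ yields strictly increasing witness timestamps $0=t_0<t_1<\cdots<t_n$ at which the minimal model of $\Abox^+$ satisfies $\rho_0,\dots,\rho_n$. Since $\rho_n$ contains at least one atom, $t_n\leq\max\Abox^+$, and so after normalisation the depth satisfies $n\leq\min\{\max\Abox^+\mid \Abox^+\in E^+\}\leq|E|$, exactly as in Lemma~\ref{lem:psp}.

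Next I would bound the size of each conjunction by restricting attention to the signature $\sig(E)$. For each $\rho_i$, any foreign atom would make $\rho_i$ unsatisfiable in the minimal model of every $\Abox^+$, contradicting separation, so $\rho_i$ already uses only atoms of $\sig(E)$. The subtler case is $\lambda_i$: if $\lambda_i$ contains an atom outside $\sig(E)$, it is unsatisfiable in the minimal model of every instance in $E^+\cup E^-$, and a direct unfolding of the strict semantics shows that on such models $\lambda_i\U\xi$ is pointwise equivalent to $\bot\U\xi$ (the until can only be witnessed by the immediate successor, where the intermediate interval is empty). Replacing every such $\lambda_i$ by $\bot$ therefore preserves truth in all positive and negative examples, and hence separation; after the rewrite each $\lambda_i$ is either $\bot$ or a conjunction over $\sig(E)$.

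Combining the two bounds, the normalised query has depth at most $|E|$ and at most $2n+1$ conjunctions each of size $O(|\sig(E)|)$, giving total size polynomial in $|E|$. The only delicate point I anticipate is the $\lambda_i$-rewriting, since its soundness must be verified on negative examples as well; but because the replacement is a pointwise equivalence on all minimal models of data instances in $E^+\cup E^-$, the argument goes through uniformly on both sides of the separation.
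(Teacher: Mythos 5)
Your proposal is correct and follows essentially the same route as the paper's (very terse) proof: normalise so that the innermost $\rho_n$ is non-empty, observe that the strictly increasing witnesses force the depth to be at most the maximal timestamp of a positive example, and note that each conjunction can be restricted to $\sig(E)$ (with out-of-signature $\lambda_i$ collapsing to $\bot$ under the strict semantics). The paper compresses all of this into one sentence, so your more explicit treatment of the $\top$-collapse and the $\lambda_i$-rewriting is a faithful elaboration rather than a different argument.
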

\begin{proof}
If $\Abox\models \varkappa(0)$ for some $\varkappa \in \Qp[\U]$, then $\varkappa$ is clearly equivalent to a $\Qp[\U]$-query  whose temporal depth does not exceed the maximal timestamp in $\Abox$, and so is of linear size in $\Abox$.
\end{proof}

To obtain the \PTime{} and \PSpace{} upper bounds for queries with $\U$, we require the machinery and separability criterion that will be developed in the next section.


\section{Separability Criteria for $\mathcal{Q}_p[\U]$, $\mathcal{Q}[\Us]$, and $\mathcal{Q}[\U]$}\label{appC}

A \emph{transition system} is a tuple $S =  (\Sigma_1, \Sigma_2, W, L, R, W_0)$, where $\Sigma_1$ (respectively, $\Sigma_2$) is a \emph{state} (respectively, \emph{transition}) \emph{label alphabet}, $W$ is a set of \emph{states} and $W_0 \subseteq W$ is a set of \emph{initial} states. A \emph{state labelling}, $L$, is a map $W \to \Sigma_1$; a \emph{transition labelling}, $R$, is a \emph{partial} map $W \times W \to \Sigma_2$. We write $s \to_b s'$, for $s, s' \in W$, if $R(s,s')=b$ and we write $s \to s'$ if $R(s,s')$ is defined. A \emph{run} or \emph{computation} on $S$ is a finite sequence $\mathfrak s = s_0 \to s_1 \to \dots \to s_n$, for $n \geq 0$, such that $s_{i-1} \to s_i$ for all $i$ and $s_0 \in W_0$. A computation tree $\mathfrak T_S$ of $S$ is (an infinite) tree---forest, to be more precise---in which the vertices are runs $\mathfrak s$ on $S$ and the successor relation is $\mathfrak s \to \mathfrak s'$ for all $\mathfrak s = s_0 \to \dots \to s_n$ and $\mathfrak s' = s_0 \to \dots \to s_n \to s_{n+1}$. The vertices $\mathfrak s$ of the tree are labelled with $L(s_n)$, while the edges $\mathfrak s \to \mathfrak s'$ are labelled with $b$ such that $s_n \to_b s_{n+1}$. A tree $\mathfrak T$ is a \emph{subtree} of $\mathfrak T_S$ if the set of vertices of $\mathfrak T$ is a \emph{convex} subset of the set of vertices of $\mathfrak T_S$ containing a root (from $W_0$).

Let $S$ be a transition system such that $\Sigma_1 = 2^\Sigma$ and $\Sigma_2 = 2^{\Sigma \cup \{\bot\}}$ for some signature $\Sigma$. We then say that $S$ is a transition system \emph{over the signature} $\Sigma$. For a pair $S, T$ of transition systems over $\Sigma$, we say that $S$ is \emph{simulated by} $T$ if every finite subtree $\mathfrak T'$ of $\mathfrak T_S$ is homomorphically embeddable into $\mathfrak T_T$, i.e., there is a map $h$ from the set of vertices of $\mathfrak T'$ to the set of vertices of $\mathfrak T_T$ such that $(i)$ $\mathfrak s$ is labelled by $a$ implies $h(\mathfrak s)$ is labelled by $a' \supseteq a$, $(ii)$ $\mathfrak s \to \mathfrak s'$ in $\mathfrak T'$ labelled by $b$ implies $h(\mathfrak s) \to h(\mathfrak s')$ is in $\mathfrak T_T$ and labelled by $b' \supseteq b$. We say that $S$ is \emph{contained} in $T$ if every finite path in $\mathfrak T_S$, (i.e., a run in $S$) is homomorphically embeddable into $\mathfrak T_T$.

Let $T = (\Sigma_1, \Sigma_2, W', L', R', W_0')$. We define the \emph{direct product} (aka synchronous composition) of $S$ and $T$ as a transition system $S \times T = (\Sigma_1, \Sigma_2, W'', L'', R'', W_0'')$ with $W'' = W \times W'$, $W_0'' = W_0 \times W_0'$, $L''((s,s')) = L(s) \cap L'(s')$ for all $(s,s') \in W''$. Then $R''((s, s'), (t, t'))$ is defined iff both $R(s,t)$ and $R'(s', t')$ are defined, in which case $R''((s, s'), (t, t')) = R(s,t) \cap R'(s', t')$. The \emph{disjoint union} of $S$ and $T$ is a transition system $S \uplus T = (\Sigma_1, \Sigma_2, W'', L'', R'', W_0'')$ that is obtained by renaming states in $T$ if necessary to make $W$ and $W'$ disjoint, and then taking $W'' = W \cup W'$, $L'' = L \cup L'$, $R'' = R \cup R'$, and $W_0'' = W \cup W'$. The definitions of the product and disjoint union are straightforwardly extended to a collection of transition systems $S_1, \dots, S_n$.

\subsection{Representations for $\Q[\Us]$}

Let $\varkappa$ be a $\Q[\Us]$-query over a signature $\Sigma$. We can naturally associate $\varkappa$ with a tree $\mathfrak T_\varkappa$ as follows. Let $\varkappa = \varrho_0' \land \bigwedge_i (\varrho_i \U \psi_i)$, where $\varrho_0', \varrho_i$ is a conjunction of $\Sigma$-atoms, $\psi_i = \varrho'_i \land \bigwedge_j (\varrho_j \U \psi_j)$, and $\varrho_i'$ is a conjunction of $\Sigma$-atoms. We do not distinguish between a conjunction and a set of atoms. The root of the tree is $r$ and the tree has $r \to (\varrho_i \U \psi_i)$ for each $i$, i.e., there are vertices $\varrho_i \U \psi_i$. The root $r$ is labelled with $\varrho'_0$ and each $\varrho_i \U \psi_i$ is labelled with $\varrho_i'$. Each edge $r \to (\varrho_i \U \psi_i)$ is labelled with $\varrho_i$. The tree then contains $(\varrho_i \U \psi_i) \to (\varrho_j \U \psi_j)$ for each $j$, where each such edge is labelled with ${\varrho_j}$ and $(\varrho_j \U \psi_j)$ is labelled with the set of atoms of $\psi_j$, and so on. Thus, we will treat any $\varkappa \in \Q[\Us]$ as a tree. The other way round, every finite tree $\mathfrak T$ with vertices labelled with subsets of $\Sigma$ and edges labelled with subsets of $\Sigma \cup \{\bot\}$ corresponds to a $\Q[\Us]$-query. Indeed, let $x$ be any leaf of $\mathfrak T$. Then we define a $\Q[\Us]$-query $\varkappa_x = \bigwedge \varrho$, where $\varrho$ is a label of $x$. Suppose now we have $x \to y_i$, for $i \in I$ in $\mathfrak T$, and the label of $x$ is $\varrho$ while the label of $x \to y_i$ is $\varrho_i$. We define $\varkappa_x = (\bigwedge \varrho) \land \bigwedge_{i \in I} ((\bigwedge \varrho_i) \U \varkappa_{y_i})$. The query $\varkappa_r$, where $r$ is the root of $\mathfrak T$, is the required query representing $\mathfrak T$. We denote it by $\varkappa_{\mathfrak T}$.

Let $\Abox$ be a data instance and $\TO$ an $\LTL{}$-ontology over a signature $\Sigma$. Let $S$ be a transition system over $\Sigma$. We say that $S$ \emph{represents} $\TO, \Abox$ if the following conditions hold: $(i)$ $\TO, \Abox \models \varkappa_{\mathfrak T'}$ for every finite subtree $\mathfrak T'$ of $\mathfrak T_S$; $(ii)$ $\mathfrak T_\varkappa$ is homomorphically embeddable into $\mathfrak T_S$ for each $\varkappa$ with $\TO, \Abox \models \varkappa$.

\begin{lemma}\label{lemmain}
Let $E = (E^+, E^-)$, $E^+ = \{\Abox_i \mid i \in I^+ \}$, $E^- = \{\Abox_i \mid i \in I^- \}$, and let $\TO$ be an $\LTL{}$ ontology. Let $S^i$ represent $\TO, \Abox_i$, for $i \in I^+ \cup I^-$. Then $(i)$ $E$ is not $\Q[\Us]$-separable under $\TO$ iff $\prod_{i \in I^+} S^i$ is simulated by $\uplus_{i \in I^-} S^i$\textup{;} $(ii)$ $E$ is not $\Qp[\U]$-separable under $\TO$ iff $\prod_{i \in I^+} S^i$ is contained in $\uplus_{i \in I^-} S^i$.
\end{lemma}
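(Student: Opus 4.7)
The plan is to translate the separability problem into a question about homomorphic embedding of computation trees, using the bijective correspondence between $\Q[\Us]$-queries and finite node-and-edge labeled trees (and between $\Qp[\U]$-queries and finite labeled paths, i.e., single runs) already set up in the preamble. In this correspondence, the root node label of $\mathfrak{T}_{\varkappa}$ encodes the top-level conjuncts, an edge outgoing from $\varkappa' = \varrho \U \psi$ with label $\varrho_i$ encodes the left-hand side of the next $\U$, and the child node encodes the conjunctive $\psi$. The representation property of $S^i$ then says exactly that $\TO, \Abox_i \models \varkappa(0)$ iff $\mathfrak{T}_{\varkappa}$ admits a homomorphic embedding into $\mathfrak{T}_{S^i}$ preserving (subsumption of) node and edge labels.

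The first key step is a product lemma: for any collection of transition systems $\{S^i\}_{i \in I^+}$ over the common signature, and any finite labeled tree $\mathfrak{T}$, the tree $\mathfrak{T}$ embeds homomorphically into $\mathfrak{T}_{\prod_{i \in I^+} S^i}$ iff it embeds homomorphically into each $\mathfrak{T}_{S^i}$ separately. The forward direction follows by projecting the embedding onto each factor, since product node labels are intersections and product transitions are synchronous with intersection edge labels (so subsumption survives projection). The converse direction is built by pairing the component embeddings coordinatewise; the key point is that distinct branches in $\mathfrak{T}$ are mapped to distinct branches in the product tree because the product tree is unravelled and records full histories. Combining this with the representation property gives: $\TO, \Abox_i \models \varkappa_{\mathfrak{T}}(0)$ for all $i \in I^+$ iff $\mathfrak{T}$ embeds into $\mathfrak{T}_{\prod_{i \in I^+} S^i}$, and conversely every finite subtree of $\mathfrak{T}_{\prod_{i \in I^+} S^i}$ yields such a query. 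Dually, a disjoint union lemma: $\mathfrak{T}$ embeds into $\mathfrak{T}_{\uplus_{i \in I^-} S^i}$ iff it embeds into some $\mathfrak{T}_{S^i}$, $i \in I^-$, because the two components share no states, so the image of any connected rooted embedding must lie entirely in one summand; hence $\TO, \Abox_i \not\models \varkappa_{\mathfrak{T}}(0)$ for all $i \in I^-$ iff $\mathfrak{T}$ does \emph{not} embed into $\mathfrak{T}_{\uplus_{i \in I^-} S^i}$.

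For part $(i)$, chain these equivalences. $E$ is $\Q[\Us]$-separable under $\TO$ iff there exists a $\Q[\Us]$-query $\varkappa$, i.e., a finite labeled tree $\mathfrak{T}_{\varkappa}$, that embeds into $\mathfrak{T}_{\prod_{i \in I^+} S^i}$ but not into $\mathfrak{T}_{\uplus_{i \in I^-} S^i}$. By the characterization of finite subtrees of the computation tree as images of such embeddings, this is in turn equivalent to the existence of a finite subtree of $\mathfrak{T}_{\prod_{i \in I^+} S^i}$ (obtained as the image, or equivalently the witnessing subtree) that fails to embed into $\mathfrak{T}_{\uplus_{i \in I^-} S^i}$. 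The negation of this is precisely the definition of simulation of $\prod_{i \in I^+} S^i$ by $\uplus_{i \in I^-} S^i$, giving $(i)$. For part $(ii)$, restrict the entire argument to linear (path) subtrees: $\Qp[\U]$-queries correspond under the translation to single runs, the product and union lemmas go through verbatim on runs, and the resulting negated criterion for separability is exactly containment in the sense defined above.

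The main technical delicacy is the product step: ensuring that the coordinatewise construction really yields a valid tree embedding when the input tree has branching, and that label subsumption on the intersection labels of product nodes/edges is recovered from subsumption on each coordinate. One should also verify that attaching an irrelevant subtree (e.g., edges labeled $\bot$) does not spuriously fail embedding, which is handled by the convention that $\Sigma^{\bot}$-labeled transitions in $S^i$ subsume every relevant edge label. Once these points are checked, the lemma follows immediately from combining the product lemma, the disjoint union lemma, and the definition of the computation tree $\mathfrak{T}_S$.
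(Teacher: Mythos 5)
Your proposal is correct and follows essentially the same route as the paper's proof: both rest on the query--tree correspondence, the two representation conditions, projection of a subtree of the product onto each factor, and the fact that a connected rooted embedding into the disjoint union must land in a single summand. The only difference is presentational (you isolate explicit product and disjoint-union lemmas and chain equivalences, while the paper argues one contrapositive direction directly); note also that your remark about distinct branches mapping to distinct branches is unnecessary, since the embeddings need not be injective.
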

\begin{proof}
We show $(i)$. For $(\Rightarrow)$, suppose $S^+ = \prod_{i \in I^+} S^i$ is not simulated by $S^- = \uplus_{i \in I^-} S^i$. It follows that there exists a finite subtree $\mathfrak T$ of $\mathfrak T_{S^+}$ that is not homomorphically embeddable into $\mathfrak T_{S^-}$. We claim that $\varkappa_{\mathfrak T}$ separates $E$ under $\TO$. First, we show that $\TO, \Abox_i \models \varkappa_{\mathfrak T}$ for each $i \in I^+$. Indeed, for any such $i$, let $\mathfrak T^i$ be a projection of $\mathfrak T$ to the runs of $S^i$. Clearly, $\mathfrak T^i$ is a finite subtree of $\mathfrak T_{S^i}$ and $\varkappa_{\mathfrak T^i} \models \varkappa_{\mathfrak T}$ ($\mathfrak T$ is homomorphically embeddable into $\mathfrak T_i$). Because $S^i$ represents $\TO, \Abox_i$, we obtain $\TO, \Abox_i \models \varkappa_{\mathfrak T}$. Second, we show that $\TO, \Abox_i \not \models \varkappa_{\mathfrak T}$ for each $i \in I^-$. For the sake of contradiction, suppose $\TO, \Abox_i \models \varkappa_{\mathfrak T}$ for some such $i$. Because $S^i$ represents $\TO, \Abox_i$, it follows that $\mathfrak T$ is homomorphically embeddable into $\mathfrak T_{S^i}$, and so $\mathfrak T$ is homomorphically embeddable into $\mathfrak T_{S^-}$, which is a contradiction. The proofs of $(\Rightarrow)$ and $(ii)$ are similar.
\end{proof}

\paragraph{Constructing representations for queries without an ontology.} Given $\Abox$, we construct a transition system $S$ with the states $0, \dots, (\max \D+1)$, where $(\max \D+1)$ is labelled with $\emptyset$ and the remaining $j$ by $\{A \mid A(j) \in \D \}$. Transitions are $j \to k$, for $0 \leq j < k \leq \max \D+1$, that are labelled by $\{ A \in \Sigma \cup \{\bot\} \mid A(n) \in \D, n \in (j,k)  \}$ and $(\max \D+1) \to (\max \D+1)$ with label $\Sigma^\bot =\Sigma \cup \{\bot\}$.

\begin{lemma}
$S$ represents $\emptyset, \Abox$.
\end{lemma}
\begin{proof}
First, we show that $\Abox \models \varkappa_{\mathfrak T'}$ for every finite subtree $\mathfrak T'$ of $\mathfrak T_S$. Let $\I_\D$ be an $\LTL{}$ interpretation such that $\I_\Abox, n \models A$ iff $A(n) \in \D$, for any atom $A$. We observe that any $\LTL{}$ interpretation $\I$ can be viewed as a transition system with the states $n \in \mathbb N$ that are labelled with (sets of) atoms $A$ holding at $n$. The transitions hold between any pair of states $n < m$ and each such transition is labelled with atoms $A$ or $\bot$ that hold at each $i \in (n,m)$. It is clear that $\Abox \models \varkappa_{\mathfrak T'}$ iff $\mathfrak T'$ is homomorphically embeddable into $\mathfrak T_{\I_\Abox}$. We define an embedding $h$ of $\mathfrak T'$ into $\I_\Abox$ as follows. We set $h(0) = 0$. Suppose $h(\mathfrak s)$ for $\mathfrak s = 0 \to s_1 \to \dots \to s_n$ has been defined and let $\mathfrak s' = 0 \to s_1 \to \dots \to s_{n+1}$. If $s_{n+1} < \max \D+1$, then we set $h(\mathfrak s') = s_{n+1}$. If $s_{n+1} = \max \D+1$, then we set $h(\mathfrak s') = h(\mathfrak s')+1$. Clearly, $h$ is a homomorphism. Therefore, $\mathfrak T'$ is homomorphically embeddable into $\mathfrak T_{\I_\Abox}$ and $\Abox \models \varkappa_{\mathfrak T'}$.

Second, we show that $\mathfrak T_\varkappa$ is homomorphically embeddable into $\mathfrak T_S$ for each $\varkappa$ such that $\Abox \models \varkappa$. Take any $\varkappa$ such that $\Abox \models \varkappa$. It follows that $\mathfrak T_\varkappa$ is homomorphically embeddable into $\mathfrak T_{\I_\Abox}$. It remains to observe that $\mathfrak T_{\I_\Abox}$ is homomorphically embeddable into $\mathfrak T_S$ (the definition of $h$ is left to the reader).
\end{proof}

The criterion of Theorem~\ref{criterionUsA} now follows immediately from the two previous lemmas. The remaining \PTime{} and \PSpace{} upper bounds from Table~\ref{table:free} are explained in the main part of the paper.

\paragraph{Constructing representations for queries with an $\LTL_\horn\Xallop$-ontology.}

Let $\Abox$ be a data instance and $\TO$ an $\LTL_\horn\Xallop$-ontology. Let $\C_{\TO, \D}$ be the canonical model of $\TO, \Abox$ and $s_{\TO, \Abox}$, $p_{\TO, \D}$ the numbers from Proposition~\ref{prop:period}. We define $S$ with the states $\{0, \dots, \max \Abox+s_{\TO, \Abox}+p_{\TO, \Abox}-1\}$. The label of each state $n$ is $\{ A \in \Sigma \mid \C_{\TO, \Abox}, n \models A \}$. There are transitions from $n$ to $m$, for each pair of states $n < m$ labelled with $\{ A \in \Sigma \cup \{ \bot \} \mid \C_{\TO, \Abox}, k \models A \text{ for all }k \in (n, m)\}$. Moreover, there are transitions from $n$ to $m$, for $n,m \in [\max \Abox+s_{\TO, \Abox}, \max \Abox+s_{\TO, \Abox}+p_{\TO, \Abox})$ such that $n \geq m$. A label for such a transition is $\{ A \in \Sigma \cup \{ \bot \} \mid \C_{\TO, \Abox}, k \models A \text{ for all }k \in (n, \max \Abox+s_{\TO, \Abox}+p_{\TO, \Abox}) \cup [\max \Abox+s_{\TO, \Abox}, m)\}$.

\begin{lemma}\label{th:horn+qus-repr}
$S$ represents $\TO, \Abox$.
\end{lemma}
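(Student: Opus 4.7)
The overall strategy is to exploit the canonical model property of $\LTL_{\smash{\horn}}\Xallop$: for any query $\varkappa$ we have $\TO, \Abox \models \varkappa(n)$ iff $\C_{\TO, \Abox}, n \models \varkappa$. Thus both conditions (i) and (ii) in the definition of representation can be checked against $\C_{\TO, \Abox}$ rather than against all models. The key auxiliary tool is a "folding" correspondence between states of $S$ and time points of $\C_{\TO,\Abox}$: on the initial segment $[0, \max \Abox + s_{\TO,\Abox} + p_{\TO,\Abox})$ the states of $S$ \emph{are} the time points, and by Proposition~\ref{prop:period} every time point $n \ge \max \Abox + s_{\TO,\Abox}$ satisfies the same type as the unique $n' \in [\max \Abox + s_{\TO,\Abox}, \max \Abox + s_{\TO,\Abox} + p_{\TO,\Abox})$ with $n \equiv n' \pmod{p_{\TO,\Abox}}$.

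For (i), I take a finite subtree $\mathfrak T'$ of $\mathfrak T_S$ and produce, by induction on its depth, an assignment of a time point $t(\mathfrak{s}) \in \mathbb{N}$ to each node $\mathfrak{s}=s_0\to\dots\to s_k$ of $\mathfrak T'$ such that (a) $t(\mathfrak{s})$ carries the same type in $\C_{\TO,\Abox}$ as the state $s_k$, (b) the sequence of $t$-values along any root-to-leaf path is strictly increasing, and (c) if $\mathfrak{s}\to\mathfrak{s}'$ is an edge of $\mathfrak T'$ with transition label $\Gamma$ then every intermediate time point $m \in (t(\mathfrak{s}), t(\mathfrak{s}'))$ has $\C_{\TO,\Abox},m \models A$ for all $A \in \Gamma$. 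For non-wrap-around transitions $j \to m$ of $S$ this is immediate from how $S$ is defined. For wrap-around transitions $n \to m$ inside the periodic region, I jump by a full period: if $t(\mathfrak{s})=n$ is already placed, I set $t(\mathfrak{s}') = n' $ where $n'> n$ is the least time point $\equiv m \pmod{p_{\TO,\Abox}}$; periodicity guarantees condition (a) and the wrap-around label was defined precisely as the intersection of the atoms holding on the two "halves" of the traversed arc, which matches what $\C_{\TO,\Abox}$ actually forces on $(n, n')$. Then a straightforward induction on the tree structure of $\varkappa_{\mathfrak T'}$ using the strict semantics of $\U$ yields $\C_{\TO,\Abox}, 0 \models \varkappa_{\mathfrak T'}$, hence $\TO, \Abox \models \varkappa_{\mathfrak T'}(0)$.

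For (ii), I start with a query $\varkappa$ such that $\TO,\Abox \models \varkappa(0)$, equivalently $\C_{\TO,\Abox},0 \models \varkappa$. I recursively pick, for each node of $\mathfrak T_\varkappa$, a witnessing time point in $\C_{\TO,\Abox}$, as provided by the semantics of $\U$ and $\land$; this yields a strictly increasing sequence of time points along every branch. I then \emph{fold} each time point $t$ into a state $\sigma(t)$ of $S$: $\sigma(t)=t$ if $t < \max \Abox + s_{\TO,\Abox} + p_{\TO,\Abox}$, and otherwise $\sigma(t)$ is the unique state in $[\max \Abox + s_{\TO,\Abox}, \max \Abox + s_{\TO,\Abox}+p_{\TO,\Abox})$ congruent to $t$ modulo $p_{\TO,\Abox}$. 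By Proposition~\ref{prop:period}, $\sigma(t)$ carries the same type as $t$, so state labels are preserved. For each consecutive pair $t<t'$ along a branch, the pair $(\sigma(t),\sigma(t'))$ is either a forward transition of $S$ (if no wrap-around occurred) or decomposes into at most one wrap-around arc followed by a forward arc inside the periodic region; in both cases the transition labels of $S$ were built to subsume exactly the atoms guaranteed on $(t,t')$ by $\C_{\TO,\Abox}$, which is what homomorphic embedding into $\mathfrak T_S$ requires. Contracting a pair of composable arcs into a single one, if needed, uses the fact that $S$ has transitions between any two states in the prefix region as well as all pairs in the periodic region.

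The main obstacle is the correctness of the wrap-around transition labels: I have to verify that the intersections of atoms taken over the "two halves" of a looping arc in the periodic region coincide with what $\C_{\TO,\Abox}$ certainly satisfies on any stretch of time points whose endpoints are congruent modulo $p_{\TO,\Abox}$. This is essentially a combinatorial check on top of Proposition~\ref{prop:period}, and it is what makes the folding in (ii) legal and the unfolding in (i) faithful to $\C_{\TO,\Abox}$; once established, both directions follow routinely.
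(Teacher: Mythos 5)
Your proposal is correct and follows essentially the same route as the paper: both directions reduce to the canonical model $\C_{\TO,\Abox}$ via its defining property, handle condition (i) by unfolding wrap-around transitions to the next time point in the same residue class modulo $p_{\TO,\Abox}$, and handle condition (ii) by folding time points beyond $\max\Abox + s_{\TO,\Abox} + p_{\TO,\Abox}$ back into the periodic window via reduction modulo $p_{\TO,\Abox}$. The ``combinatorial check'' on wrap-around labels that you flag as the main obstacle is exactly the routine verification the paper also leaves implicit (``it is readily verified that $h$ is a homomorphism'').
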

\begin{proof}
First, we show that $\Abox \models \varkappa_{\mathfrak T'}$ for every finite subtree $\mathfrak T'$ of $\mathfrak T_S$. It is clear from the properties of $\C_{\TO, \Abox}$ (see Section~\ref{sec:horn}) that $\Abox \models \varkappa_{\mathfrak T'}$ iff $\mathfrak T'$ is homomorphically embeddable into $\mathfrak T_{\C_{\TO, \Abox}}$. We define an embedding $h$ of $\mathfrak T'$ into $\C_{\TO, \Abox}$ as follows. We set $h(0) = 0$. Suppose $h(\mathfrak s)$ for $\mathfrak s = 0 \to s_1 \to \dots \to s_n$ has been defined and let $\mathfrak s' = 0 \to s_1 \to \dots \to s_{n+1}$. If $s_{n+1} > s_n$, we set $h(\mathfrak s') = h(\mathfrak s)+(s_{n+1} - s_n)$. Otherwise, we set $h(\mathfrak s') = h(\mathfrak s)+ p_{\TO, \Abox} - (s_{n} - s_{n+1})$. It is readily verified that $h$ is a homomorphism. Therefore, $\mathfrak T'$ is homomorphically embeddable into $\mathfrak T_{\C_{\TO, \Abox}}$ and $\TO, \Abox \models \varkappa_{\mathfrak T'}$.

Second, we show that $\mathfrak T_\varkappa$ is homomorphically embeddable into $\mathfrak T_S$ for each $\varkappa$ such that $\TO, \Abox \models \varkappa$. Take any $\varkappa$ such that $\TO, \Abox \models \varkappa$. It follows that $\mathfrak T_\varkappa$ is homomorphically embeddable into $\mathfrak T_{\C_{\TO,\Abox}}$. It remains to show that $\mathfrak T_{\C_{\TO, \Abox}}$ is homomorphically embeddable into $S$. To this end, we define a map $r \colon \mathbb N \to [0, \max \Abox+s_{\TO, \Abox}+p_{\TO, \Abox})$ by setting $r(n) = n$ if $n \in [0, \max \Abox+s_{\TO, \Abox}+p_{\TO, \Abox})$ and $r(n) = ((n - \max \Abox-s_{\TO, \Abox}) \text{ mod } p_{\TO, \Abox}) + \max \Abox+s_{\TO, \Abox}$, otherwise. Now, we define $h(\mathfrak s)$ for $\mathfrak s = 0 \to s_1 \to \dots \to s_n$ (note that $s_i \in \mathbb N$ and $s_{i+1} > s_i$) to be equal to $r(s_{n})$. It is readily verified that $h$ is a homomorphism from $\mathfrak T_{\C_{\TO, \Abox}}$ into $S$.
\end{proof}

Now we can explain the data complexity upper bounds from Theorem~\ref{thm:datahorn} (the upper bounds from Theorem~\ref{thm:Horn-U} are explained in the main paper) for $\QBE(\LTL_\horn\Xallop, \mathcal{Q})$ with $\mathcal Q \in \{\Q[\Us], \Qp[\U]\}$. The result for $\QBEbp(\LTL_\horn\Xallop,\Q[\Us])$ (and so for $\QBEba(\LTL_\horn\Xallop,\Q[\Us])$) follows from the fact that $\mathfrak P = \prod_{i \in I^+} S^i$ and $\mathfrak U = \uplus_{i \in I^-} S^i$ are constructible in $\PTime{}$ in the size of $E$. This is not the case for $\QBE(\LTL_\horn\Xallop,\Q[\Us])$; here, we use the observation that if there exists a finite subtree $\mathfrak T$ of $\mathfrak T_{\mathfrak P}$ that is not homomorphically embeddable into $\mathfrak U$, then there exists such $\mathfrak T$ satisfying the property that every $s$ from $\mathfrak P$ occurs on each path of $\mathfrak T$ at most $|\mathfrak U|$-many times. 
Let $N^+ = \prod_{i \in I^+}p_{\TO, \Abox_i}$. We claim that $\mathfrak T$ is a subtree of $\mathfrak T_{\mathfrak P}^M$ for $M = \max_{i \in I^+} \{ \max \D_i + s_{\TO, \Abox_i} \} + N^+ |\mathfrak U|$. Indeed, in the required $\mathfrak T$, if there is a path that is longer than $M$, the property above would be violated. By our construction of $S^i$, any $n$-th element, for $n \geq \max_{i \in I^+} \{ \max \D_i + s_{\TO, \Abox_i} \}$, of any path of $\mathfrak T_{\mathfrak P}$ is of the form $(t_1, \dots, t_{|I^+|})$, where $t_i \in [\max \D_i + s_{\TO, \Abox_i}, \max \D_i + s_{\TO, \Abox_i} + p_{\TO, \Abox_i})$. Observe that $(t_1, \dots, t_{|I^+|}) \to (s_1, \dots, s_{|I^+|})$ in $\mathfrak P$, for $(t_1, \dots, t_{|I^+|})$ as above, implies $s_i \in [\max \D_i + s_{\TO, \Abox_i}, \max \D_i + s_{\TO, \Abox_i} + p_{\TO, \Abox_i})$ and any sequence $(t_1, \dots, t_{|I^+|}) \to \dots \to (s_1, \dots, s_{|I^+|})$ as above in $\mathfrak P$ longer than $N^+ |\mathfrak U|$ will have some $(t_1, \dots, t_{|I^+|})$ repeated more than $|\mathfrak U|$ times.
Thus, in order to decide $\QBE(\LTL_\horn\Xallop,\Q[\Us])$, we need to check if $\mathfrak T_{\mathfrak P}^M$ is homomorphically embeddable into $\mathfrak U$. The latter can be checked by constructing $\smash{\mathfrak T_{\mathfrak P}^M}$ branch-by-branch while checking all possible embeddings of these branches into $\mathfrak U$. Since $M$ is polynomial in $E$, this algorithm works in $\PSpace{}$ in the size of $E$.

It remains to explain the $\NP{}$ upper bound for $\QBE(\LTL_\horn\Xallop,\Qp[\U])$. From Lemma~\ref{lemmain} and the argument above, it follows that $E$ is separable under $\TO$ iff there exists a path in $\mathfrak T_{\mathfrak P}^M$ that is not embeddable into $\mathfrak U$. Such a path (if exists) is of the size polynomial in $E$. Embeddability of such a path into $\mathfrak U$ can be checked in \PTime{} from $E^-$ and the size of the path.

\paragraph{Constructing representations for queries with an $\LTL{}$-ontology.}

Let $\Abox$ be a data instance and $\TO$ an $\LTL{}$-ontology. We can assume that $\max \Abox = 0$. Indeed, for a given $\TO$ and $E$, we can construct in polytime an $\LTL{}$-ontology $\TO'$ and $E'$ such that $\max \Abox' = 0$ for each $\Abox'$ in $E'$ and $E$ is $\Q[\Us]$-separable under $\TO$ iff $E'$ is $\Q[\Us]$-separable under $\TO'$. Let $\avec{T}_\TO$ be the set of $\TO$-types. For $\avec{T} \subseteq \avec{T}_\TO$, we say that $\avec{T}$ is \emph{realisable in} $\TO, \D$ if there are instants $n_\I$ in all models $\I$ of $\TO, \Abox$ such that $\{\tp_\I(n_\I) \mid \I \models \TO, \Abox\} = \avec{T}$. For $\avec{T}_1, \avec{T}_2 \subseteq \avec{T}_\TO$ realisable in $\TO, \D$ and $\Gamma \subseteq \Sigma \cup \{\bot\}$, we define $\avec{T}_1 \to_\Gamma \avec{T}_2$ if there are instants $n_\I < m_\I$ in all models $\I$ of $\TO, \Abox$ such that $(i)$ $\{\tp_\I(n_\I) \mid \I \models \TO, \Abox\} = \avec{T}_1$, $(ii)$ $\{\tp_\I(m_\I) \mid \I \models \TO, \Abox\} = \avec{T}_2$, $(iii)$ $\{ \tp_\I(k) \mid \I \models \TO, \Abox, n_\I < k < m_\I \} \cap \avec{T}_2 = \emptyset$, $(iv)$ $\Gamma = \{ A \in \Sigma^\bot \mid \I,k \models A \text{ for all } \I \models \TO, \Abox, n_\I < k < m_\I \}$. We observe that condition $(iii)$ ensures that there exists at most one $\Gamma$ for given $\avec{T}_1, \avec{T}_2$ such that $\avec{T}_1 \to_\Gamma \avec{T}_2$. We define $S$ with the states $\avec{T} \subseteq \avec{T}_\TO$ realisable in $\TO, \Abox$. A single initial state of $S$ is $\avec{T}_0 = \{\tp_\I(0) \mid \I \models \TO, \Abox \}$. The states $\avec{T}$ are labelled with $\{ A \in \Sigma \mid A \in \tp \text{ for all }\tp \in \avec{T}\}$. There are transitions from $\avec{T}_1$ to $\avec{T}_2$ where $\avec{T}_1 \to_\Gamma \avec{T}_2$ holds for some $\Gamma$, labelled with $\Gamma$.

\begin{lemma}\label{th:bool-repr}
$S$ represents $\TO, \Abox$.
\end{lemma}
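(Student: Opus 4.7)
The plan is to adapt the proof of Lemma~\ref{th:horn+qus-repr} to a setting where no single canonical model is available, routing all witnesses through sets of types realised across arbitrary models of $\TO, \Abox$. The two clauses of ``represents'' are proved separately: $(i)$ that $\TO, \Abox \models \varkappa_{\mathfrak T'}$ for every finite subtree $\mathfrak T'$ of $\mathfrak T_S$, and $(ii)$ that for every $\varkappa$ with $\TO, \Abox \models \varkappa$ the tree $\mathfrak T_\varkappa$ is homomorphically embeddable into $\mathfrak T_S$.

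For $(i)$, I would fix an arbitrary model $\I \models \TO, \Abox$ and construct, by induction on $\mathfrak T'$, a map $h$ from the nodes of $\mathfrak T'$ into instants of $\I$ such that $\tp_\I(h(v))$ lies in the state label of $v$ and, on each edge $v \to v'$ with label $\Gamma$, every $A \in \Gamma$ holds throughout $(h(v), h(v'))$ in $\I$. The root is sent to $0$, which is correct because $\tp_\I(0) \in \avec{T}_0$ by definition. For a node $v$ with image $t$ and a child $v'$ via $\avec{T}_1 \to_\Gamma \avec{T}_2$, clause $(iv)$ of the definition of $\to_\Gamma$ provides, in $\I$, instants $n_\I < m_\I$ witnessing the transition; a shifting argument exploiting the uniform realisability of $\avec{T}_1$ at $t$ then yields $t' > t$ with $\tp_\I(t') \in \avec{T}_2$ and every $A \in \Gamma$ true on $(t, t')$, giving $h(v') = t'$. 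Composing $h$ with the transition-system view of $\I$ (as in the representation lemma for the empty ontology) yields $\I, 0 \models \varkappa_{\mathfrak T'}$, and since $\I$ was arbitrary, $\TO, \Abox \models \varkappa_{\mathfrak T'}$.

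For $(ii)$, given $\varkappa$ with $\TO, \Abox \models \varkappa$, I would choose, for every node $u$ of $\mathfrak T_\varkappa$ (corresponding to a subformula $\psi_u$) and every model $\J \models \TO, \Abox$, a witnessing instant $t^u_\J$ with $\J, t^u_\J \models \psi_u$, picking $t^{u'}_\J$ as the \emph{first} moment after $t^u_\J$ at which $\psi_{u'}$ holds in $\J$ whenever $u \to u'$ is an edge of $\mathfrak T_\varkappa$. Setting $\avec{T}(u) = \{\tp_\J(t^u_\J) \mid \J \models \TO, \Abox\}$ and $\Gamma(u, u') = \{A \in \Sigma \cup \{\bot\} \mid \J, k \models A \text{ for every } \J \text{ and every } t^u_\J < k < t^{u'}_\J\}$, the map $u \mapsto \avec{T}(u)$ sends the root to $\avec{T}_0$, each $\avec{T}(u)$ is realisable, each consecutive pair satisfies clauses $(i)$--$(iv)$ of $\to_{\Gamma(u,u')}$, and the state- and edge-label subsumptions needed for a homomorphism into $\mathfrak T_S$ follow from how $\Gamma(u,u')$ and the atoms common to $\avec{T}(u)$ are defined. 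The main obstacle will be making the choices $t^u_\J$ coherent across the whole family of models so that clause $(iii)$ of $\to_\Gamma$ (the first-occurrence disjointness) genuinely holds and that every type in $\avec{T}(u)$ is actually realised by some $\J$'s choice; the analogous shifting argument in $(i)$ is comparatively local, relying only on the per-model witnesses provided directly by the definition of $\to_\Gamma$.
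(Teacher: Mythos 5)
Your part $(i)$ is essentially the paper's argument. The paper embeds a finite subtree $\mathfrak T'$ into the computation tree of the \emph{product} $\prod_{\I \models \TO,\Abox}\I$ of all models, maintaining the invariant that the current product state realises \emph{exactly} the set $\avec{T}_n$; your per-model map is just the projection of that embedding, and both versions rest on the same (asserted rather than proved) re-basing step, namely that the witnesses $n_\I<m_\I$ supplied by the definition of $\to_\Gamma$ can be shifted so that the $n_\I$ coincide with the current positions. Note only that this shift genuinely needs the global invariant across \emph{all} models, so the ``fix one arbitrary $\I$'' framing should be read as projecting a single simultaneous construction.

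In part $(ii)$ there is a real gap, and it is exactly the one you flag. Choosing $t^{u'}_\J$ as the \emph{first} instant after $t^u_\J$ at which the query subformula $\psi_{u'}$ holds does not yield clause $(iii)$ of $\to_\Gamma$: that clause is a disjointness condition on \emph{types for $\TO$}, whereas whether $\psi_{u'}$ holds at an instant is not determined by the type there (query subformulas are not in $\sub_{\TO}$). So a type belonging to $\avec{T}(u')$ can occur at an intermediate instant of some model even though $\psi_{u'}$ fails there, and your transitions need not exist in $S$; moreover nothing guarantees that the resulting family realises $\avec{T}(u')$ exactly rather than merely pointwise. The paper resolves this by decoupling the two steps: since $\TO,\Abox\models\varkappa$, the tree $\mathfrak T_\varkappa$ embeds into $\mathfrak T_{\avec{\I}}$ for the product $\avec{\I}$ of all models, and then --- independently of $\varkappa$ --- every reached product state $s_1$ is replaced by a normalised $s_1'$ chosen so that the set of types realised at the new target positions is a \emph{subset} of the old one, the set of types realised at intermediate positions shrinks, and the intermediate types are disjoint from the target types. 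This minimisation is carried out over types, not over witnesses of the $\psi_{u'}$, which is precisely what makes clause $(iii)$ hold while keeping all node and edge labels subsumed. Replacing your first-witness selection by this type-minimising normalisation of product states is the missing ingredient.
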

\begin{proof}
Let $\avec{\I} = \{\I \mid \I \models \TO, \Abox \}$. We treat every $\I \in \avec{\I}$ as a transition system as we did above. Take the product $\prod_{\I \in \avec{\I}} \I$ and denote it (slightly abusing notation) by $\avec{\I}$. Note that the states $s$ of $\avec{\I}$ are maps $s \colon \avec{\I} \to \mathbb N$.

First, we show that $\TO, \Abox \models \varkappa_{\mathfrak T'}$ for every finite subtree $\mathfrak T'$ of $\mathfrak T_S$. It should be clear that $\TO, \Abox \models \varkappa_{\mathfrak T'}$ iff $\mathfrak T'$ is homomorphically embeddable into $\mathfrak T_{\avec{\I}}$. We define an embedding $h$ of $\mathfrak T'$ into $\avec{\I}$ as follows. We set $h(\avec{T}_0) = s_0$, where $s_0$ is the initial state of $\avec{\I}$ satisfying $s_0(\I) = 0$ for every $\I \in \avec{\I}$. Suppose $h(\mathfrak s)$ for $\mathfrak s = \avec{T}_0 \to \avec{T}_1 \to \dots \to \avec{T}_n$ has been defined equal to $s$. Our induction hypothesis will be that $\avec{T}_n = \{ \tp_\I(s(\I)) \mid \I \in \avec{\I}\}$. It can be readily verified that it holds for $\mathfrak s = \avec{T}_0$. Let $\mathfrak s' = \avec{T}_0 \to \avec{T}_1 \to \dots \to \avec{T}_{n+1}$ and $\avec{T}_{n} \to_\Gamma \avec{T}_{n+1}$. For each $\I \in \avec{\I}$, we select $m_\I > s(\I)$ such that $\avec{T}_{n+1} = \{ \tp_\I(m_\I)) \mid \I \in \avec{\I}\}$ and $\Gamma = \{ A \in \Sigma^\bot \mid \I,k \models A \text{ for all } \I \in \avec{\I}, s(\I) < k < m_\I \}$. That this selection is always possible follows from the IH. We then set $h(\mathfrak s') = s'$ such that $s'(\I) = m_\I$ for $\I \in \avec{\I}$.

Now, we show that $\mathfrak T_\varkappa$ is homomorphically embeddable into $\mathfrak T_S$ for each $\varkappa$ such that $\TO, \Abox \models \varkappa$. Take any $\varkappa$ such that $\TO, \Abox \models \varkappa$. It follows that $\mathfrak T_\varkappa$ is homomorphically embeddable into $\mathfrak T_{\avec{\I}}$. It remains to show that $\mathfrak T_{\avec{\I}}$ is homomorphically embeddable into $S$. We define $h$ so that $h(s_0) = \avec{T}_0$. Consider now $\mathfrak s_1 = s_0 \to s_1$. Instead of $s_1$, we can always select $s_1'$ such that $s_0 \to s_1'$, $\{ \tp_\I(s_1'(\I)) \mid \I \in \avec{\I}\} \subseteq \{ \tp_\I(s_1(\I)) \mid \I \in \avec{\I}\}$, $\{ \tp_\I(k) \mid \I \in \avec{\I}, s_0(\I) < k < s_1'(\I) \} \cap \{ \tp_\I(s_1'(\I)) \mid \I \in \avec{\I}\} = \emptyset$ and, finally, $\{ \tp_\I(k) \mid \I \in \avec{\I}, s_0(\I) < k < s_1'(\I) \} \subseteq \{ \tp_\I(k) \mid \I \in \avec{\I}, s_0(\I) < k < s_1(\I) \}$. We define $h(\mathfrak s_1) = \{ \tp_\I(s_1'(\I)) \mid \I \in \avec{\I}\}$. It can be readily verified, given the subsumptions above, that (subsumption of the) label of the edge from $s_0$ to $\mathfrak s_1$ is preserved under $h$ and (subsumption of) the node $\mathfrak s_1$ is preserved under $h$.

Consider now $\mathfrak s_2 = s_0 \to s_1 \to s_2$. We can always select $s_2'$ so that $s_1' \to s_2'$, $\{ \tp_\I(s_2'(\I)) \mid \I \in \avec{\I}\} \subseteq \{ \tp_\I(s_2(\I)) \mid \I \in \avec{\I}\}$, $\{ \tp_\I(k) \mid \I \in \avec{\I}, s_1'(\I) < k < s_2'(\I) \} \cap \{ \tp_\I(s_2'(\I)) \mid \I \in \avec{\I}\} = \emptyset$ and, finally, $\{ \tp_\I(k) \mid \I \in \avec{\I}, s_1'(\I) < k < s_2'(\I) \} \subseteq \{ \tp_\I(k) \mid \I \in \avec{\I}, s_1(\I) < k < s_2(\I) \}$. We define $h(\mathfrak s_2) = \{ \tp_\I(s_2'(\I)) \mid \I \in \avec{\I}\}$.

Clearly, we can extend this argument to arbitrarily-many steps to define $h$ for any $\mathfrak s = s_0 \to \dots \to s_n$. This completes the proof of the lemma.
\end{proof}


\subsection{Representations for $\Q[\U]$}

Let $\varkappa$ be a $\Q[\U]$-query over a signature $\Sigma$. Let all the subformulas of $\varkappa$, which are either conjunctions (sets) of atoms $\gamma, \lambda$, or $\varphi \U \psi$, or  conjunctions thereof, be enumerated. We assume that there are no subformulas of the form $\varphi \U \varphi$ (such formulas are equivalent to $\bot \U \varphi$).
We can associate $\varkappa$ with a tree $\mathfrak T_\varkappa$ having edges of two types: black and red. Let $\varkappa = \gamma_0 \land \bigwedge_{i \in J_0} \varkappa_i$ and let $\varkappa_i = (\lambda_i \land \bigwedge_{j \in I_i} \varkappa_j) \U (\gamma_i \land \bigwedge_{j \in J_i} \varkappa_j))$, where $\varkappa_j$, for $j \in I_i \cup J_i$, is of the form $\varphi \U \psi$. Then the root $r$ of the tree is labelled with $\gamma_0$ and there are black edges $r \to \varkappa_i$, for $i \in J_0$. Each such edge is labelled with $\lambda_i$ and node $\varkappa_i$, $i \in J_0$, is labelled with $\gamma_i$. Now take any $i \in J_0$. There is a black edge $\varkappa_i \to \varkappa_{j}$, for each $j \in J_i$, and there is a red edge $\varkappa_i \to \varkappa_{j}$, for each $j \in I_i$. To define the label of each such black or red $\varkappa_i \to \varkappa_{j}$ edge and the label of the corresponding $\varkappa_{j}$, we look at the form of the $\varkappa_j$. Suppose $\varkappa_j = (\lambda_j \land \bigwedge_{k \in I_j} \varkappa_k) \U (\gamma_j \land \bigwedge_{k \in J_j} \varkappa_k))$. Then the label of $\varkappa_i \to \varkappa_{j}$ is $\lambda_j$ and the label of $\varkappa_{j}$ is $\gamma_j$. The construction of the edges $\varkappa_j \to \varkappa_k$, their labels, and the the construction of the subsequent tree is done analogously (by treating $j$ as $i$ in the previous construction). Thus, we can and will treat any $\varkappa \in \Q[\Us]$ as the tree $\mathfrak T_\varkappa$.

Let a \emph{black/red tree} be a tree where each edge has either black or red colour, but not both.
Every finite black/red tree $\mathfrak T$, where vertices are labelled with subsets of $\Sigma$ and edges are labelled with subsets of $\Sigma \cup \{\bot\}$, corresponds to a $\Q[\U]$-query. Indeed, let $x \to y$ be any edge such that $y$ is a leaf of $\mathfrak T$. Then we define a $\Q[\U]$-query $\varkappa_{x \to y}$ as $\lambda \U \gamma$, where $\lambda$ is the label of $x \to y$ while $\gamma$ is the label of $y$. Suppose now we have black (respectively, red) transitions $x \to y_i$, for $i \in J$ (respectively, $i \in I$), for an edge $z \to x$ in $\mathfrak T$ labelled with $\lambda$ for $x$ labelled with $\gamma$. We define $\varkappa_{z \to x} = (\lambda \land \bigwedge_{i \in I} \varkappa_{x \to y_i}) \U (\gamma \land \bigwedge_{i \in J} \varkappa_{x \to y_i})$. Let $r$ be the root of $\mathfrak T$ labelled with $\gamma$. Then the required $\varkappa_\mathfrak T$ representing $\mathfrak T$ is $\gamma \land \bigwedge_{r \to y \text{ in }\mathfrak T} \varphi_{r \to y}$.

Further, we define a \emph{black/red transition system} $S$ by adding either black or red colour, but not both, to each transition $s \to s'$ of the transition system $S$ defined above. The computation tree $\mathfrak T_S$ of $S$ is defined as before, however, every edge $\mathfrak s \to \mathfrak s'$ in $\mathfrak T_S$ has either red or black (but not both) colour that is equal to the colour of $s_n \to s_{n+1}$. In the definition of the direct product $S \times T$, we now require that $R''((s, s'), (t, t'))$ is red (respectively, black) iff both $R(s,t)$ and $R'(s', t')$ are red (respectively, black) (the labels are defined as before). Finally, in the definition of a homomorphic embedding of a black/red (labelled) tree $\mathfrak T'$ to another such tree $\mathfrak T''$ we require, additionally, that $h(\mathfrak s) \to h(\mathfrak s')$ is black (respectively, red) in $\mathfrak T''$ if $\mathfrak s \to \mathfrak s'$ is black (respectively, red) in $\mathfrak T'$. For a data instance $\Abox$, an $\LTL{}$ ontology $\TO$, a signature $\Sigma$, and a black/red transition system $S$, the definition of $S$ representing $\TO, \Abox$ continues to hold (with $\Q[\Us]$ changed to $\Q[\U]$ in  $(ii)$). Moreover, the same proof as in Lemma~\ref{th:repr-char} $(i)$ shows that we have:
\begin{lemma}\label{th:repr-char}
Let $E = (E^+, E^-)$, $E^+ = \{\Abox_i \mid i \in I^+ \}$, $E^- = \{\Abox_i \mid i \in I^- \}$, and let $\TO$ be an $\LTL{}$ ontology. Let $S^i$ represent $\TO, \Abox_i$, for $i \in I^+ \cup I^-$. Then $E$ is not $\Q[\U]$-separable under $\TO$ iff $\prod_{i \in I^+} S^i$ is simulated by $\uplus_{i \in I^-} S^i$.
\end{lemma}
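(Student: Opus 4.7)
The plan is to follow the structure of Lemma~\ref{lemmain}$(i)$, lifted to the black/red setting, using the two properties defining what it means for $S^i$ to represent $\TO,\Abox_i$ in the $\Q[\U]$-version given just above the statement. The key adjustment is that every construction and every homomorphic embedding must now respect edge colours (black versus red), and the product $\prod_{i\in I^+}S^i$ and disjoint union $\uplus_{i\in I^-}S^i$ have been defined so that colours are preserved componentwise.

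For the $(\Rightarrow)$ direction, I would assume that $S^+=\prod_{i\in I^+}S^i$ is not simulated by $S^-=\uplus_{i\in I^-}S^i$, fix a finite (colour-labelled) subtree $\mathfrak T$ of $\mathfrak T_{S^+}$ that admits no colour- and label-preserving embedding into $\mathfrak T_{S^-}$, and claim that $\varkappa_{\mathfrak T}\in\Q[\U]$ separates $E$ under $\TO$. To check $\TO,\Abox_i\models\varkappa_{\mathfrak T}$ for $i\in I^+$, project $\mathfrak T$ onto the $i$-th coordinate to obtain a finite black/red subtree $\mathfrak T^i$ of $\mathfrak T_{S^i}$; by construction the projection induces a colour-preserving embedding of $\mathfrak T$ into $\mathfrak T^i$, so $\varkappa_{\mathfrak T^i}\models\varkappa_{\mathfrak T}$, and property $(i)$ of representation gives $\TO,\Abox_i\models\varkappa_{\mathfrak T^i}$. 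To check $\TO,\Abox_i\not\models\varkappa_{\mathfrak T}$ for $i\in I^-$, suppose otherwise; then property $(ii)$ of representation yields a colour-preserving embedding of $\mathfrak T_{\varkappa_{\mathfrak T}}=\mathfrak T$ into $\mathfrak T_{S^i}$, which composes with the inclusion $\mathfrak T_{S^i}\hookrightarrow\mathfrak T_{S^-}$ to contradict the hypothesis.

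For the $(\Leftarrow)$ direction, I would start from a $\Q[\U]$-query $\varkappa$ separating $E$ under $\TO$ and build a finite subtree of $\mathfrak T_{S^+}$ witnessing failure of simulation. By property $(ii)$ applied to each $i\in I^+$, there is a colour-preserving embedding $h_i\colon\mathfrak T_{\varkappa}\to\mathfrak T_{S^i}$; tupling the $h_i$ coordinatewise produces an embedding of $\mathfrak T_{\varkappa}$ into $\mathfrak T_{S^+}$ (this is where the componentwise colour definition on the product is used), whose image is the desired finite subtree. This subtree cannot be embedded into any $\mathfrak T_{S^i}$ with $i\in I^-$, since such an embedding combined with property $(i)$ would yield $\TO,\Abox_i\models\varkappa$, contradicting separation; hence it is not embeddable into $\mathfrak T_{S^-}$ either, and $S^+$ is not simulated by $S^-$.

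The main obstacle is verifying that the black/red bookkeeping genuinely passes through every step: that the tree $\mathfrak T_{\varkappa_{\mathfrak T}}$ really is $\mathfrak T$ itself (so that property $(ii)$ applied to $\varkappa_{\mathfrak T}$ gives back an embedding of $\mathfrak T$), that projections of subtrees of a product remain subtrees, and that tupling individual embeddings into a product preserves both labels (via intersection on states and transitions) and colours (which were defined to be red iff red on every coordinate, and black iff black on every coordinate). These are routine but need to be traced carefully because, unlike the $\Q[\Us]$ case of Lemma~\ref{lemmain}, an embedding that ignored colours would correspond to a strictly larger query language and would break both directions of the equivalence.
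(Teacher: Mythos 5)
Your proposal is correct and takes essentially the same route as the paper: the paper proves this lemma by simply observing that, once homomorphic embeddings, products, and disjoint unions have been redefined to respect the black/red edge colours, the proof of Lemma~\ref{lemmain}$(i)$ goes through verbatim, which is precisely the lift you carry out (and you additionally spell out the $(\Leftarrow)$ direction and the colour bookkeeping that the paper leaves implicit). No gaps.
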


\paragraph{Constructing representations for queries without an ontology.}

Let $\avec{d},\avec{e} \subseteq \mathbb N$ be finite and nonempty. For any $d \in \avec{d}$, let $\mu(d) = \min \{e \in \avec{e} \mid d < e \}$. If $\mu$ is a surjective $\avec{d} \to \avec{e}$ function, we write $\avec{d} \sm \avec{e}$ and set
$$
\unt(\avec{d}, \avec{e}) ~=~ \bigcup_{d\in\avec{d}} \{d' \in \mathbb N \mid d < d' < \mu(d)\}.
$$

\begin{example}\em
Let $\avec{d} = \{1,2,3\}$ and $\avec{e} = \{3,4\}$. Then $\avec{d} \sm \avec{e}$ with $\unt(\avec{d}, \avec{e}) = \{2\}$. However, for $\avec{d} = \{1,2\}$ and $\avec{e} = \{3,4\}$, we have neither $\avec{d} \sm \avec{e}$ (because $\mu$ is not a surjection) nor $\avec{e} \sm \avec{d}$ (because $\mu$ is not defined).
\end{example}
Given a data instance $\Abox$, $\avec{e} \subseteq \mathbb N$ and an atom $A$ (possibly $\bot$), we write $\Abox, \avec{e} \models A$ if $A(e) \in \Abox$ for all $e \in \avec{e}$.
We construct a black/red transition system $S$ with a set of states $\{ 0, z, u \} \cup \{ \avec{d}\avec{e} \mid \avec{d},\avec{e} \subseteq \{0, \dots, \max \Abox \}\}$. The label of $0$ is $\{ A \mid A(0) \in \Abox\}$, the label of $z$ is $\emptyset$, the label of $u$ is $\Sigma^\bot$, and the label of $\avec{d}\avec{e}$ is $\{ A \mid A(e) \in \Abox \text{ for all }e \in \avec{e} \}$.
The alphabet of the transition labels is $2^{\Sigma \cup \{\bot\}}$. From $0$, we have
\begin{itemize}
\item[$(i)$] a black transitions to every $\avec{e}\avec{d}$ such that $\{0\} \sm \avec{d}$ (this implies that $|\avec{d}| = 1$) and $\avec{e} = \unt(\{ 0 \}, \avec{d})$, labelled with the set $\{ A \in \Sigma^\bot \mid \Abox, \avec{e} \models A\}$.

\end{itemize}
From each $\avec{e}\avec{d}$, we have
\begin{itemize}
\item[$(ii)$] a black transition to every $\avec{f}\avec{g}$ such that $\avec{d} \sm \avec{g}$ and $\avec{f} = \unt(\avec{d}, \avec{g})$, labelled with $\{ A \in \Sigma^\bot \mid \Abox, \avec{f} \models A\}$,
\item[$(iii)$] a red transition to every $\avec{f}\avec{g}$ such that $\avec{e} \sm \avec{g}$ and $\avec{f} = \unt(\avec{e}, \avec{g})$, labelled with $L = \{A \in \Sigma^\bot \mid \Abox, \avec{f} \models A\}$.
\end{itemize}
The state $z$ has a black and a red transition to itself labelled with $\Sigma^\bot$ and the same holds for $u$. We have a black transition to $z$ from every $\avec{e}\avec{d}$ labelled with $\{ A \in \Sigma \cup \{\bot\} \mid \Abox, \{ \max \avec{d}, \dots, \max \Abox - 1 \} \models A\}$, and we have a red transition to $z$ from every $\avec{e}\avec{d}$ labelled with $\{ A \in \Sigma^\bot \mid \Abox, \{ \max \avec{e}, \dots, \max \Abox - 1 \} \models A\}$. Finally, we have a red transition from every $\emptyset\avec{d}$ to $u$ as well as from $z$ to $u$ labelled with $\Sigma^\bot$.

\begin{lemma}
$S$ represents $\emptyset, \Abox$.
\end{lemma}
\begin{proof}

First, we show that $\Abox \models \varkappa_{\mathfrak T'}$, for every finite subtree $\mathfrak T'$ of $\mathfrak T_S$. Let $\I_\D$ be an $\LTL{}$ interpretation such that $\I_\Abox, n \models A$ iff $A(n) \in \D$, for any atom $A$. We regard any $\LTL{}$ interpretation $\I$ as a black/red transition system with the states $\{ 0, u  \} \cup \{ \avec{d}\avec{e} \mid \avec{d},\avec{e} \subseteq \mathbb N, \avec{e} \neq \emptyset\}$. The state $0$ is labelled with $\{ A \in \Sigma \mid \Abox, 0 \models A \}$, $u$ is labelled with $\Sigma^\bot$, while each state $\avec{d}\avec{e}$ is labelled with $\{ A \in \Sigma \mid \Abox, \avec{e} \models A \}$. From $0$, there are black transitions according to $(i)$. From each $\avec{e}\avec{d}$, we have black and red transitions according to $(ii)$ and $(iii)$, respectively. The state $u$ has a black and a red transition to itself and a transition from each state $\emptyset \avec{e}$ all labelled with $\Sigma^\bot$. It should be clear that $\Abox \models \varkappa_{\mathfrak T'}$ iff $\mathfrak T'$ is homomorphically embeddable into $\mathfrak T_{\I_\Abox}$. We define an embedding $h$ of $\mathfrak T'$ into $\I_\Abox$ as follows. We set $h(0) = 0$. Suppose $h(\mathfrak s)$ for $\mathfrak s = 0 \to s_1 \to \dots \to s_n$ has been defined and let $\mathfrak s' = 0 \to s_1 \to \dots \to s_{n+1}$. Suppose, first, $s_{n+1} = \avec{d}\avec{e}$ for $\avec{e} \subseteq [0, \max \Abox]$. Then we set $h(\mathfrak s') = \avec{d}\avec{e}$. Suppose $s_{n+1} = z$. Then $h(\mathfrak s_n) = \avec{d}\avec{e}$, for some $\avec{d}, \avec{e} \subseteq \mathbb N$. If $s_n \to s_{n+1}$ is black, we set $h(\mathfrak s') = \emptyset\{e+1 \mid e \in \avec{e}\}$ and if it is red, we set $h(\mathfrak s') = \emptyset\{d+1 \mid d \in \avec{d}\}$. Finally, if $s_{n+1} = u$, then we set $h(\mathfrak s') = u$. It is straightforwardly verified that $h$ is a homomorphism. Therefore, $\mathfrak T'$ is homomorphically embeddable into $\mathfrak T_{\I_\Abox}$ and $\Abox \models \varkappa_{\mathfrak T'}$.

Second, we show that $\mathfrak T_\varkappa$ is homomorphically embeddable into $\mathfrak T_S$ for each $\varkappa$ such that $\Abox \models \varkappa$. Take any $\varkappa$ such that $\Abox \models \varkappa$. It follows that $\mathfrak T_\varkappa$ is homomorphically embeddable into $\mathfrak T_{\I_\Abox}$. It remains to observe that $\mathfrak T_{\I_\Abox}$ is homomorphically embeddable into $S$. Indeed, we define $h(0) = 0$. Let $\mathfrak s = 0 \to s_1 \to \dots \to s_n$ for $n \geq 1$. If $s_n = \avec{d}\avec{e}$ for $\avec{e} \subseteq [0, \max \Abox]$, then $h(\mathfrak s) = s_n$. If $\max \avec{e} > \max \Abox$, we set $h(\mathfrak s) = z$. Finally, if $s_n = u$, we set $h(\mathfrak s) = u$. It is readily verified that $h$ is a homomorphism from $\mathfrak T_{\I_\Abox}$ into $S$.
\end{proof}

Now we explain why $\QBE(\Q[\U])$ is in \PSpace{}. To this end we observe that every run of $S$ of length $> \max \D$ results in either $s = z$ or $s = u$. Moreover, if $s = u$ then all the subsequent states of the run are also $u$. Thus, any run of $\mathfrak P = \prod_{i \in I^+} S^i$ of  length $> \max_{i \in I^+}\{\max \D\}$ is in a state $s = (t_1, \dots, t_{|I^+|})$ where $t_i \in \{u, z\}$. Then $\mathfrak T_{\mathfrak P}^M$ for $M = \max_{i \in I^+ \cup I^-}\{\max \D\} + 1$ is mapped into $\mathfrak U = \uplus_{i \in I^-} S^i$, if a map $h$ exists, in such a way that $h(s_0 \to \dots \to s_M)$ is either $z$ or $u$ (in the corresponding $S$ representing $\Abox_i$, $i \in I^-$). So, we obtain that if $\mathfrak T_{\mathfrak P}^M$ is homomorphically embeddable into $\mathfrak U$, then any finite subtree of $\mathfrak T_{\mathfrak P}$ is homomorphically embeddable into $\mathfrak U$. To decide $\QBE(\Q[\U])$, we can check the embeddability of $\mathfrak T_{\mathfrak P}^M$ in a branch-by-branch fashion similarly to the case of $\Q[\Us]$. Note, however,  that the existence of a polynomial algorithm for $\QBEbp(\Q[\U])$ and $\QBEba(\Q[\U])$ remains open as bounding the number of positive examples does not result in $\mathfrak P$ of polynomial size.

\paragraph{Constructing representations for queries with an $\LTL_\horn\Xallop$-ontology.}

Let $\avec{d},\avec{e}$ be finite and nonempty subsets of the interval $[0, P)$, for some $P \in \mathbb N$, and $M \in \mathbb N$.
For any $d \in \avec{d}$, let $\text{succs}(d, \avec{e}) = \{e \in \avec{e} \mid d < e \}$ and let
$$
\mu(d) =
\begin{cases}
  \min \text{succs}(d, \avec{e}), & \mbox{if either } d \in [0, M) \text{ or both } d \in [M, P) \text{ and } \text{succs}(d, \avec{e}) \neq \emptyset, \\
  \min (\avec{e}), & \mbox{if }d \in [M, P) \text{ and } \text{succs}(d, \avec{e}) = \emptyset.
\end{cases}
$$
If $\mu$ is a surjective $\avec{d} \to \avec{e}$ function, we write $\avec{d} \sm_{M,P} \avec{e}$ and set
$
\unt_{M, P}(\avec{d}, \avec{e}) ~=~ \bigcup_{d\in\avec{d}} \{d' \in \text{bwn}(d,\mu(d))\},
$
where $\text{bwn}(d,e) = (d, e)$ if $d < e$ and $(e, P) \cup [M, d)$ if $d \geq e$.

\begin{example}\em
Let $M = 2$, $P = 8$,  $\avec{d} = \{1,4,6,7\}$ and $\avec{e} = \{3,5\}$. Then $\avec{d} \sm_{M,P} \avec{e}$ with $\unt_{M,P}(\avec{d}, \avec{e}) = \{2, 7\}$.
\end{example}

Let $\Abox$ be a data instance and $\TO$ an $\LTL_\horn\Xallop$-ontology. Let $\C_{\TO, \D}$ be the canonical model of $\TO, \Abox$ and $s_{\TO, \Abox}$, $p_{\TO, \D}$ be the numbers from Proposition~\ref{prop:period}. Let $\max \Abox+s_{\TO, \Abox} = M$ and $\max \Abox+s_{\TO, \Abox}+p_{\TO, \Abox} = P$.  We define $S$ with the states $\{0, u\} \cup \{\avec{d}\avec{e} \mid \avec{d},\avec{e} \subseteq [0,P) \}$. The label of $0$ is $\{ A \in \Sigma \mid \C_{\TO, \Abox}, 0 \models A \}$, the label of $\avec{d}\avec{e}$ is $\{ A \in \Sigma \mid \C_{\TO, \Abox}, \avec{e} \models A \}$ and the label of $u$ is $\Sigma^\bot$. We define the (red and black) transitions between $0$ and $\avec{d}\avec{e}$ as specified by $(i)$--$(iii)$ above but using $\sm_{M,P}$ instead of $\sm$, $\unt_{M,P}$ instead of $\unt$, and $\C_{\TO, \Abox}, \avec{e}$ instead of $\Abox, \avec{e}$ (the same applies to $\avec{f}$). Finally, we define the transitions between $\emptyset \avec{e}$ and $u$ as defined above.

\begin{lemma}
$S$ represents $\TO, \Abox$.
\end{lemma}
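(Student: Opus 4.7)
The plan is to adapt the proof of the preceding ontology-free $\Q[\U]$ representation lemma to the setting of $\LTL_\horn\Xallop$ ontologies, using the canonical model $\C_{\TO,\Abox}$ in place of $\I_\Abox$ and exploiting its periodic structure (Proposition~\ref{prop:period}) to justify the wrap-around relations $\sm_{M,P}$ and $\unt_{M,P}$. Throughout, I will use the fact that for any $\varkappa\in\Q[\U]$ and $k\in\mathbb N$, we have $\TO,\Abox\models\varkappa(k)$ iff $\C_{\TO,\Abox}\models\varkappa(k)$, so in both clauses of the statement I may replace $\TO,\Abox$ with $\C_{\TO,\Abox}$.

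As an intermediate object, I would build the black/red transition system $\C^{*}$ obtained from $\C_{\TO,\Abox}$ in the very same way that $\I_\Abox$ was turned into a transition system in the proof of the preceding ontology-free $\Q[\U]$ lemma: states $0$, $u$, and $\avec{d}\avec{e}$ for finite nonempty $\avec{d},\avec{e}\subseteq\mathbb N$, with node and edge labels read off $\C_{\TO,\Abox}$ via the rules $(i)$--$(iii)$ (but with $\sm$ and $\unt$ in place of $\sm_{M,P}$ and $\unt_{M,P}$). The proof of that lemma, applied to $\C_{\TO,\Abox}$ instead of $\I_\Abox$, shows that $\C_{\TO,\Abox}\models\varkappa$ iff $\mathfrak T_{\varkappa}$ embeds homomorphically into $\mathfrak T_{\C^{*}}$; the entire task reduces to relating $\mathfrak T_{S}$ and $\mathfrak T_{\C^{*}}$.

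For clause (i), I would produce, for every finite subtree $\mathfrak T'$ of $\mathfrak T_{S}$, a homomorphic embedding $h\colon\mathfrak T'\to\mathfrak T_{\C^{*}}$ defined inductively by \emph{unfolding} folded indices into $\mathbb N$. Starting from $h(0)=0$, whenever a vertex of $\mathfrak T'$ is labelled by a state $\avec{d}\avec{e}$ reached from a vertex whose image is $\avec{d}_{0}\avec{e}_{0}$ via a transition of type $(ii)$ or $(iii)$, I pick the least $\avec{e}'\subseteq\mathbb N$ above $\max\avec{e}_{0}$ (respectively $\max\avec{d}_{0}$) such that $r(\avec{e}')=\avec{e}$ and analogously for $\avec{d}'$, where $r$ folds $\mathbb N$ onto $[0,P)$ by $r(n)=n$ for $n<P$ and $r(n)=M+((n-M)\bmod p_{\TO,\Abox})$ otherwise; the wrap-around clause in $\sm_{M,P}$, which sets $\mu(d)=\min\avec{e}$ when $d\in[M,P)$ has no successor in $\avec{e}$ within the same period, is exactly what guarantees that such an $\avec{e}'$ exists, and the corresponding clause in $\unt_{M,P}$ matches the set of intermediate instants after unfolding. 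Transitions to $z$ are handled by mapping $z$ to a suitable $\emptyset\avec{e}'$ followed by $u$, and transitions to $u$ go to $u$ directly. Periodicity of types ($\tp_{\TO,\Abox}(n)=\tp_{\TO,\Abox}(n+p_{\TO,\Abox})$ for $n\ge M$) ensures that all node and edge labels are preserved.

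For clause (ii), I would go in the opposite direction, defining $h\colon\mathfrak T_{\C^{*}}\to\mathfrak T_{S}$ by \emph{folding} each $\avec{d}\avec{e}$-vertex to $r(\avec{d})r(\avec{e})$, sending $0\mapsto 0$ and $u\mapsto u$, and routing long runs eventually into $z$ when the target set in $\C^{*}$ lies entirely beyond $P$. Label preservation on nodes is immediate from periodicity; the delicate point is preservation of transitions of types $(ii)$ and $(iii)$, which is precisely where the wrap-around design of $\sm_{M,P}$ and $\unt_{M,P}$ pays off: when the unfolded $\mu$-successor of some $d\in[M,P)$ lies in the next period, folding sends it to $\min\avec{e}$ and the set of intermediate instants collapses to $(e,P)\cup[M,d)$, matching the definition of $\unt_{M,P}$. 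Combining the two directions via the intermediate object $\C^{*}$ and the canonical-model equivalence yields both clauses of the lemma.

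The main obstacle is the transition-preservation check in the folding argument: one must verify that concatenating several wrap-arounds (runs that loop through the periodic tail many times) still produces a valid sequence of $\sm_{M,P}$-steps with the right $\unt_{M,P}$-labels, which requires a careful case analysis on whether consecutive indices $d$ and $\mu(d)$ lie in the same period, in consecutive periods, or skip across multiple periods. Once this bookkeeping is discharged, the rest of the argument mirrors the ontology-free proof essentially verbatim.
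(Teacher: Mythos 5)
Your proposal follows essentially the same route as the paper's proof: both reduce to the canonical model $\C_{\TO,\Abox}$ viewed as a black/red transition system, embed finite subtrees of $\mathfrak T_S$ into it by unfolding folded indices along the periodic tail, and embed $\mathfrak T_{\C_{\TO,\Abox}}$ back into $S$ by folding with the map $r(n)=n$ for $n<P$ and $r(n)=M+((n-M)\bmod p_{\TO,\Abox})$ otherwise, with the wrap-around clauses of $\sm_{M,P}$ and $\unt_{M,P}$ doing exactly the work you describe. The only cosmetic slip is your invocation of a state $z$, which belongs to the ontology-free construction and is absent from the Horn-ontology system $S$ (the periodic tail makes it unnecessary); this does not affect the argument.
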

\begin{proof}
First, we show that $\Abox \models \varkappa_{\mathfrak T'}$ for every finite subtree $\mathfrak T'$ of $\mathfrak T_S$.
It is clear from the properties of $\C_{\TO, \Abox}$ that $\TO, \Abox \models \varkappa_{\mathfrak T'}$ iff $\mathfrak T'$ is homomorphically embeddable into $\mathfrak T_{\C_{\TO, \Abox}}$. We define an embedding $h$ of $\mathfrak T'$ into $\C_{\TO, \Abox}$ as follows. We set $h(0) = 0$. Suppose $h(\mathfrak s)$ for $\mathfrak s = 0 \to s_1 \to \dots \to s_n$ has been defined and $\mathfrak s' = 0 \to s_1 \to \dots \to s_{n+1}$. If $s_n = u$, we set $h(\mathfrak s') = u$. Suppose $s_n = \avec{d}\avec{e}$ and $s_{n+1} = \avec{f}\avec{g}$. We will have an IH that $r(\avec{d}') = \avec{d}$ and $r(\avec{e}') = \avec{e}$ for the map $r$ from Lemma~\ref{th:horn+qus-repr}.
First, we assume $s_n \to s_{n+1}$ is a black transition. Then $\avec{e} \sm_{M, P} \avec{g}$ and let $\mu_{M,P} \colon \avec{e} \to \avec{g}$ be the corresponding (surjective) map. We construct a map $\mu' \colon \avec{e}' \to \mathbb N$ by taking
$$
\mu'(e) =
\begin{cases}
  e +  \mu_{M,P}(r(e)) - r(e), & \mbox{if } \mu_{M,P}(r(e)) > r(e); \\
  e + p_{\TO, \Abox} - r(e) +\mu_{M,P}(r(e)), & \mbox{otherwise},
\end{cases}
$$
for each $e \in \avec{e}'$.
We set $h(\mathfrak s') = \avec{f}'\avec{g}'$, where $\avec{g}' = \mu'(\avec{e'})$ and $\avec{f}' = \unt(\avec{e}, \avec{g}')$. We note that $r(\avec{f}') = \avec{f}$ and $r(\avec{g}') = \avec{g}$, so IH continues to hold. The case when $s_n \to s_{n+1}$ is a red transition is similar and left to the reader. It can be readily verified that $h$ is a homomorphism from $\mathfrak T'$ to $\mathfrak T_{\C_{\TO, \Abox}}$.

Now, we show that $\mathfrak T_\varkappa$ is homomorphically embeddable into $\mathfrak T_S$ for each $\varkappa$ such that $\TO, \Abox \models \varkappa$. Take any $\varkappa$ such that $\TO, \Abox \models \varkappa$. It follows that $\mathfrak T_\varkappa$ is homomorphically embeddable into $\mathfrak T_{\C_{\TO, \Abox}}$. It remains to show that $\mathfrak T_{\C_{\TO, \Abox}}$ is homomorphically embeddable into $S$. We define $h$ so that $h(0) = 0$, $h(\mathfrak s) = u$ for $\mathfrak s = s_0 \to s_1 \to \dots \to u$. For $\mathfrak s = s_0 \to s_1 \to \dots \to \avec{d}\avec{e}$, we set $h(\mathfrak s) = r(\avec{d})r(\avec{e})$. It is readily verified that $h$ is homomorphism.
\end{proof}

To justify the $2\ExpTime{}$ upper bound for $\QBE(\LTL_\horn\Xallop, \Q[\U])$, we observe that $S$ above representing $\TO, \D$ can be constructed in time $O(2^{2^{|\TO|+|\Abox|}})$ as $S$ has such number of states.
To justify the $\PSpace{}$ upper bound, let $N_i$, for $i \in I^+ \cup I^-$, be the number of states in $S$ representing $\TO, \Abox_i$ of the form either $u, z$ or $\avec{d}\avec{e}$, for $(\avec{d} \cup \avec{e}) \cap [0, \max \Abox_i + s_{\TO, \Abox_i}) = \emptyset$. We set $N^+ = \prod_{i \in I^+} N_i$.
Similarly to the argument after Lemma~\ref{th:horn+qus-repr},
we observe that if there exists a finite subtree $\mathfrak T$ of $\mathfrak T_{\mathfrak P}$ that is not homomorphically embeddable into $\mathfrak U$, then there exists such $\mathfrak T$ satisfying the property that every $s$ from $\mathfrak P$ occurs on each path of $\mathfrak T$ at most $K^-=\max_{i \in I^-} \{ \max \Abox_i + s_{\TO, \Abox_i} + N_i\}$-many times. It follows that the required $\mathfrak T$, if exists, is a subtree of $\mathfrak T_{\mathfrak P}^M$ for $M = \max_{i \in I^+} \{ \max \D_i + s_{\TO, \Abox_i} \} + N^+ K^-$.
Indeed, in the required $\mathfrak T$, if there is a path that is longer than $M$, the property above would be violated. By our construction of $S^i$, any $n$-th element, for $n \geq \max_{i \in I^+} \{ \max \D_i + s_{\TO, \Abox_i} \}$, of any path of $\mathfrak T_{\mathfrak P}$ is of the form $(t_1, \dots, t_{|I^+|})$, where $t_i$ is either $u,z$ or $\avec{d}\avec{e}$ satisfying $(\avec{d} \cup \avec{e}) \cap [0, \max \Abox_i + s_{\TO, \Abox_i}) = \emptyset$. Observe that $(t_1, \dots, t_{|I^+|}) \to (s_1, \dots, s_{|I^+|})$ in $\mathfrak P$, for $(t_1, \dots, t_{|I^+|})$ as above, implies $s_i$ is either $u,z$ or $\avec{d}\avec{e}$ satisfying $(\avec{d} \cup \avec{e}) \cap [0, \max \Abox_i + s_{\TO, \Abox_i}) = \emptyset$. Any sequence $(t_1, \dots, t_{|I^+|}) \to \dots \to (s_1, \dots, s_{|I^+|})$ as above in $\mathfrak P$ longer than $N^+ K^-$ will have some $(t_1, \dots, t_{|I^+|})$ repeated more than $K^-$ times.


\section{Proofs for Section~\ref{sec:horn}}

Let $E^+=\{\mathcal D_1^+,\dots,\mathcal D^+_n\}$ and $E^-=\{\mathcal D_1^-,\dots,\mathcal D_l^-\}$ and let $\TO$ be a $\LTL_\horn\Xallop$ ontology.
For every $\mathcal{C}_{\TO,\mathcal{D}}$ with $\mathcal{D} \in E^+ \cup E^-$, let $s_{\TO,\mathcal{D}}\leq 2^{|\TO|}$ and $p_{\TO,\mathcal{D}}\leq 2^{2|\TO|}$ be the length of the `handle' and the length of the `period' in $\mathcal{C}_{\TO,\mathcal{D}}$, respectively (provided by Proposition~\ref{prop:period}). Set
$$
k = \max_{\mathcal{D} \in E^+ \cup E^-} (\max \mathcal{D} + s_{\TO,\mathcal{D}}), \qquad m = \prod_{\mathcal{D} \in E^+ \cup E^-} p_{\TO,\mathcal{D}}.
$$
\begin{lemma}\label{lem:hornsize}
	$(i)$ If $E$ is $\mathcal{Q}[\nxt,\Diamond]$-separable under $\TO$,
	then it is separated by a conjunction of at most $l$-many $\varkappa\in\mathcal{Q}_{p}^{\circ}[\Diamond]$
	of $\Diamond$-depth $\leq k+1$ and $\nxt$-depth $\leq k+m$.
	
	$(ii)$ If $E$ is $\mathcal{Q}[\Diamond]$-separable under $\TO$,
	then it is separated by a conjunction of at most $l$-many $\varkappa\in\mathcal{Q}_{p}[\Diamond]$
	of $\Diamond$-depth $\leq k+1$.
	
	$(iii)$ If $E$ is $\mathcal{Q}_{p}[\nxt,\Diamond]$-separable under $\TO$,
	then it is separated by some $\varkappa\in\mathcal{Q}_{p}[\nxt,\Diamond]$
	of $\Diamond$-depth $\leq k+l$ and $\nxt$-depth $\leq k+m$.
	
	$(iv)$ If $E$ is $\mathcal{Q}_{p}[\Diamond]$-separable under $\TO$,
	then it is separated by some $\varkappa\in\mathcal{Q}_{p}[\Diamond]$
	of $\Diamond$-depth $\leq k+l$.
\end{lemma}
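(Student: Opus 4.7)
My plan is to leverage the two shortening ideas already sketched in the proof of Theorem~\ref{thm:Horn-diam}. For parts (i) and (ii), I will first apply Theorem~\ref{th:data-reduction}(i.3) to replace any $\mathcal{Q}[\nxt,\Diamond]$- (resp., $\mathcal{Q}[\Diamond]$-) separator of $(E^+,E^-)$ under $\TO$ by a conjunction $\bigwedge_{\mathcal{D}\in E^-}\q_\mathcal{D}$ of at most $l$ many $\mathcal{Q}_{p}^{\circ}[\Diamond]$- (resp., $\mathcal{Q}_{p}[\Diamond]$-) queries, where each $\q_\mathcal{D}$ separates $(E^+,\{\mathcal{D}\})$ under $\TO$. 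The problem then reduces to bounding the $\nxt$- and $\Diamond$-depth of a single-negative separator, which is handled by two independent shortening steps.

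For each $\q_\mathcal{D}$ in form~\eqref{normF}, the $\nxt$-depth bound $k+m$ follows by folding: any $\rho_l=\bigwedge_{i=0}^\ell\nxt^i\lambda_i$ with $\ell>k+m$ may be replaced, using the periodicity $\tp_{\TO,\mathcal{D}'}(n)=\tp_{\TO,\mathcal{D}'}(n+p_{\TO,\mathcal{D}'})$ for $n\ge k$ applied simultaneously on every $\mathcal{D}'\in E^+\cup\{\mathcal{D}\}$, by the depth-$(k+m)$ formula $\bigwedge_{i=0}^k\nxt^i\lambda_i\land\bigwedge_{j=1}^m\nxt^{k+j}\bigwedge_{i\le\ell,\,j=(i-k)\bmod m}\lambda_i$ exhibited in the proof of Theorem~\ref{thm:Horn-diam}. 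The $\Diamond$-depth bound $k+1$ is obtained by the conjunction-then-distribution step of that same proof: if $n>k$ in~\eqref{normF}, then $\rho_0\land\Diamond(\rho_1\land\dots\land\Diamond\rho_k)\land\bigwedge_{i>k}\Diamond^{k+1}\rho_i$ still separates $(E^+,\{\mathcal{D}\})$, and since the full conjunction fails on $\mathcal{D}$ at least one conjunct $\Diamond^{k+1}\rho_j$ with $j>k$ must fail on $\mathcal{D}$, so keeping the prefix together with that single extension yields a $\mathcal{Q}_{p}^{\circ}[\Diamond]$-separator of $\Diamond$-depth $k+1$. This finishes (i) and (ii).

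For (iii) and (iv), Theorem~\ref{th:data-reduction}(i.3) is unusable because it introduces a conjunction. The plan is instead to argue directly by combining the per-negative witnesses produced by the previous paragraph. Fix a $\mathcal{Q}_{p}[\nxt,\Diamond]$-separator $\q$ of $(E^+,E^-)$ of form~\eqref{dnpath} with $\Diamond$-depth $n$; viewing $\q$ equivalently as a $\mathcal{Q}_{p}^{\circ}[\Diamond]$-query by grouping consecutive $\nxt$s inside the $\rho_l$'s, the $\nxt$-depth bound $k+m$ is inherited verbatim because folding is local to each $\rho_l$ and preserves the path shape. For the $\Diamond$-depth bound $k+l$, apply the single-negative argument of the previous paragraph separately to each $\mathcal{D}^-_j\in E^-$ to obtain an index $f(j)\in[k+1,n]$ such that the path query $\q^{(j)}$ that keeps $\rho_0,\op_1,\ldots,\op_k,\rho_k$ from $\q$ and appends the single step $\Diamond\rho_{f(j)}$ already separates $(E^+,\{\mathcal{D}^-_j\})$ under $\TO$. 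Letting $g(1)<\dots<g(t)$ enumerate the set $\{f(1),\dots,f(l)\}$, so $t\le l$, form the single path query $\q^\star$ that keeps the prefix $\rho_0,\op_1,\ldots,\op_k,\rho_k$ and appends $\Diamond\rho_{g(1)},\Diamond\rho_{g(2)},\ldots,\Diamond\rho_{g(t)}$ in order. Then $\q^\star\in\mathcal{Q}_{p}[\nxt,\Diamond]$ has $\Diamond$-depth at most $k+l$; it is satisfied on every $\mathcal{D}\in E^+$ because $g(1)<\dots<g(t)$ is an increasing subsequence of the admissible matching positions of $\q$ on $\mathcal{D}$, and it fails on every $\mathcal{D}^-_j$ because $\q^\star$ logically entails $\q^{(j)}$, which fails on $\mathcal{D}^-_j$.

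The only subtle point, which I expect to be the main obstacle, is to justify that the per-negative witnesses $f(j)$ obtained from the single-negative argument can be chosen from a common pool of indices that are admissible on every positive example simultaneously. The point is that the admissibility of an index $f>k$ for the path prefix extended by $\Diamond\rho_f$ depends only on $\q$ and $E^+$, not on the negative, so the same pool is available to every $\mathcal{D}^-_j$. To make this precise, my plan is to fix, for each $\mathcal{D}\in E^+$, a canonical (e.g.\ leftmost) satisfying assignment for $\q$ in $\mathcal{C}_{\TO,\mathcal{D}}$, read off the common set of indices past position $k$ that remain matchable under every such assignment, and verify that arranging any subset of this set in increasing order yields a path still satisfied on every positive example. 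Once this is in place, the concatenation $\q^\star$ constructed above witnesses the $\Diamond$-depth bound $k+l$, completing (iii) and (iv).
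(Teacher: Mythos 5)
Your proposal is correct and follows essentially the same route as the paper's proof: reduce (i)/(ii) to a single negative example via Theorem~\ref{th:data-reduction}(i.3), fold the $\nxt$-depth using the periodicity of the canonical models, collapse the $\Diamond$-depth via the $\Diamond^{k+1}$-conjunction step, and for (iii)/(iv) chain one chosen index per negative example after the length-$k$ prefix in increasing order. The ``subtle point'' you flag is indeed handled exactly as you suspect --- since all the selected $\rho_{f(j)}$ are blocks of the one original path separator $\q$, restricting its satisfying assignment on each positive example to the kept indices immediately validates $\q^\star$ --- so no extra machinery (canonical leftmost assignments, common pools) is needed.
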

\begin{proof}
	Recall that the types of any $\mathcal{C}_{\TO,\mathcal{D}}$ form a sequence
	$$
	\tp_0,\dots,\tp_k, \tp_{k+1},\dots, \tp_{k+m}, \dots, \tp_{k+1},\dots, \tp_{k+m}, \dots .
	$$
$(i)$ Recall from the proof of Theorem~\ref{th:data-reduction} $(i.3)$ that we may assume that for any $\mathcal{D}^{-}\in E^{-}$ there is a query $\varkappa\in\mathcal{Q}_{p}^{\circ}[\Diamond]$ that separates $(E^{+},\{\mathcal{D}^{-}\})$. So let $\mathcal{D}^{-}\in E^{-}$ and assume that $(E^{+},\{\mathcal{D}^{-}\})$ is $\Qp^\circ[\Diamond]$-separable under $\TO$. Then there is a separator
	$$
	\varkappa = \rho_0 \land \Diamond (\rho_1 \land \Diamond (\rho_2 \land \dots \land \Diamond \rho_n))
	$$
	in which each $\rho_r$ has $\nxt$-depth $\le k+m$. Indeed, in view of the form of the canonical models, if $\rho_r = \bigwedge_{i=0}^\ell \nxt^i \lambda_i$ with $\ell > k+m$, then one can replace $\rho_r$ with
	$$
	\rho'_r = \bigwedge_{i=0}^k \nxt^i \lambda_i \land \bigwedge_{j=1}^m \nxt^{k+j} \hspace*{-5mm} \bigwedge_{\substack{0 \le i \le \ell\\ j=(i-k) \ \text{mod} \ m}} \hspace*{-5mm} \lambda_i.
	$$
	In addition, if $n > k$, then the query
	$$
	\rho_0 \land \Diamond (\rho_1 \land \Diamond (\rho_2 \land \dots \land \Diamond \rho_k)) \land \bigwedge_{i=k+1}^n \Diamond^{k+1} \rho_i
	$$
	still separates $(E^{+},\{\mathcal{D}^{-}\})$ under $\TO$, and so some
	$\rho_0 \land \Diamond (\rho_1 \land \Diamond (\rho_2 \land \dots \land \Diamond (\rho_k \wedge \Diamond \rho_{j}))$ with $k<j\leq n$ separates $(E^{+},\{\mathcal{D}^{-}\})$ under $\TO$.
	
	\medskip
	
	$(ii)$ is proved by dropping the $\nxt$-queries from the proof of $(i)$.
	
	\medskip
	
	$(iii)$ The proof of $(i)$ shows that if $E$ is $\mathcal{Q}_p[\nxt,\Diamond]$-separable under $\TO$, then there is a separator
	$$
	\rho_0 \land \Diamond (\rho_1 \land \Diamond (\rho_2 \land \dots \land \Diamond \rho_k)) \land \bigwedge_{i=k+1}^n \Diamond^{k+1} \rho_i
	$$
	in which each $\rho_r$ has $\nxt$-depth $\le k+m$. Now we can select, for each
	negative example $\mathcal{D}^{-}$, a $j$ such that $$\rho_0 \land \Diamond (\rho_1 \land \Diamond (\rho_2 \land \dots \land \Diamond (\rho_k \wedge \Diamond \rho_{j}))$$ separates $(E^{+},\{\mathcal{D}^{-}\})$ under $\TO$. Let $j_{1},\ldots,j_{l}$ be thus selected. Then
	$$
	\rho_0 \land \Diamond (\rho_1 \land \Diamond (\rho_2 \land \dots \land \Diamond (\rho_k \wedge \Diamond (\rho_{j_{1}} \land \dots \land \Diamond \rho_{j_{l}})))
	$$
	separates $E$ under $\TO$.
	
	\medskip
	
	$(iv)$ is proved by dropping the $\nxt$-queries from the proof of $(iii)$.
\end{proof}

\bigskip
\noindent
\textbf{Theorem~\ref{thm:Horn-diam}.}
{\em Let $\mathcal{Q}\in \{\mathcal{Q}[\nxt,\Diamond],\mathcal{Q}[\Diamond],\mathcal{Q}_{p}[\nxt, \Diamond],\mathcal{Q}_{p}[\Diamond]\}$. Then $\mathsf{QBE}(\LTL_{\smash{\horn}}\Xallop,\mathcal{Q})$ and $\QBEba\!(\LTL_{\smash{\horn}}\Xallop,\mathcal{Q})$ are both \PSpace-complete for combined complexity.
}
\begin{proof}
	The lower bound follows from~\cite{Chen199495}. We first give the upper bound proof for $\mathcal{Q}_{p}[\Diamond]$.

	
	Let $E^+=\{\mathcal D_1^+,\dots,\mathcal D^+_n\}$ and $E^-=\{\mathcal D_1^-,\dots,\mathcal D_l^-\}$. We use Lemma~\ref{lem:hornsize} $(iv)$.
	Let $k$ and $m$ be as in Lemma~\ref{lem:hornsize}.
	The nondeterministic algorithm starts by guessing a conjunction of atoms $\rho_0$ and checking in \PSpace{} that $\TO,\mathcal D^+_i\models\rho_0(0)$ for all $i \in [1,n]$. We use numbers $d^+_i,d^-_j \le k+m$, for $i \in [1,n]$, $j \in [1,l]$, and a set $N\subseteq[1,l]$ that will keep track of the negative examples yet to be separated. Initially, we set all $d^+_i,d_j^- = 0$ and $N=\{j\in[1,l]\mid\TO,\mathcal D^-_j\models\rho_0(0)\}$. Then we repeat the following steps until $N=\emptyset$, in which case the algorithm terminates accepting the input:
	\begin{itemize}
		\item Guess a conjunction $\rho$ of atoms in the signature of $\TO$ and $E$.
		
		\item For every $i\in [1,n]$, check in \PSpace{} that $\TO,\mathcal D^+_i \models \Diamond\rho(d^+_i)$ and reject if this is not so.
		
		\item Guess ${d_i^+}'$ such that $\min(d^+_i,k) < {d^+_i}' \le k+m$ and $\TO,\mathcal D^+_i \models \rho({d^+_i}')$.
		
		\item For each, $j\in N$ check that $\TO,\mathcal D^-_j \models \Diamond\rho(d_i^-)$. If no, remove $j$ from $N$. Otherwise, find in \PSpace{} the smallest ${d^-_i}'$ such that $\min(d^-_i,k)< {d^-_i}' \le k+m$ and $\TO,\mathcal D^-_i \models \rho({d^-_i}')$.
		
		\item Set $d^+_i := {d^+_i}'$ and, for all $j$ still in $N$, set $d^-_j := {d^-_i}'$.
	\end{itemize}
	Let $\varphi_i=\rho_0\land\Diamond(\rho_1\land\Diamond(\dots(\rho_{i-1}\land\Diamond\rho_i)\dots)$,  where $\rho_i$ is the conjunction of atoms guessed in the $i$-th iteration. Let $N_i$ be the set $N$  after the $i$-th iteration. Then, for all $j\in[1,n]$, we have $\TO,\mathcal D^+_j\models\varphi_i(0)$, for all $j\in N_i$ we have $\TO,\mathcal D^-_j\models\varphi_i(0)$, and for all $j\in [1,l]\setminus N_i$ we have $\TO,\mathcal D^-_j\not\models\varphi_i(0)$. So the algorithm accepts after the $\ell$-th iteration iff $\varphi_\ell$ separates $(E^+,E^-)$.

By Lemma~\ref{lem:hornsize} $(ii)$, this also gives a \PSpace{} algorithm for $\mathcal{Q}[\Diamond]$. By Lemma~\ref{lem:hornsize} $(i)$, for $\mathcal{Q}[\nxt,\Diamond]$ it suffices to give a \PSpace{} algorithm for $\mathcal{Q}_{p}^{\circ}[\Diamond]$, which can be obtained by modifying the algorithm above. 	

It starts by guessing a conjunction of atoms $\lambda_0$  and checking that $\TO, \mathcal{D}_i^+ \models \lambda_0(0)$ for all $i \in [1, n]$, which can be done in \PSpace{}. We use numbers $d^+_i ,d^-_i \le k+m$  (for $\Diamond$-subformulas) and $c \le k+m$ (for $\nxt$-formulas), and a set $N \subseteq [1, l]$ that will keep track of the negative examples yet to be separated. Initially, we set all $d^+_i ,d^-_j = 0$, $c = 0$, and $N = [1,l]$. Then we repeat (1) or (2) until $N = \emptyset$, in which case the algorithm terminates accepting the input:
	\begin{enumerate}
		\item[(1)]
		\begin{itemize}
			\item[--] Set $c=0$.
			
			\item[--] Guess a conjunction $\lambda$ of atoms in the signature of $\TO$ and $E$.
			
			\item[--] For every $i \in [1, n]$, check in \PSpace{} that $\TO, \mathcal{D}_i^+ \models \Diamond \lambda(d+i)$ and reject if this is not so.
			
			\item[--] Guess $d^{+'}_i$ such that $\min (d+,k) < d^{+'} \le k+m$ and $\TO, \mathcal{D}^+_i \models \lambda(d^{+'})$.
			
			\item[--] For each $j \in N$, check that $\TO, \mathcal{D}_j^- \models \Diamond \lambda( d^-_i)$. If no, remove $j$ from $N$. Otherwise, find in \PSpace{} the smallest $d^{-'}$ such that $\min (d^-_i,k) < d^{-'} \le k+m$ and $\TO, \mathcal{D}_i^- \models  \lambda( d^{-'})$.
			
			\item[--] Set $d^+_i = d^{+'}_i$ and, for all $j$ still in $N$, set $d^-_j = d_i^{-'}$.
		\end{itemize}
		
		\item[(2)]
		\begin{itemize}
			\item[--] Increment $c$, provided $c < m + k$.
			
			\item[--] Guess a conjunction $\lambda$ of atoms in the signature of $\TO$ and $E$.
			
			\item[--] For every $i \in [1, n]$, check in \PSpace{} that $\TO, \mathcal{D}_i^+ \models \lambda(d^+_i + c)$ and reject if this is not so.
			
			\item[--] For each $j \in N$, check that $\TO, \mathcal{D}_j^- \models \lambda(d^-_i +c)$. If no, remove $j$ from $N$.
		\end{itemize}
	\end{enumerate}

A \PSpace{} algorithm for $\mathcal{Q}_p[\nxt,\Diamond]$ is similar to the one above: it uses  Lemma~\ref{lem:hornsize} $(iii)$ for guessing the next temporal operator $\nxt$ or $\Diamond$ in the query.
\end{proof}

\bigskip
\noindent
\textbf{Theorem~\ref{thm:Horn-U}.}
{\em $\mathsf{QBE}(\LTL_\horn\Xallop, \mathcal{Q}[\Us])$ is in \ExpTime{} for combined complexity,
$\mathsf{QBE}(\LTL_\horn\Xallop, \mathcal{Q}_{p}[\U])$ is in \ExpSpace{}, and
$\smash\QBEba(\LTL_\horn\Xallop, \mathcal{Q}_{p}[\U])$ is \NExpTime{}-hard.
}
\begin{proof}
The upper bound follows from Lemmas~\ref{lemmain} and \ref{th:horn+qus-repr} above as explained in the main part of the paper.

Now we establish the \NExpTime{} lower bound. Let $\M$ be a non-deterministic Turing machine that accepts words $\boldsymbol{x}$ over its tape alphabet in at most $N=2^{p(|\boldsymbol{x}|)}$ steps, for some polynomial $p$. Given such an $\M$ and an input $\boldsymbol{x}$, our aim is to define an $\LTL_\horn\Xallop$ ontology $\TO$ and an example set $E=(E^+=\{\Abox_1^+,\Abox_2^+\},E^-=\{\Abox^-\})$ of size polynomial in $\M$ and $\boldsymbol{x}$ such that $E$ is separated by a $\Qp[\U]$-query under $\TO$ iff $\M$ accepts $\boldsymbol{x}$.

Suppose $\M$ has a set $Q$ of states, tape alphabet $\Sigma$ with $\B$ for blank, initial state $q_0$, and accepting state $\qa$.
Without loss of generality we assume that
$\M$ erases the tape before accepting and its head is at the left-most cell in any accepting configuration.

%
%
Given an input word $\boldsymbol{x}=x_1\dots x_n$ over $\Sigma$, we represent configurations $\conf$ of a computation of $\M$ on $\boldsymbol{x}$ by
the $(N-1)$-long word written on the tape (with sufficiently many blanks at the end), in which the symbol $y$ in the active cell is replaced by the pair $(q,y)$ with the current state $q$.
An accepting computation of $\M$ on $\boldsymbol{x}$ is encoded by the word
$w=\sharp \conf_1 \, \sharp \, \conf_2 \, \sharp\,  \dots \, \sharp \, \conf_{N-1} \, \sharp \, \conf_{N} $ over
the alphabet $\Xi=\Sigma\cup(Q\times\Sigma)\cup\{\sharp\}$,
where $\conf_1,\conf_2,\dots,\conf_N$ are the subsequent configurations in the computation. In particular, $\conf_1$ is the initial configuration $(q_0,x_1)x_2\dots x_n\B\dots\B$, and $\conf_N$ is the accepting configuration $\conf_{acc}=(\qa,\B)\B\dots\B$. Thus, any accepting computation is encoded by a word of length $N^2$ in the alphabet $\Xi$ (we allow $\conf_{acc}$ to follow $\conf_{acc}$).

A tuple $\lt = (a,b,c,d,e,f)\in (\Xi)^6$ is called  \emph{legal}~\cite[Theorem 7.37]{DBLP:books/daglib/0086373} if there exist two consecutive configurations $\conf_1$ and $\conf_2$ of $\M$ and a number $i$ such that
$$
abcdef = \conf_1[i]\conf_1[i+1]\conf_1[i+2]\conf_2[i]\conf_2[i+1]\conf_2[i+2],
$$
where $\conf_j[i]$ is the $i$th symbol in $\conf_j$.
Let $\leg \subseteq (\Xi)^6$ be the set of all legal tuples (plus a few additional 6-tuples to take care of $\sharp$) with the following property: a word $w$ encodes an accepting computation iff it starts with the initial configuration preceded by $\sharp$, ends with the accepting configuration, and
every two length 3 subwords at distance $N$ apart form a legal tuple. Let $\bar \leg = (\Xi)^6\setminus \leg$.


For any $k>0$, by a \emph{$k$-counter} we mean a set $\mathbb A=\{A^i_j \mid i=0,1, \ j=1,\dots,k\}$  of atomic concepts that will be used to store values between $0$ and $2^k-1$, which can be different at different time points. The counter $\mathbb A$ is \emph{well-defined} at a time point $n \in \mathbb N$ in an interpretation $\I$ if $\I,n \models A^0_j \land A^1_j \to \bot$ and $\I,n \models A^0_j \lor A^1_j$, for any $j=1,\dots,k$. In this case, the \emph{value of} $\mathbb A$ at $n$ in $\I$ is given by the unique binary number $b_{k} \dots b_1$ for which $\I,n \models A^{b_1}_1\wedge\dots\wedge A^{b_k}_k$.
We require the following formulas, for $c = b_{k} \dots b_1$ (provided that $\mathbb A$ is well-defined):
\begin{itemize}
\item $[\mathbb A={c}] = A^{b_1}_1\wedge\dots\wedge A^{b_k}_k$, for which $\I,n \models [\mathbb A={c}]$ iff the value of $\mathbb A$ is $c$;

\item $[\mathbb A{<c}] = \bigvee_{\substack{k\geq i\geq 1\\ b_i=1}}\big(A_i^0\wedge\bigwedge_{j=i+1}^kA_j^{b_j}\big)$ with $\I,n \models [\mathbb A < {c}]$ iff the value of $\mathbb A$ is $<c$;

\item $[\mathbb A{>c}] = \bigvee_{\substack{k\geq i\geq 1\\ b_i=0}}\big(A_i^1\wedge\bigwedge_{j=i+1}^kA_j^{b_j}\big)$ with $\I,n \models [\mathbb A > {c}]$ iff the value of $\mathbb A$ is $>c$.
\end{itemize}
%
%
%
We regard the set $(\Rnext \mathbb A)=\{\Rnext A^i_j \mid i=0,1, \ j=1,\dots,k\}$ as another counter that stores at $n$ in $\I$ the value stored by $\mathbb A$ at $n+1$ in $\I$.
Thus, we can use formulas like $[\mathbb A >c_1]\to[(\Rnext \mathbb A)={c_2}]$, which says that if the value of $\mathbb A$ at $n$ in $\I$ is greater than $c_1$, then the  value of $\mathbb A$ at $n+1$ in $\I$ is $c_2$. Also, for $l \le k$, we can use formulas like $[\mathbb A=i \,  (\text{mod}\, 2^l)]$ with self-explaining meaning. Another important formula we need is defined by:
%
\begin{multline*}
 [\mathbb A={\mathbb B+1}] = \bigwedge_{i=1}^k\big(( B_i^0\wedge B_{i-1}^1\wedge\dots\wedge B_{1}^1\to A_i^1\wedge A_{i-1}^0\wedge\dots\wedge A_{1}^0)
\wedge \bigwedge_{j<i}((B_i^0\wedge B_{j}^0\to A_i^0)\wedge(B_i^1\wedge B_{j}^0\to A_i^1))\big).
\end{multline*}
It says that the value of $\mathbb A$ is one greater than the value of $\mathbb B$.

To define $\TO$ and $E = (E^+, E^-)$ for given $\M$ and $\boldsymbol{x}=x_1\dots x_n$, we assume that $\Xi=\{a_1,\dots,a_{2^m}\}$
 and $k=6m+2\lceil \log N\rceil+1$.
We use the following atomic concepts in $\TO$ and $E$: the symbols in $\Xi$, the atoms $C$,
$S$, $N$, $T$, and those atoms that are needed in $k$-counters $\mathbb S$, $\mathbb N$, $\mathbb T$.

We set $\Abox_1^+=\{T(0)\}$, $\Abox_2^+=\{S(0)\}$, and $\Abox^-=\{N(0)\}$.

The following axioms initialise the corresponding $m$-counters:
$$
T\to [(\nxt \mathbb T)=0]\quad S\to [(\nxt \mathbb S)=0]\quad N\to [(\nxt \mathbb N)=0].
$$
These and all other axioms of $\TO$ can be easily transformed to equivalent sets of polynomially-many $\LTL_\horn\Xallop$ axioms.

The behaviour of each counter is specified by the axioms below whose meaning is illustrated by the structure of the canonical model of the corresponding example restricted to $\Xi\cup\{C\}$.

The $T$-axioms
\begin{align*}
&[\mathbb T<N^2]\to [\nxt \mathbb T=\mathbb T+1],\\
& [\mathbb T=0]\to \sharp, \quad [\mathbb T=1]\to (q_1,x_1),\\
&[\mathbb T=2]\to x_2, \dots , [\mathbb T=n]\to x_n,\\
&[\mathbb T>n]\land[\mathbb T<N]\to \B, \\
&[\mathbb T>N]\land[\mathbb T<N^2-N]\to \Xi,\\
&[\mathbb T=N^2-N]\to \sharp, \\
& [\mathbb T=N^2-N+1]\to (\qa,\B),\\
& [\mathbb T>N^2-N+1]\to \B
\end{align*}
together with the data instance $\Abox_1^+$ give rise to the canonical model of the form
%
$$\emptyset, \sharp,(q_1,x_1),x_2,\dots,x_n,\B^{N-n-1},\sharp,\Xi^{N^2-2N-1},\sharp,(\qa,\B),\B^{N-2},\dots$$
%
%
%
%
The $S$-axioms
\begin{align*}
&[\mathbb S<(2|\Xi|+1)N^2]\to[\nxt \mathbb S=\mathbb S+1],\\
&[\mathbb S>N^2]\to C,\\
&[\mathbb S> N^2\land\mathbb S=2i \ (\text{mod}\ 2^{m+1})]\to a_{i+1}, \text{ for all $i\in[1,2^m]$}
\end{align*}
and $\Abox_2^+$ generate the canonical model
%
%
$$
\emptyset, \emptyset^{N^2},\emptyset,(a_1C,C,\dots,a_{2^k}C,C)^{N^2},\emptyset,\emptyset,\dots
$$
Let $\bar\leg=\{\lt_1=(a_1,b_1,c_1,d_1,e_1,f_1),\ldots,\lt_l\}$. Let $t=N^2-N-3$. The $N$-axioms comprise the following, for each $i\in[1,l]$:
\begin{align*}
&[\mathbb N < (2l+3)N^2]\to [\nxt \mathbb N = \mathbb N +1],\\
&[0<\mathbb N<N^2]\to \Xi,\\
&[N^2<\mathbb N<3N^2]\to C,\\
&[N^2<\mathbb N<3N^2]\land N_0^1\to \Xi,\\
&[(2i+1)N^2<\mathbb N < (2i+1)N^2+t+1]\to \Xi, \\
&[\mathbb N =(2i+1)N^2+t+1]\to a_i,\\
&[\mathbb N =(2i+1)N^2+t+2]\to b_i,\\
&[\mathbb N =(2i+1)N^2+t+3]\to c_i,\\
&[(2i+2)N^2-N<\mathbb N <(2i+2)N^2-2]\to \Xi,\\
&[\mathbb N =(2i+2)N^2-2]\to d_i,\\
&[\mathbb N =(2i+2)N^2-1]\to e_i,\\
&[\mathbb N =(2i+2)N^2]\to f_i,\\
&[(2i+2)N^2<\mathbb N <(2i+2)N^2+t+1]\to \Xi.
\end{align*}
The data instance $\Abox^-$ gives the canonical model
$$\emptyset,\emptyset,\Xi^{N^2-1}, \emptyset,(\Xi C,C)^{N^2-2},\Xi C,\emptyset,\Abox_{\lt_1},\emptyset^{N+2},\Abox_{\lt_2}\ldots,\Abox_{\lt_l},\emptyset,\dots.$$
where $\Abox_{\lt_i}=\emptyset,\Xi^{t},a_i,b_i,c_i,\Xi^{N-3},d_i,e_i,f_i,\Xi^{t}$.

We denote the set of the axioms above by $\TO$ and show that $E$ is separated by a $\Qp[\U]$-query $\varkappa$ under $\TO$ iff $\M$ accepts $\boldsymbol{x}$.

\smallskip

($\Leftarrow$) Suppose $\rho_1\dots\rho_{N^2}$ encodes an accepting computation of $\M$ on $\boldsymbol{x}$. Consider the $\Qp[\U]$-query
$$
\varkappa =\Diamond(\rho_1\land C\U(\rho_2 \land C \U (\dots(\rho_{N^2-1}\land (C\U \rho_{N^2}))\dots))).
$$
It is not hard to show by inspecting the respective canonical models described above that
$$
 \TO,\Abox_1^+ \models\varkappa(0), \quad \TO,\Abox_2^+ \models\varkappa(0), \quad \TO,\Abox_1^- \not\models\varkappa(0).
$$
To prove the last one, we first notice that $\emptyset,\Xi^{N^2-1}\not\models\varkappa(0)$, and $\emptyset,(\Xi C,C)^{N^2-2},\Xi C\not\models\varkappa(0)$.
We have $\Abox_\lt \not\models C(j)$ for all $\lt$ and $j$. So, if $\Abox_\lt\models\varkappa(0)$ for some $\lt$, then there is $i< t$ such that $\emptyset,\Abox_\lt \models \rho_{j}(i+j)$ for all $j\in [1,N^2]$. But then $\rho_{t-i}\rho_{t-i+1}\rho_{t-i+2}\rho_{t+N-i}\rho_{t+N-i+1}\rho_{t+N-i+2}=\lt\in\bar\leg$, which is a contradiction. So we have $\Abox_\lt\not\models\varkappa(0)$ for all $\lt\in\bar\leg$, and therefore $\TO,\Abox^- \not\models\varkappa(0)$.

\smallskip

($\Rightarrow$) Suppose the query
$$
\varkappa = \lambda_1\U(\rho_1\land \lambda_2\U(\rho_2  \dots(\rho_{K-1}\land (\lambda_{K}\U \rho_{K}))\dots))
$$
with $\rho_K\ne\top$ separates $E$ under $\TO$. Since $\TO,\Abox_1^+\models\varkappa(0)$, we have $K\le N^2$ and $\rho_i\subseteq\Xi$ for all $i$.
Since $\emptyset,\emptyset,\Xi^{N^2-1}\not\models\varkappa(0)$ we have $\rho_1\ne\emptyset$. Since $\TO,\Abox_2^+\models\varkappa(0)$ we have $\lambda_1=\top$. Now if $K<N^2$, then $\emptyset,\emptyset,\Xi^{N^2-1}\models\varkappa(0)$, so $K=N^2$.

Since $\TO,\Abox_2^+ \models\varkappa(0)$, we have $|\rho_i|\le 1$ for all $i$.
Let $y_1<\ldots <y_{N^2}$ be such that $\TO,\Abox_2^+\models \rho_j(y_j)$ and $\TO,\Abox_2^+\models \lambda(i)$ for all $j\in[1,N^2]$ and $i\in(y_j,y_{j+1})$. We see that if $y_j$ is odd, then $\rho_j=\emptyset$ and if $y_j$ is even we can assume that $\rho_j=a\in\Xi$ where $\TO,\Abox_2^+\models a(y_j)$. Let construct $z_j$ in the following way: $z_1=N^2+2$ and if we already have $z_j$, then $z_{j+1}$ is the smallest number bigger than $z_j$ with the same parity as $y_{j+1}$. We can see that, for all $j<N^2$, we have $\TO,\Abox^-\models\rho_j(z_j)$ with $\TO,\Abox^-\models\lambda_j(y)$ for all $y\in(z_j,z_{j+1})$ and if there is an odd $y_j$, then $z_{N^2}<3N^2-1$, $\TO,\Abox^-\models \rho_{N^2}(z_{N^2})$, and therefore $\TO,\Abox^-\models\varkappa(0)$ which cannot happen. So there are no odd $y_j$'s and $|\rho_i|=1$ for all $i$. 

%
%

In view of $\TO,\Abox_1^+ \models\varkappa(0)$, the word $\rho_1\ldots\rho_{N^2}$ starts with the starting configuration preceded by $\sharp$ and ends with the accepting one. Suppose there is some $i$ such that $(\rho_i,\rho_{i+1},\rho_{i+2},\rho_{N+i},\rho_{N+i+1},\rho_{N+i+2})=\lt\in \bar\leg$. Let $y_j=t-i+j$ for $j\in[1,N^2]$. We have $\Abox_\lt \models \rho_j(y_j)$, and so $\Abox_\lt \models \varkappa(0)$, and therefore $\TO,\Abox^- \models \varkappa(0)$. So every two length 3 subwords at distance $N$ apart form a legal tuple and $\rho_1\ldots\rho_{N^2}$ encodes a successful computation of $\M$ on $\boldsymbol{x}$.
\end{proof}

\bigskip
\noindent
{\bf Theorem~\ref{thm:datahorn}.}
{\em For data complexity, the results of Theorem~\ref{thm:qbewithout} continue to hold for queries mediated by an $\LTL_{\smash{\horn}}\Xallop$-ontology.
}

\bigskip
\noindent
We first consider $\nxt\Diamond$-queries, and then come to $\U$-queries.

\medskip
\noindent
$\nxt\Diamond$-queries. The \NP{}-lower bounds are inherited from the ontology-free case. For the \NP{}-upper bounds observe that by Lemma~\ref{lem:hornsize} and since $\TO$ is fixed, we always have a
separating query of polynomial size whenever a separating query exists. The
\NP{}-upper bounds follow in the standard way. We now come to the  \PTime{}-upper bounds. We prove the \PTime{}-upper bound for $\QBEba(\LTL_\horn\Xallop,\mathcal{Q}_{p}[\nxt,\Diamond])$ by modifying the dynamic programming algorithm we gave in the ontology-free case. The \PTime{}-upper bound for $\QBEba(\LTL_\horn\Xallop,\mathcal{Q}_{p}[\Diamond])$ is obtained by dropping $\nxt$ from the proof. The \PTime{}-upper bound for QBE$(\LTL_\horn\Xallop,\mathcal{Q}[\Diamond])$ with a bounded number of positive examples can be again proved in two steps:
(1) by Theorem~\ref{th:data-reduction} $(i.1)$ it suffices to prove the \PTime{}-upper bound for $\QBEba(\LTL_\horn\Xallop,\mathcal{Q}[\Diamond])$;
(2) by Theorem~\ref{th:data-reduction} $(i.3)$, $\QBEba(\LTL_\horn\Xallop,\mathcal{Q}[\Diamond])\leq_{p} \QBEba(\LTL_\horn\Xallop,\mathcal{Q}_{p}[\Diamond])$.
Finally, the \PTime{}-upper bound for QBE$(\LTL_\horn\Xallop,\mathcal{Q}[\nxt,\Diamond])$ with a bounded number of positive examples follows from the \PTime{}-upper bound for QBE$(\LTL_\horn\Xallop,\mathcal{Q}[\Diamond])$ with a bounded number of positive examples
using the same `trick' as in the proof of Theorem~\ref{th:data-reduction} $(ii.2)$ in the ontology-free case: we modify the models $\mathcal{C}_{\TO,\mathcal{D}}$ by adding fresh atoms $A_{i}$ encoding  $\nxt^{i}A$
and interpreting them in $\mathcal{C}_{\TO,\mathcal{D}}$ in the same way as as $\nxt^{i}A$. By Lemma~\ref{lem:hornsize} it suffices to do this for $i\leq k+m$ (which is polynomial in $E$ as $|\TO|$ is fixed).

To prove the \PTime{}-upper bound for $\mathcal{Q}_{p}[\nxt,\Diamond]$, we extend the notion of a satisfying assignment for a query $\varkappa$ in a  data instance to a satisfying assignment in the canonical model $\mathcal{C}_{\TO,\mathcal{D}}$ in the obvious way: suppose $\varkappa$ takes the form~\eqref{dnpath} with $\rho_{n}\ne \top$.
Then $\mathcal{C}_{\TO,\mathcal{D}},0\models \varkappa$ iff there is a \emph{satisfying assignment $f$ for $\varkappa$ in $\mathcal{C}_{\TO,\mathcal{D}}$} in the sense that $f$ is a strictly monotone map $f \colon [0,n] \rightarrow \mathbb{N}$ with $f(0)=0$, $f(i+1)=f(i)+1$
if $\op_{i}=\nxt$, and $\rho_{i}\subseteq \tp(f(i)) = \{A \mid \mathcal{C}_{\TO,\mathcal{D}},f(i) \models A\}$, for all $i\leq n$.
We first observe the following lemma (using the notation and numbers $k,m$ introduced for Lemma~\ref{lem:hornsize}):
\begin{lemma}
$(E^{+},E^{-})$ is $\mathcal{Q}_{p}[\nxt,\Diamond]$-separable under $\TO$ iff there exists $\varkappa$ of the form \eqref{dnpath} with $\rho_{n}\not=\top$, of $\Diamond$-depth $\leq k+l$ and $\nxt$-depth $\leq k+m$ such that
\begin{enumerate}
	\item for any $\mathcal{D}\in E^{+}$ there is a satisfying assignment for $\varkappa$ into $\mathcal{C}_{\TO,\mathcal{D}}$ with $f(n)\leq N:=k+(k+l+1)(k+m)$;
	\item for any $\mathcal{D}\in E^{-}$ there is no satisfying assignment for $\varkappa$ into $\mathcal{C}_{\TO,\mathcal{D}}$ with $f(n)\leq N$.
	\end{enumerate}
\end{lemma}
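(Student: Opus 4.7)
The plan is to prove both directions via a single \emph{compression lemma} derived from Proposition~\ref{prop:period}: for any $\varkappa$ of $\Diamond$-depth $\le k+l$ and $\nxt$-depth $\le k+m$ and any $\mathcal{D}$ consistent with $\TO$, if $\mathcal{C}_{\TO,\mathcal{D}}\models\varkappa(0)$, then there already exists a satisfying assignment $f$ for $\varkappa$ in $\mathcal{C}_{\TO,\mathcal{D}}$ with $f(n)\le N$. Granting this, the $(\Leftarrow)$ direction is routine: condition~(1) yields $\mathcal{C}_{\TO,\mathcal{D}}\models\varkappa(0)$, hence $\TO,\mathcal{D}\models\varkappa(0)$, on each positive $\mathcal{D}$; and for a negative $\mathcal{D}$, any satisfying assignment would, by the compression lemma, produce one with $f(n)\le N$, contradicting~(2). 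For $(\Rightarrow)$, Lemma~\ref{lem:hornsize}(iii) provides a separator $\varkappa$ of the required depth; condition~(2) is automatic since a separator admits no satisfying assignment at all on negative examples, and condition~(1) again reduces to the compression lemma applied to a satisfying assignment witnessing $\TO,\mathcal{D}\models\varkappa(0)$ for each positive $\mathcal{D}$.

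For the compression lemma itself, the essential point is that $m$ is a common multiple of all periods $p_{\TO,\mathcal{D}}$, so for $n\ge k$ the type $\tp_{\TO,\mathcal{D}}(n)$ depends only on $n\bmod m$. Decompose $\varkappa$ into at most $k+l+1$ maximal $\nxt$-blocks separated by $\Diamond$-jumps; each block has length at most $k+m$ and fixes, once its start is chosen, a window of consecutive positions on which the prescribed atoms must hold. Process the blocks left to right: at each $\Diamond$-jump, pick the smallest admissible starting position for the next block. If the previous block ends below $k$, then any admissible start---including the first one in the periodic region---lies within $k+m$ positions ahead, since any type sequence occurring past $k$ reoccurs within every window of length $m$; if it ends at or beyond $k$, periodicity guarantees an admissible start within the next $m$ positions. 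Because once the assignment reaches $k$ the remainder of $\varkappa$ can be replayed after shifting by any valid multiple of $m$ without disturbing types, the greedy choice never makes a later block unsatisfiable. Summing over the at most $k+l+1$ blocks and the at most $k+l$ intervening $\Diamond$-jumps, one obtains $f(n)\le k+(k+l+1)(k+m)=N$.

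The main obstacle is precisely the interaction between blocks: locally minimising one starting position must not push a subsequent $\Diamond$-jump out of range or invalidate the type sequence of a later block. The periodicity of $\mathcal{C}_{\TO,\mathcal{D}}$ past position $k$ with period dividing $m$ is what resolves this, since it lets us replay the tail of any valid assignment after shifting by multiples of $m$. A careful induction on the blocks, tracking whether the current head of the assignment has entered the periodic region, completes the argument and yields the stated bound $N$.
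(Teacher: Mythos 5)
Your proof follows essentially the same route as the paper's: the forward direction invokes Lemma~\ref{lem:hornsize}(iii), and the backward direction rests on the observation that, by the periodicity of $\mathcal{C}_{\TO,\mathcal{D}}$ from Proposition~\ref{prop:period}, any satisfying assignment can be compressed so that every $\Diamond$-jump is short (the paper asserts this in one line, which you expand into your ``compression lemma''), with Point~2 on negative examples then being automatic for a genuine separator. The only caveat is that your closing summation, much like the paper's ``Point~1 follows directly'', glosses over the exact arithmetic behind the constant $N$; but the decomposition into $\nxt$-blocks, the greedy shift by multiples of $m$ in the periodic region, and the overall structure of both directions coincide with the paper's argument.
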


\begin{proof}
 First assume that $(E^{+},E^{-})$ is $\mathcal{Q}_{p}[\nxt,\Diamond]$-separable under $\TO$. By Lemma~\ref{lem:hornsize} there is a query $\varkappa$ of the form \eqref{dnpath} with $\rho_{n}\not=\top$, of $\Diamond$-depth $\leq k+l$ and $\nxt$-depth $\leq k+m$ that separates $(E^{+},E^{-})$ under $\TO$.
 Take any satisfying assignment $f$ for $\varkappa$ in $\mathcal{C}_{\TO,\mathcal{D}}$. Clearly then we can assume that $f(i+1)-f(i) \leq k+m$ for any $i$ with $\op_{i}=\Diamond$. Point~1 follows directly.
 Point~2 follows from $\TO,\mathcal{D}^{-}\not\models \varkappa(0)$ for $\mathcal{D}^{-}\in E^{-}$.

 Conversely, assume that there exists $\varkappa$ of the form \eqref{dnpath} with $\rho_{n}\not=\top$, of $\Diamond$-depth $\leq k+l$ and $\nxt$-depth $\leq k+m$ such that Points~1 and 2 hold. We show that $\varkappa$ separates $(E^{+},E^{-})$ under $\TO$. But $\TO,\mathcal{D}^{+}\models \varkappa(0)$ for $\mathcal{D}^{+}\in E^{+}$ follows from Point~1 and $\TO,\mathcal{D}^{-}\not\models \varkappa(0)$ for $\mathcal{D}^{-}\in E^{-}$
 follows from Point~2 using the same argument as in the proof of Point~1
 in the converse direction.
\end{proof}
We explain the modifications of the dynamic programming algorithm for $\QBEba(\mathcal{Q}_{p}[\nxt,\Diamond])$ for   $E^{+}=\{\mathcal{D}_{1}^{+},\mathcal{D}_{2}^{+}\}$ and
$E^{-}=\{\mathcal{D}_{1}^{-},\mathcal{D}_{2}^{-}\}$.

We modify the parameters stored in the tuples in the set $S_{i,j}$ slightly. Instead of the length of the query a tuple describes, we store its $\Diamond$-depth $K$ and its $\nxt$-depth $M$. Thus, let $S_{i,j}$ be the set of tuples $(K,M,\ell_1,\ell_2,n_{1},n_{2})$ such that
\begin{enumerate}
	\item $K \leq k+l$;
	\item $M \leq k+m$;
	\item $\ell_{1} \leq i\leq N$,
	\item $\ell_{2} \leq j\leq N$,
\end{enumerate}
and there is
$
\varkappa=\rho_0 \land \op_1 (\rho_1 \land \dots \land \op_k \rho_k)
$
of $\Diamond$-depth $K$ and $\nxt$-depth $M$ for which
\begin{enumerate}
	\item there are satisfying assignments $f_{1},f_{2}$ in $\mathcal{C}_{\TO,\mathcal{D}_{1}^{+}}$ and $\mathcal{C}_{\TO,\mathcal{D}_{2}^{+}}$ with $f_{1}(1+K+M)=\ell_{1}$ and $f_{2}(1+K+M)=\ell_{2}$, respectively, and
	\item $n_{1}$ is minimal with a satisfying assignment $f$ for $\varkappa$ in $\mathcal{C}_{\TO,\mathcal{D}_{1}^{-}}$ such that $f(1+K+M)=n_{1}\leq N$, and $n_{1}=\infty$ if there is no such $f$; and $n_{2}$ is minimal with a satisfying assignment $f$ for $\varkappa$ in $\mathcal{C}_{\TO,\mathcal{D}_{2}^{-}}$ such that $f(k)=n_{2}\leq N$, and $n_{2}=\infty$ if there is no such $f$.
\end{enumerate}
It suffices to compute $S_{N,N}$ in polynomial time because there exists a query
in $\mathcal{Q}_{p}[\nxt,\Diamond]$ separating $(E^{+},E^{-})$ iff there are
$K\leq k+\ell$, $M\leq k+m$, $\ell_{1}\leq N$, and $\ell_{2}\leq N$ such that $(K,M,\ell_{1},\ell_{2},\infty,\infty)\in S_{N,N}$. $S_{i,j}$ with $i\leq N$ and $j\leq N$ can be computed in essentially the same way as in the ontology-free case incrementally starting with $S_{0,0}$.

\medskip
\noindent
The bounds for $\U$-queries were explained in Section~\ref{appC} after Lemma~\ref{th:horn+qus-repr}.


\section{Proofs for Section~\ref{sec:boxdiamond}}
\bigskip
\noindent
\textbf{Theorem~\ref{thm:dim}.}
{\em Let $\mathcal{Q} \in \{\mathcal{Q}_{p}[\Diamond],\mathcal{Q}[\Diamond]\}$. If $E$ is $\mathcal{Q}$-separable under an $\LTL\Xbd$-ontology $\TO$, then $E$ can be separated under $\TO$ by a $\mathcal{Q}$-query of polysize in $E$ and $\TO$. $\mathsf{QBE}(\LTL\Xbd,\mathcal{Q})$ and $\QBEba(\LTL\Xbd,\mathcal{Q})$ are $\Sigma_{2}^{p}$-complete for combined complexity. The presence of $\LTL\Xbd$-ontologies has no effect on the data complexity, which remains the same as in Theorem~\ref{thm:qbewithout}.
}
\begin{proof}
	We start by giving a few more details of the $\Sigma_{2}^{p}$-lower bound proof.
	Recall that we reduce the validity problem for
	fully quantified Boolean formulas of the form
	$$
	\exists \avec{p}\, \forall \avec{q}\,\psi,
	$$
	where $\psi$ is a propositional formula, and $\avec{p}=p_{1},\dots,p_{k}$ and
	$\avec{q}= q_{1},\dots,q_{m}$ are lists of propositional variables.
	We assume w.l.o.g. that $\psi$ is not a tautology. We also assume that
	$\lnot\psi\not\models x$ for $x\in\{p_i, \lnot p_i, q_j, \lnot q_j \mid
	1\leq i\leq k, 1\leq j \leq m\}$. Indeed, if $\lnot\psi\models x$ then
	$\psi\equiv \lnot x\lor \psi'$, for some $\psi'$,  and when $x\in\{p_i, \lnot
	p_i\}$ the QBF formula $\exists \avec{p}\, \forall \avec{q}\,\psi$ is vacuously
	valid whereas when $x\in\{q_j\lnot q_j\}$ the QBF formula $\exists \avec{p}\,
	\forall \avec{q}\,\psi$ is valid  iff $\exists \avec{p}\, \forall
	\avec{q'}\,\psi'$ is, where $\avec{q'}$ is obtained from $\avec{q}$ by
	removing $q_j$.
	%
	We regard propositional variables as atoms and also use fresh atoms
	$A_{1},\dots,A_{k}$, $\bar A_{1},\dots,\bar{A}_{k}$ and $B$.
	
	Let $E = (E^+,E^-)$ with $E^+ = \{\Abox_1, \Abox_2\}$, $E^- = \{\Abox_3\}$, where
	$$
	\mathcal{D}_{1}=\{B_1(0)\}, \quad \mathcal{D}_{2}=\{B_2(0)\}, \quad \mathcal{D}_{3}=\{q_{1}(0),q_{2}(0),\ldots,q_{m}(0)\},
	$$
	and let $\mathcal{O}$ contain (the normal forms of) the following axioms, for all $i=1,\ldots,k$:
	\begin{align}\label{axioms0}
		& B_1 \rightarrow \neg\psi,\quad
		B_2 \rightarrow \neg\psi,\\
		\label{axioms1}
		& p_{i} \rightarrow \Diamond \big(\bar{A}_{i} \wedge \bigwedge_{j\not=i} (A_{j} \wedge \bar{A}_{j})\big), \qquad
		\neg p_{i} \rightarrow \Diamond \big(A_{i} \wedge \bigwedge_{j\not=i} (A_{j} \wedge \bar{A}_{j})\big),
		%
	\end{align}
	We show that $\exists \avec{p}\, \forall \avec{q}\, \psi$ is valid iff
	$E$ is $\Qp[\Diamond]$-separable under $\mathcal{O}$.
	
	$(\Rightarrow)$ Suppose $\exists \avec{p}\, \forall \avec{q}\, \psi$ is valid.
	Take an assignment $\mathfrak a$ for the variables $\avec{p}$  such that
	under all assignments $\mathfrak b$ for the variables $\avec{q}$ formula $\psi$ is true.
	Let $C$ be the conjunction of all $A_{i}$ with $\mathfrak a(p_{i})=1$ and all
	$\bar{A}_{i}$ with $\mathfrak a(p_{i})=0$, and let $\varkappa = \Diamond C$. We
	show that $\varkappa$ separates $E$.
	Define an interpretation $\mathcal{J}$ by taking
	\begin{itemize}
		\item $\J,0\models p_{i}$ iff $\mathfrak a(p_{i})=1$, for $i = 1,\dots,k$ and $\J,0\models q_j$, for $j = 1,\dots,m$;
		
		\item if $\J,0 \models p_{i}$, then $\J,i \models \bar{A}_{i} \wedge \bigwedge_{j\not=i} (A_{j} \wedge \bar{A}_{j})$;
		
		\item if $\J,0 \not\models p_{i}$, then $\J,i \models A_{i} \wedge \bigwedge_{j\not=i} (A_{j} \wedge \bar{A}_{j})$.
	\end{itemize}
	By the definition, $\mathcal{J}$ is a model of $\mathcal{O}$ and
	$\mathcal{D}_{3}$ with $\Jmc,0 \not\models \varkappa$. On the other hand, let
	$\mathcal{I}$ be a model of $\mathcal{O}$ and some $\mathcal{D}_l$, $l=1,2$.
	By~\eqref{axioms0}, $\I,0\not\models  \psi$. Then the truth values of the
	$p_{i}$ in $\I$ at $0$ cannot reflect the truth values of the $p_{i}$ under
	$\mathfrak a$ (for otherwise $\psi$ would be true at $0$ in $\I$). Take some
	$i_{0}$ for which these truth values of $p_{i_0}$ differ, say $\mathfrak
	a(p_{i_{0}})=1$ but $\I,0\not\models p_{i_{0}}$. Then $\mathcal{I},0 \models
	\Diamond (A_{i_{0}} \wedge \bigwedge_{j\not=i_0} (A_{j} \wedge \bar{A}_{j}))$,
	and so $\mathcal{I},0\models \varkappa$.
	
	$(\Leftarrow)$ Suppose a $\Qp[\Diamond]$-query $\varkappa$ separates $E$ but
	$\exists \avec{p}\, \forall \avec{q}\, \psi$ is not valid.  From our conditions
	on $\psi$, it is easy to see by considering possible models of $\TO$ and
	$\Abox_l$, $l=1,2,3$, that $\varkappa$ does not contain occurrences of $B_1$,
	$B_2$, $p_i$, $q_j$, $1\leq i \leq k$, $1\leq j\leq m$.
	Let $\J$ be a model of $\TO$ and $\Abox_3$ such that $\J,0 \not\models
	\varkappa$.
	Let $\mathfrak a$ be the assignment for $\avec{p}$ given by $\J$ at $0$.  As
	$\exists \avec{p}\, \forall \avec{q}\, \psi$ is not valid, there is an
	assignment $\mathfrak b$ for $\avec{q}$ such that $\psi$ is false under
	$\mathfrak a$ and $\mathfrak b$.
	Consider an interpretation $\I$ such that $\I,0\models B_1$, the truth values
	of $\avec{p}$ and $\avec{q}$ at $0$ are given by $\mathfrak a$ and $\mathfrak
	b$, and all other atoms are interpreted as in $\J$. Then $\I$ is a model of
	$\TO$ and $\Abox_1$, and so $\I,0 \models \varkappa$.  But then $\J\models
	\varkappa$, as $\varkappa$ can only contain atoms $A_i$ and $\bar A_i$, which
	is a contradiction showing that $\exists \avec{p}\, \forall \avec{q}\, \psi$ is
	valid.	
	
\medskip
	
	We now prove the results for data complexity. The \NP{}-lower bounds are inherited from the ontology-free case. We show the \NP{}-upper bound for
	QBE$(\LTL\Xbd,\mathcal{Q}_{p}[\Diamond])$ and the \PTime{}-upper bound
	for $\QBEba(\LTL\Xbd,\mathcal{Q}_{p}[\Diamond])$. The \PTime{}-upper bound for QBE$(\LTL\Xbd,\mathcal{Q}[\Diamond])$ with a bounded number of positive examples can be again proved in two steps:
	(1) by Theorem~\ref{th:data-reduction} $(i.1)$ it suffices to prove the \PTime{}-upper bound for $\QBEba(\LTL\Xbd,\mathcal{Q}[\Diamond])$;
	(2) by Theorem~\ref{th:data-reduction} $(i.3)$, $\QBEba(\LTL\Xbd,\mathcal{Q}[\Diamond])\leq_{p} \QBEba(\LTL\Xbd,\mathcal{Q}_{p}[\Diamond])$.
	
	Assume an $\LTL\Xbd$-ontology $\TO$ is given. We show that one can construct in \emph{polynomial} time for any data instance $\mathcal{D}$ a set $\mathcal{M}_{\TO,\mathcal{D}}$ of models of $\mathcal{D}$ whose types form a sequence
	\begin{equation}\label{periodA2}
		\!\tp_{0},\dots,\tp_{k_{0}},\tp_{k_{0}+1},\dots,\tp_{k_{0}+l},\dots,\tp_{k_{0}+1},\dots,\tp_{k_{0}+l},\dots
	\end{equation}
	with $\max \mathcal{D} \leq k_{0} \leq \max \mathcal{D}+|\TO|$ and $l\leq |\TO|$ such that for any $\varkappa\in \mathcal{Q}_{p}[\Diamond]$, $\mathcal{D},\TO\models \varkappa(0)$ iff $\I,0\models \varkappa$ for all $\I \in \mathcal{M}_{\TO,\mathcal{D}}$. Note that, in particular, every set $\mathcal{M}_{\TO,\mathcal{D}}$ is of polynomial size in $\mathcal{D}$. Interestingly, the models in $\mathcal{M}_{\TO,\mathcal{D}}$ are not necessarily  models of $\TO$ (unless $\TO$ is a Horn ontology). Then, to
	show the \NP{}-upper bound for
	QBE$(\LTL\Xbd,\mathcal{Q}_{p}[\Diamond])$ and the \PTime{}-upper bound
	for $\QBEba(\LTL\Xbd,\mathcal{Q}_{p}[\Diamond])$
	one constructs for any $\mathcal{D}\in E^{+}\cup E^{-}$ the set $\mathcal{M}_{\TO,\mathcal{D}}$ and then decides, using that polysize separating queries exist if separating queries exist at all,
    whether there exists $\varkappa\in \mathcal{Q}_{p}[\Diamond]$ such that
	\begin{itemize}
		\item for all $\mathcal{D}\in E^{+}$: $\I,0 \models \varkappa$, for all $\I \in \mathcal{M}_{\TO,\mathcal{D}}$;
		\item for all $\mathcal{D}\in E^{-}$: $\I,0 \not\models \varkappa$, for some $\I \in \mathcal{M}_{\TO,\mathcal{D}}$
	\end{itemize}
    in either \NP{} (by guessing the polysize query and then verifying it in polynomial time) or \PTime{} (by applying essentially the same dynamic programming algorithm as for $\QBEba(\mathcal{Q}_{p}[\Diamond])$.

	We come to the construction of $\mathcal{M}_{\TO,\mathcal{D}}$. Let $\mathcal{D}$ be a data instance. A type $\tp$ is \emph{consistent with $\mathcal{D}$ at $k$} if $A(k)\in \mathcal{D}$ implies $\neg A\not\in \tp$, for any atom $A$. We next define the notion of a decoration. Let $I_{0},\ldots,I_{n}$ be a partition
	of $\mathbb{N}$ into nonempty intervals $I_{0},\ldots,I_{n}$ with $I_{n}$ of the form $[m,\infty]$ for some $m$ with $\max \mathcal{D} < m \leq \max \mathcal{D} + |\TO|+1$ and $\max I_{k} +1 = \min I_{k+1}$ for all $k<n$.
	Let $f$ be a function that associates with each interval $k\leq n$ a nonempty set $f(k)$ of $\TO$-satisfiable types. Intuitively, the types in $f(k)$ are types that we aim to satisfy in the interval $I_{k}$. We then call $D=(I_{0},\ldots,I_{n},f)$ a \emph{pre-decoration} of $\mathcal{D}$.
	We say that a model $\I$ is \emph{consistent with $D=(I_{0},\ldots,I_{n},f)$}
	if it is defined by a sequence
	$$
	\tp_{0},\tp_{1},\ldots
	$$
	of types $\tp_{i}$ such that
	\begin{enumerate}
		\item if $i\in I_{k}$, then $\tp_{i} \in f(k)$ and $\tp_{i}$ is consistent with $\mathcal{D}$ at $i$, for all $i\geq 0$;
		\item each $\tp\in f(n)$ occurs infinitely often as $\tp_{i}$ in $\I$ for $i\geq m$.
	\end{enumerate}
	Then $D=(I_{0},\ldots,I_{n},f)$ is a \emph{decoration of $\mathcal{D}$ for $\TO$} if every model $\tp_{0},\tp_{1},\ldots$ that is consistent with $D$
	satisfies $\tp_{i}$ at timepoint $i$ (and this ia, in particular, a model of $\TO$). Note that models that are consistent with $D$ are trivially models of $\mathcal{D}$. Thus, any $D$ defines the set $\mathcal{M}_{D}$ of models that are consistent with $D$ and these are also always models of $\TO$ if $D$ is a decoration of $\mathcal{D}$ for $\TO$. $D=(I_{0},\ldots,I_{n},f)$ also defines a \emph{canonical model} $\I_{D}$ as follows: fix any ordering $\tp_{0},\ldots,\tp_{j-1}$ of $f(n)$ and assume $I_{n}=[m_{D},\infty]$. Then let $\I_{D}$ be defined by setting
	\begin{itemize}
		\item for $i\in I_{k}$ with $k<n$, $i\in A^{\I_{D}}$ if $A\in \tp$ for all $\tp\in f(k)$ that are consistent with $\mathcal{D}$ at $i$;
		\item for $i=m_{D}+j_{0}+kj$ with $j_{0}<j$, $i\in A^{\I_{D}}$ if $A\in \tp_{j_{0}}$.
	\end{itemize}
	Thus, for $i<m_{D}$, $\I_{D}$ is defined as the intersection of all models that are consistent with $D$ and for $i\geq m_{D}$ we repeat the pattern $\tp_{0},\ldots,\tp_{j-1}$ again and again. Note that $\I_{D}$ is of the form defined in \eqref{periodA2}. We show the following lemma connecting $\mathcal{M}_{D}$ and $\I_{D}$.
	\begin{lemma}\label{lem:eqcompl}
		For every $\varkappa\in \mathcal{Q}[\Diamond]$ and every $i<m_{D}$, we have $\mathcal{J},i\models \varkappa$ for all $\mathcal{J}\in \mathcal{M}_{D}$ iff $\mathcal{I}_{D},i\models \varkappa$.
	\end{lemma}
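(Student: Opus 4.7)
My plan is to argue by structural induction on $\varkappa$, supplemented by one preliminary observation about what happens past $m_D$. A straightforward induction shows that, for any $j \geq m_D$ and any $\mathcal{J} \in \mathcal{M}_D \cup \{\I_D\}$, whether $\mathcal{J}, j \models \varkappa$ depends only on $\tp_{\mathcal{J}}(j) \in f(n)$: atoms are determined by the type, and any $\Diamond$-subformula can be evaluated uniformly because every $\tp \in f(n)$ recurs cofinally past $m_D$ both in $\I_D$ (by the cyclic construction) and in every $\mathcal{J} \in \mathcal{M}_D$ (by condition~2). This couples the cyclic tails of $\I_D$ and of arbitrary $\mathcal{J}$'s and will be used in both directions.

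For direction $(\Leftarrow)$ I would induct on $\varkappa$. Atoms use that the extent of $\I_D$ at $i<m_D$ is the intersection of all $\tp \in f(k)$ consistent with $\mathcal{D}$ at $i$, each realised by some $\mathcal{J}\in\mathcal{M}_D$; conjunction is trivial. For $\varkappa = \Diamond \varkappa'$ I pick a witness $j>i$ with $\I_D, j \models \varkappa'$: if $j<m_D$ the IH hands $\varkappa'$ to every $\mathcal{J}$ at the same $j$, and if $j\geq m_D$ the preliminary observation combined with the cofinal recurrence of $\tp_{\I_D}(j)$ in each $\mathcal{J}$ yields a witness $j'>i$ in $\mathcal{J}$.

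Direction $(\Rightarrow)$ requires strengthening, because handling a $\Diamond$-subformula asks for a single $\mathcal{J}$ that refutes it at every future position simultaneously. I will prove the stronger claim: if $\I_D, j \not\models \varkappa$ for all $j > i$ with $i<m_D$, then some $\mathcal{J} \in \mathcal{M}_D$ satisfies $\mathcal{J}, j \not\models \varkappa$ for all $j > i$. For $j \geq m_D$ this is automatic, since the hypothesis forces the ``type predicate'' of the preliminary observation to refute $\varkappa$ on every $\tp \in f(n)$. The original $(\Rightarrow)$ then follows easily: the $\Diamond \varkappa'$ case is the strong claim applied to $\varkappa'$, while atoms and conjunctions are direct.

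The technical heart, and I expect the main obstacle, is the simultaneous refutation at the finitely many positions in $(i, m_D)$. I would tackle it by lexicographic induction on $(\dep(\varkappa),\, m_D - i)$. Writing $\varkappa = (\bigwedge_r A_r) \wedge \bigwedge_s \Diamond \psi_s$, the failure at $i+1$ in $\I_D$ is caused either by a missing $A_r$ or by a failing $\Diamond \psi_s$. In the atom subcase I pick $\tp \in f(k)$ consistent with $\mathcal{D}$ at $i+1$ with $A_r \notin \tp$, apply the IH for the shorter interval $(i+1, m_D)$ to obtain $\mathcal{J}''$, and splice $\mathcal{J}$ by setting $\tp_\mathcal{J}(i+1) = \tp$ and $\tp_\mathcal{J}(j) = \tp_{\mathcal{J}''}(j)$ for $j \neq i+1$: refutation at $j > i+1$ survives because $\mathcal{J}$ and $\mathcal{J}''$ agree from position $i+2$ onwards and the truth value of $\varkappa$ at $j$ depends only on the types from $j$ onwards. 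In the $\Diamond \psi_s$ subcase the IH on the strictly smaller-depth $\psi_s$ (with interval starting at $i+1$) yields a single $\mathcal{J}'$ refuting $\psi_s$ throughout $(i+1, \infty)$, and such a $\mathcal{J}'$ already refutes $\Diamond \psi_s$ and hence $\varkappa$ at every $j \geq i+1$. The points to verify carefully are that the spliced $\mathcal{J}$ genuinely lies in $\mathcal{M}_D$ (condition~1 at $i+1$, condition~2 past $m_D$ undisturbed) and that the forward-locality argument really covers every position in $(i+1, m_D)$.
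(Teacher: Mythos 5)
Your proof is correct, but it is organised quite differently from the paper's. The paper first replaces the tail of every $\mathcal{J}\in\mathcal{M}_D$ (the part over $I_n$) by the tail of $\I_D$, then reduces to path queries $\rho_0\land\Diamond(\rho_1\land\dots\land\Diamond\rho_\ell)$ and inducts on $\ell$; the hard direction is handled by a minimality argument -- take the least $i'\in(i,m_D)$ at which $\rho_1$ is contained in every type of $f(k)$ consistent with $\mathcal{D}$ (such an $i'$ must exist, else one mixes a refuting type at each intermediate position), and then show by another mix-and-match contradiction that the remainder of the query is also forced at $i'$, so the IH applies. You instead keep the models as they are and make explicit the fact the paper's tail-replacement step silently uses, namely that truth at positions $\geq m_D$ is determined by the type alone because every type of $f(n)$ recurs cofinally in $\I_D$ and in every $\mathcal{J}$; you then handle branching queries directly by structural induction rather than via the decomposition into path queries, and for the hard direction you prove the stronger uniform-refutation statement (a single $\mathcal{J}$ refuting $\varkappa$ at \emph{every} $j>i$) by lexicographic induction on $(\dep(\varkappa),m_D-i)$ with a one-position type splice. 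Both arguments ultimately rest on the same structural freedom -- that types at distinct positions $<m_D$ can be chosen independently within the $f(k)$ subject only to consistency with $\mathcal{D}$, which is what legitimises both your splice and the paper's mixed countermodels -- but your strengthened claim localises the combinatorial work differently and avoids the path-query reduction; the paper's minimality argument is shorter once that reduction is granted. One small point common to both proofs: the atom cases tacitly assume that every position admits at least one type consistent with $\mathcal{D}$ (i.e.\ $\mathcal{M}_D\neq\emptyset$), which holds for the decorations $D_\I$ actually used, so this is not a defect of your argument relative to the paper's.
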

	\begin{proof}
		Obtain $\mathcal{M}$ from $\mathcal{M}_{D}$ by replacing for each $\mathcal{J}\in \mathcal{M}_{D}$ the final part of $\J$ based on the interval $I_{n}$ by the final part of $\I_{D}$ based on $I_{n}$. Then clearly $\mathcal{J},0\models \varkappa$ for all $\mathcal{J}\in \mathcal{M}_{D}$ iff $\mathcal{J},0\models \varkappa$ for all $\mathcal{J}\in \mathcal{M}$. It is therefore sufficient to prove the claim for $\mathcal{M}$ instead of $\mathcal{M}_{D}$.
		
		The proof is by induction on $\ell$ for $\varkappa$ of the form
		$
		\rho_0 \land \Diamond (\rho_1 \land \Diamond (\rho_2 \land \dots \land \Diamond \rho_{\ell}) )$.
		For $\ell=0$ the claim follows from the definition.
		
		Assume that the claim has been proved for $\ell\geq 0$, $\varkappa = \rho_0 \land \Diamond (\rho_1 \land \Diamond (\rho_2 \land \dots \land \Diamond \rho_{\ell+1}) )$, and $\J,i\models \varkappa$ for some $i<m_{D}$ and all $\J\in \mathcal{M}$. We have to show that $\I_{D},i\models \varkappa$.
		
		If there exists $i\geq m_{D}$ such that $\J,i\models \rho_1 \land \Diamond (\rho_2 \land \dots \land \Diamond \rho_{\ell}) )$, then we are done.
		
		Otherwise, we show that there exists $i'$ with $i<i'<m_{D}$ such that $\J,i'\models \rho_1 \land \Diamond (\rho_2 \land \dots \land \Diamond \rho_{\ell}) )$ for all $\J\in \mathcal{M}$. Then the claim follows by IH.
		
		We first observe that there exists $i'$ with $i<i'<m_{D}$ such that for $i'\in I_{k}$ we have $\rho_{1} \subseteq \tp$ for all $\tp\in f(k)$ that are consistent with $\mathcal{D}$ at $i'$.
		
		For assume that this is not the case. Then construct a model $\J\in \mathcal{M}$ by choosing for every $j$ with $i<j<m_{D}$
		such that $j\in I_{k}$ a $\tp\in f(k)$ that is consistent with $\mathcal{D}$ at $j$ such that $\rho_{1} \not\subseteq \tp$. Define $\J$ using these $\tp_{j}$. Then $\J,i\not\models \varkappa$, a contradiction.
		
		Let $i'$ be minimal $i<i'<m_{D}$ such that for $i'\in I_{k}$ we have $\rho_{1} \subseteq \tp$ for all $\tp\in f(k)$ that are consistent with $\mathcal{D}$ at $i'$.
		
		We next show that $\J,i'\models \Diamond (\rho_2 \land \dots \land \Diamond \rho_{\ell}) )$ for all $\J\in \mathcal{M}$. Assume that this is not the case.
		Let $\J$ be a witness. Then we construct a new model $\J'\in \mathcal{M}$ by
		refuting $\rho_{1}$ between $i$ and $i'$ (possible by minimality of $i'$) and then adding $\J$ from $i'$. Then $\J',i\not\models \varkappa$, a contradiction.
		
		It follows that $\J,i'\models \rho_{1} \land \Diamond (\rho_2 \land \dots \land \Diamond \rho_{\ell}) )$ for all $\J\in \mathcal{M}$, as required.
	\end{proof}
	We next define the decorations we work with. Given any model $\I$ of $\mathcal{D}$ and $\TO$ of the form \eqref{periodA2},
	we obtain a decoration $D_{\I}=(I_{0},\ldots,I_{n},f)$ with $n\leq 2|\TO|+2$ as follows. Call a node \emph{$i$ maximal in $\I$ for $\TO$} if there exists $C$ with $\Box C \in \sub(\TO)$ such that $i\models \Box C \wedge \neg C$.
	
	Assume $I_{0},\ldots,I_{\ell}$ and $f(0),\ldots,f(\ell)$ have been defined
	already and $I_{\ell}$ is not of the form $[m,\infty]$ (if $I_{\ell}$ is of the form $[m,\infty]$ we are done).
	We next define $I_{\ell+1}$ (and possibly $I_{\ell+2}$).
	\begin{enumerate}
		\item If $\max I_{\ell}<k_{0}$, then we proceed as follows:
		if $\max I_{\ell}+1$ is either maximal for $\TO$ in $\I$ or $\max I_{\ell}+1=k_{0}$, then set $I_{\ell+1}=\{\max I_{\ell}+1\}$ and $f(\ell+1)=\{\tp_{\I}(\max I_{\ell}+1)\}$.
		
		Otherwise let
		$$
		k:= \min\{ k>\max I_{\ell} \mid \text{$k$ is maximal for $\TO$ in $\I$ or $k=k_{0}$}\}
		$$
		and set $I_{\ell+1}=[\max I_{\ell},k-1]$, $I_{\ell+2}=\{k\}$,
		$f(\ell+1) = \{\tp_{\I}(k) \mid k\in I_{\ell+1}\}$, and $f(\ell+2)=\{\tp_{\I}(k)\}$.
		\item Otherwise $\max I_{\ell}\geq k_{0}$. Then let $I_{\ell+1}=[k_{0},\infty]$ and $f(\ell+1)=\{\tp_{k_{0}+1},\ldots,\tp_{k_{0}+l}\}$.
	\end{enumerate}
	One can easily show that $D_{\I}=(I_{1},\ldots,I_{n},f)$ is indeed a decoration of $\mathcal{D}$ for $\TO$ and $n\leq 2|\TO|+2$. Note also that $\I$ itself is consistent with $D_{\I}$. The following lemma summarises our findings.
	\begin{lemma}
		For any $\mathcal{D}$ one can construct in polynomial time a set $\mathcal{F}_{\TO,\mathcal{D}}$ of decorations $D=(I_{0},\ldots,I_{n},f)$ of $\mathcal{D}$ for $\TO$ such that
		$n\leq 2|\TO|+2$ and the following are equivalent for any $\varkappa \in \mathcal{Q}[\Diamond]$:
		\begin{enumerate}
			\item $\TO,\mathcal{D}\models \varkappa(0)$;
			\item $\I,0 \models \varkappa$ for every $\I \in \mathcal{M}_{D}$ and $D\in\mathcal{F}_{\TO,\mathcal{D}}$;
			\item $\I_{D},0 \models \varkappa$ for every $D\in \mathcal{F}_{\TO,\mathcal{D}}$.
		\end{enumerate}
	\end{lemma}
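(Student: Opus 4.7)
The plan is to build $\mathcal{F}_{\TO,\mathcal{D}}$ by brute enumeration of candidate pre-decorations of bounded arity, to filter out those that fail the decoration property, and then to derive the three equivalences from Lemma~\ref{lem:eqcompl} together with the Ono--Nakamura periodicity result already used in the $\Sigma_2^p$-argument for Theorem~\ref{thm:dim}. Throughout I treat $|\TO|$ as a constant, so the set of $\TO$-types and its powerset are of constant size.

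First I would enumerate all tuples $D=(I_0,\ldots,I_n,f)$ with $n\leq 2|\TO|+2$, with interval boundaries drawn from $\{0,\ldots,\max\mathcal{D}+|\TO|+1\}$ (and $I_n=[m,\infty)$), and with each $f(k)$ a nonempty set of $\TO$-satisfiable types. Since $n+1$ is constant, this enumeration produces polynomially many candidates in $|\mathcal{D}|$. Each candidate $D$ is then subjected to a constant-time check that it is a decoration, i.e.\ that every type sequence pointwise lying in $f$ and $\mathcal{D}$-consistent actually defines a model of $\TO$; with $|\TO|$ fixed this reduces to a local check on successive intervals supplemented by a reachability computation on the finite type graph over $\sub_{\TO}$ to certify that eventualities demanded inside $I_n$ are realised within the period. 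Retaining the candidates that pass gives $\mathcal{F}_{\TO,\mathcal{D}}$ in polynomial time.

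For the equivalences, $(1)\Rightarrow(2)$ is immediate: each $\mathcal{J}\in\mathcal{M}_D$ for $D\in\mathcal{F}_{\TO,\mathcal{D}}$ is a model of $\TO$ by validity of $D$ and a model of $\mathcal{D}$ by pointwise $\mathcal{D}$-consistency of its types. For $(2)\Rightarrow(1)$, assume $\TO,\mathcal{D}\not\models\varkappa(0)$ and take a refuting model $\mathcal{I}$ of $\TO,\mathcal{D}$; by~\cite{ono1980on} we may assume $\mathcal{I}$ has the ultimately periodic shape~\eqref{periodA2}. The construction of $D_{\mathcal{I}}$ described above then produces a valid decoration with $n\leq 2|\TO|+2$, so $D_{\mathcal{I}}\in\mathcal{F}_{\TO,\mathcal{D}}$, and $\mathcal{I}\in\mathcal{M}_{D_{\mathcal{I}}}$ witnesses the failure of (2) at $0$. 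Finally, $(2)\Leftrightarrow(3)$ is Lemma~\ref{lem:eqcompl} applied at $i=0<m_D$ and conjoined over $D\in\mathcal{F}_{\TO,\mathcal{D}}$.

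The main obstacle I expect is making the decoration test both correct and constant-time once $|\TO|$ is fixed: a purely local inspection of pairs $(f(k),f(k+1))$ can miss global obstructions coming from $\Box\Diamond$-axioms whose witnesses must be propagated through several intervals, or from eventualities promised inside $I_n$ that must be realised within the same period. The intended workaround is to exploit the bound $n\leq 2|\TO|+2$ by encoding the pre-decoration as a finite type graph of constant size and then solving the resulting global satisfiability check in one uniform polynomial-time pass; this also lets me verify that every $D_{\mathcal{I}}$ arising from an ultimately periodic model of $\TO,\mathcal{D}$ survives the filter, which is precisely what the $(2)\Rightarrow(1)$ direction requires.
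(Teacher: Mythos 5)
Your proposal follows essentially the same route as the paper's proof: enumerate the polynomially many pre-decorations with $n\leq 2|\TO|+2$, filter out those that are not decorations, obtain the equivalence of (1) and (2) from completeness of the ultimately periodic models of~\cite{ono1980on} via the $D_{\I}$ construction, and derive (2)$\Leftrightarrow$(3) from Lemma~\ref{lem:eqcompl} at $i=0$. The only point where you are less precise is the decoration test, which the paper settles with an explicit polynomially checkable condition --- for each subformula $\Box C$ of $\TO$ and each $\tp\in f(i)$ consistent with $\mathcal{D}$ at $k\in I_i$, one has $\Box C\in\tp$ iff no type omitting $C$ is consistent with $\mathcal{D}$ at any later position of the pre-decoration --- which is the same idea your ``type-graph reachability'' sketch is aiming at.
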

	\begin{proof}
		Models of the form \eqref{periodA2} satisfying $\TO$ and $\mathcal{D}$ are complete in the sense that the following conditions are equivalent for all $\varkappa\in \mathcal{Q}_{p}[\Diamond]$:
		\begin{itemize}
			\item $\TO,\mathcal{D}\models \varkappa(0)$;
			\item $\I,0 \models \varkappa$ for all models $\I$ of $\TO$ and $\mathcal{D}$ of the form \eqref{periodA2}.
		\end{itemize}
		Hence the class of models $\mathcal{M}_{D_{\I}}$ with $\I$ a model of $\TO$ and $\mathcal{D}$ of the form \eqref{periodA2} is also complete. Hence the equivalence of Points~1. to 2. holds if we define $\mathcal{F}_{\TO,\mathcal{D}}$ as the class of decorations $D=(I_{0},\ldots,I_{n},f)$ of $\mathcal{D}$ for $\TO$ with $n\leq 2|\TO| +2$. The equivalence of Points~2. and 3. follows from Lemma~\ref{lem:eqcompl}.
		It remains to show that $\mathcal{F}_{\TO,\mathcal{D}}$ can be constructed in polynomial time. The set of pre-decorations $(I_{0},\ldots,I_{n},f)$ of $\mathcal{D}$ with $n\leq 2|\TO|+2$ can clearly be constructed in polynomila time in $|\mathcal{D}|$. It thus remains to check in polynomial time whether a pre-decoration is a decoration. But such a check is straightforward as a pre-decoration $D=(I_{0},\ldots,I_{n},f)$ is a decoraton if, and only if, the following condition holds: for any subformula $\Box C$ of $\TO$, any $i\leq n$, any $\tp\in f(i)$, and any $k\in f(i)$ with $\tp$ consistent with $\mathcal{D}$ at $k$: $\Box C \in \tp$ iff no $\tp'\in f(i)$ with $C\not\in \tp'$ is consistent with $\mathcal{D}$ at any $k'\in I_{i} \cap [k+1,\infty]$ and no $\tp'\in f(j)$ with $C\not\in \tp'$ and $j>i$ is consistent with $\mathcal{D}$ at any $k'\in I_{j}$.
\end{proof}
	
The set $\mathcal{M}_{\TO,\mathcal{D}}$ of models required for the construction of the algorithms is now defined by setting $\mathcal{M}_{\TO,\mathcal{D}}=
\bigcup_{D\in \mathcal{F}_{\TO,\mathcal{D}}}\mathcal{M}_{D}$.
\end{proof}


\section{Proofs for Section~\ref{sec:full}}

\textbf{Theorem~\ref{thm:fullLTL}.}
{\em $(i)$ $\QBE(\LTL,\mathcal{Q})$ is in $2\ExpTime{}$, for any $\mathcal{Q} \in \{\, \mathcal{Q}[\Diamond], \mathcal{Q}[\nxt, \Diamond], \mathcal{Q}[\Us]\,\}$.
	$(ii)$ $\mathsf{QBE}(\LTL, \mathcal{Q})$ is in $2\ExpSpace{}$, for any $\mathcal{Q} \in \{\, \mathcal{Q}_p[\Diamond], \mathcal{Q}_p[\nxt, \Diamond], \mathcal{Q}_p[\U]\,\}$.
}	
\begin{proof}
Let $E$ be an example set and $\TO$ an $\LTL{}$ ontology.
To show the results for $\QBE(\LTL,\mathcal{Q}[\Us])$ and $\mathsf{QBE}(\LTL, \mathcal{Q}_p[\U])$, it is enough, by Lemma~\ref{lemmain}, to show that the construction of $S$ in Lemma~\ref{th:bool-repr}, representing $\Abox$ from $E$ and $\TO$, can be done in $2\ExpTime{}$. Consider an unlabelled transition system $\mathcal S$ (which can be defined as a transition system of the kind we have with the unary alphabet $\Sigma_1 = \Sigma_2 = \{  \emptyset \}$) with the states $\tp$, where $\tp$ is a type realisable in $\TO, \D$. Given $\mathcal S$ and a set of realisable types $\avec{T}$, we set $\mathcal S_{\avec{T}}$ to be $\mathcal S$ restricted to $\tp$ that are reachable from some $\tp' \in \avec{T}$ and the initial states $\avec{T}$. For given $\avec{T}_1$, $\Gamma$ and $\avec{T}_2$, we have $\avec{T}_1 \to_\Gamma \avec{T}_2$ iff, for each path $\mathfrak s$ in $\mathcal T_{S_{\avec{T}_1}}$, there is a position $p_{\mathfrak s} > 0$ satisfying the following conditions: $(i)$ the set of types at all $p_{\mathfrak s}$ coincides with $\avec{T}_2$; $(ii)$ $A \in \tp$ for every $\tp$ at a position $p \in (0, p_{\mathfrak s})$ for every $\mathfrak s$ iff $A \in \Gamma$, for each $A \in \Sigma^\bot$; $(iii)$ the set of all $\tp$ from $(ii)$ does not intersect with $\avec{T}_2$. Next, we observe that if the positions $p_{\mathfrak s}$ satisfying $(i)$, $(ii)$, $(iii)$ exist, then there exist such positions $p_{\mathfrak s} \leq |\mathcal S|$. (Intuitively, this is because whenever $p_{\mathfrak s} > |\mathcal S|$, there exists a type in $\mathfrak s$ that repeats itself.) Thus, we need to check conditions $(i)$--$(iii)$ in a tree of depth $|\mathcal S|$. This can be done in a branch-by-branch fashion using a (non-deterministic) algorithm working in $\PSpace{}$ in $|\mathcal S|$. It remains to observe that $\mathcal S$ itself can be constructed in $\ExpTime{}$ in $|\TO|$ and that to construct $S$ we need to check $\avec{T}_1 \to_\Gamma \avec{T}_2$ for $O(2^{2^{|\TO|}})$-many pairs $(\avec{T}_1, \avec{T}_2)$.

To obtain the results of the theorem for $\mathcal{Q}[\nxt, \Diamond]$, we construct $S$ (cf.\ Lemma~\ref{th:bool-repr}) as above but using the transition relation $\avec{T}_1 \to_{\Gamma}' \avec{T}_2$ for $\Gamma \in \{ \emptyset, \Sigma^\bot\}$ only (i.e., $\Sigma_2 = \{ \emptyset, \Sigma^\bot\}$ in the definition of a transition system). We set $\avec{T}_1 \to_{\Sigma^\bot}' \avec{T}_2$ if $\avec{T}_1 \to_{\Sigma^\bot} \avec{T}_2$ and $\avec{T}_1 \to_{\emptyset}' \avec{T}_2$ if $\avec{T}_1 \to_{\Gamma} \avec{T}_2$ for $\Gamma \neq \Sigma^\bot$. It is easy to verify that $S$ represents $\TO, \Abox$ for the class of $\mathcal{Q}[\nxt, \Diamond]$ queries. The case $\mathcal{Q}[\Diamond]$ is left to the reader.
\end{proof}


\end{document}